\pdfoutput=1
\documentclass[11pt,a4paper]{article}

\usepackage[T1]{fontenc}
\usepackage[utf8]{inputenc}
\usepackage{lmodern,textcomp}
\usepackage{microtype}
\usepackage[UKenglish]{babel}
\usepackage[margin=1in]{geometry}
\usepackage[mathscr]{eucal}
\usepackage[tbtags,fleqn]{amsmath}
\usepackage{amsthm}
\usepackage{array,enumerate,comment}
\usepackage{etoolbox}
\usepackage{amsmath,amssymb}
\usepackage{graphicx}
\usepackage{tikz}
\usetikzlibrary{arrows.meta,calc,backgrounds}

\usepackage{bm}
\usepackage{mathabx}
\usepackage{booktabs}
\usepackage{thmtools}
\usepackage{thm-restate}

\newcommand{\andrei}[1]{\textcolor{blue}{AD: #1}}
\newcommand{\nop}[1]{}

\newcommand{\R}{\mathbb{R}}

\newcommand{\D}{\mathcal{D}}

\newcommand{\vect}[1]{\mathbf{#1}}
\newcommand{\norm}[1]{\left\lVert#1\right\rVert}
\newcommand{\set}[1]{\left\{#1\right\}}

\newcommand{\defeq}{\stackrel{\text{def}}{=}}

\newcommand{\op}{\mathbin{\mathrm{op}}}
\usepackage[unicode,hidelinks]{hyperref}
\usepackage[capitalise,noabbrev]{cleveref}

\theoremstyle{plain}
\newtheorem{theorem}{Theorem}
\newtheorem{lemma}[theorem]{Lemma}
\newtheorem{corollary}[theorem]{Corollary}

\newtheorem{definition}[theorem]{Definition}
\theoremstyle{definition}
\newtheorem{example}[theorem]{Example}
\theoremstyle{plain}

\title{Numerical Stability of Linear Algebra Operations over Relational Databases}
\author{%
  \href{https://orcid.org/0009-0000-9308-1169}{Andrei Draghici}\\
  University of Zurich\\
  {\small\texttt{andrei.draghici@uzh.ch}}
  \and
  \href{https://orcid.org/0009-0000-6740-8850}{Yuchen He}\\
  University of Zurich\\
  {\small\texttt{yuchen.he@uzh.ch}}
  \and
  \href{https://orcid.org/0000-0002-4682-7068}{Dan Olteanu}\\
  University of Zurich\\
  {\small\texttt{dan.olteanu@uzh.ch}}
}
\date{}

\hypersetup{
  pdftitle={Numerical Stability of Linear Algebra Operations over Relational Databases},
  pdfauthor={Andrei Draghici, Yuchen He, Dan Olteanu},
  pdfkeywords={backward stability, condition number, linear algebra over relational data}
}

\AtEndEnvironment{abstract}{%
  \par\smallskip\noindent
  \textbf{2012 ACM Subject Classification}
  Information systems $\rightarrow$ Database management system engines;
  Mathematics of computing $\rightarrow$ Numerical analysis
  \par\smallskip\noindent
  \textbf{Keywords and phrases}
  backward stability, condition number, linear algebra over relational data
}

\begin{document}

\maketitle

\begin{abstract}
A large body of work in the database literature develops efficient algorithms for linear algebra and machine learning over matrices defined by relational joins, yet the numerical stability of such computations has so far received no attention. This is a practical concern: a join matrix can be much larger than the input database, and the repeated copies of input values it contains compound the floating-point errors incurred by the numerical operations performed over it.

This paper initiates a formal investigation of numerical stability for linear algebra over database joins. We first show that backward stability, the standard yardstick of numerical stability, loses its effectiveness in this setting: join matrices form a structured subspace of the ambient matrix space, so a perturbation explaining a computed result need not correspond to any perturbed input database. This failure already occurs for operations as simple as matrix-vector multiplication.

To overcome this limitation, we introduce projected backward stability, a generalization of backward stability  from the computation of one function to that of a composition of two functions, and establish its connection to classical backward stability. In our database setting, the two functions are the join query and the numerical operation. 

We further introduce the database condition number as the square root of the ratio of maximal to minimal number of copies of input data values into the join matrix, and show that it quantifies how a perturbation of the join matrix is amplified into a perturbation of the input database, independently of the computation used. The database condition number coincides with the classical condition number of the expansion matrix that replicates input values into the join matrix.

Finally, we show that the QR decomposition and the Singular Value Decomposition are projected backward stable and that the FiGaRo algorithm, which pushes these decompositions past $\alpha$-acyclic joins, admits an error bound that can be asymptotically smaller than the size of the join matrix.

\nop{Many modern data analysis pipelines conceptually perform matrix computations over a join matrix defined by the join of multiple input relations. While classical numerical stability is well understood for explicit matrices, stability with respect to the underlying relational inputs remains largely unexplored. We show that classical backward stability with respect to the input relations fails even for simple operations such as matrix-vector multiplication. However, this failure is not due to large numerical errors, but to a structural mismatch. Standard floating-point errors push intermediate computations outside the restrictive subspace of matrices representable by joins. To resolve this, we introduce \emph{projected backward stability}, a novel numerical framework that cleanly absorbs structural violations via orthogonal projection onto the join-representable subspace, isolating the true numerical error. Under this framework, we prove that for $\alpha$-acyclic queries, fundamental matrix computations including QR factorization and Singular Value Decomposition (SVD) are indeed projected backward stable. Furthermore, using this new framework, we prove that existing over-the-join algorithms achieve backward error bounds proportional to the total size of the input relations, rather than the size of the explicitly materialized join matrix, which can be much larger. Finally, we formalize the \emph{join condition number} which is the exact relational counterpart to the classical matrix condition number. We show that converting backward error from the explicit join matrix back to the input relations incurs a structural condition factor driven entirely by tuple multiplicities, completely independent of the chosen numerical algorithm.
}
\end{abstract}

\section{Introduction}
\label{sec:intro}

In this paper we start an investigation into the numerical stability of linear algebra operations over  matrices defined by database joins. A significant body of research in the database literature, e.g.,~\cite{Kumar2015,Schleich:SIGMOD:2016,Olteanu:VLDB-Keynote:2020,Khamis2020,chen2017linear,FIGARO:VLDBJ:2023}, put forward efficient algorithms for computing linear algebra operations and also for training a variety of machine learning models over such matrices, albeit no consideration was given to their numerical stability. The risk of numerical instability is exacerbated by the multiple copying of input data in the join result that compounds the effect of numerical errors that come with such operations.

Numerical stability is not merely a concern for traditional scientific computing; it directly affects the reliability of modern data systems. Database engines increasingly support numerical analytics, statistical estimation, and machine-learning workloads, where relational joins define large matrices over which downstream computations are performed. The need to integrate complex mathematical computations into data management systems has driven a rich line of research exploring the connection between relational algebra and linear algebra. To bridge the semantic gap between these domains, frameworks like LaRa have been proposed as minimalist unified algebras~\cite{LARA-Pablo}. Foundational frameworks like MATLANG have been introduced to formalize the expressive power of matrix operations and their exact relationship to relational query languages~\cite{MATLANG,expressive-MATLANG}. This theoretical bridge yields direct algorithmic benefits, such as mapping fragments of linear algebra to tractable conjunctive queries to enable constant-delay enumeration and efficient updates for matrix computations~\cite{MATLANG-enum}. Building on this unified perspective, Galley applies modern relational query optimization techniques to compiling sparse tensor programs into highly efficient imperative kernels~\cite{Galley-Kyle}, LaraDB demonstrates that relational engines can efficiently execute linear algebra tasks~\cite{LARA-Dan-Suciu}, and the IFAQ~\cite{IFAQ:CGO:2020} and SDQL~\cite{SDQL:OOPSLA:2022} frameworks allow users to specify, optimize, and efficiently execute workloads that mix linear algebra and relational algebra operations. Fast matrix multiplication is integrated into query plans to lower the complexity of query evaluation~\cite{FMM-submodular}.

However, as database systems take on complex numerical workloads, precision becomes a critical challenge. Prior work has shown that even the computation of simple aggregates can suffer from severe floating-point error: various implementations in modern database systems lose precision due to catastrophic cancellation, while floating-point aggregation can become non-reproducible because different execution orders produce different numerical results~\cite{kamat2016closer,mueller2018reproducible}. Computing the marginal probability of a query in large probabilistic databases is known to incur large round-off errors~\cite{KochO08}. Similar issues arise in modern machine-learning software, where mathematically equivalent implementations may have very different numerical-stability properties and generally inaccurate results~\cite{kloberdanz2022deepstability}. This concern has become particularly acute in the modern era of artificial intelligence. Recent investigations into fundamental deep learning operators reveal significant vulnerabilities to finite-precision effects~\cite{chen2026automatednumericalstabilityanalysis}. Notably, the transformer architecture has been demonstrated to suffer from numerical instability~\cite{budzinskiy2026numericalstabilityanalysislarge}, driving the need for rigorous analyses of both forward and backward stability during model training~\cite{kan2026stabilitytransformerslayernormalization}. These examples suggest that formal correctness of exact arithmetic over large data (such as query results) is insufficient in practice: we need formal guarantees controlling how finite-precision errors propagate through the computation. This motivates our study of numerical stability in computations with matrices defined by database queries.


\nop{For instance, Kumar et al introduced a factorized learning approach that pushes the computation of generalized linear models through the join operator, achieving significant speedups compared to explicit materialization of the design matrix. Complementary frameworks, such as the one proposed by Khamis et al~\cite{Khamis2020}, enable learning over relational databases by systematically exploiting join dependencies and functional dependencies. Practical tools have also emerged; Chen et al~\cite{chen2017linear}, for example, presented an R package that automatically rewrites linear algebra operations on the joined data into equivalent operations over the normalized input relations, thereby factorizing several standard machine learning algorithms. 
}

\begin{example}
    The least squares problem $\operatorname{argmin}_\mathbf{x} \|\mathbf{A}\mathbf{x} - \mathbf{b}\|_2^2$ is central to polynomial regression and a wide range of machine learning tasks. In our setting, the matrix $\mathbf{A}$ is defined by a database query: its rows represent tuples in the query result and its columns represent features. The vector $\mathbf{b}$ is also a column in the query result. The solution $\mathbf{x}$ gives the fitted model parameters~\cite{Schleich:SIGMOD:2016}. 
    Since practical algorithms operate in finite precision arithmetic, the computed result is generally not the exact solution of the mathematical problem~\cite{Higham2002, Demmel97}. \nop{A fundamental question is therefore not only whether an algorithm returns a small residual, but also whether the computation itself is numerically trustworthy.} A key question is how far off the computed result is from the mathematical result.
\end{example}


\begin{figure}[t]
    \centering
    \begin{tikzpicture}[
      dot/.style={circle,fill,inner sep=2pt},
      lab/.style={font=\small},
      >={Stealth[length=3mm]}
    ]
    \node[font=\bfseries\large] at (1.2,2.8) {Input space};
    \node[font=\bfseries\large] at (6.25,2.8) {Output space};
    
    \coordinate (x) at (1.0,2.00);
    \coordinate (xp) at (1.0, 0.2);
    \coordinate (y) at (5.55, 2.00);
    \coordinate (yh) at (5.55, 0.2);
    
    \node[dot] at (x) {};
    \node[dot] at (xp) {};
    \node[dot] at (y) {};
    \node[dot] at (yh) {};
    
    \node[lab,left=1pt] at (x) {$x$};
    \node[lab,left=2pt] at (xp) {$x+\Delta x$};
    \node[lab,right=2pt] at (y) {$y=f(x)$};
    \node[lab,right=1pt] at (yh) {$\hat y=f(x+\Delta x)$};
    
    \draw[->] (x) --  node[midway, above, sloped, fill=white, inner sep=1pt, font=\small] {$f$} (y);
    \draw[->] (xp) -- node[midway, above, sloped, fill=white, inner sep=1pt, font=\small] {$f$} (yh);
    \draw[->,dashed,line width=1pt] (x) -- node[midway, above, sloped, fill=white, inner sep=1pt, font=\small] {computation of $f$} 
    (yh);

    \draw[dotted, line width=0.9pt] (x) --
    node[left, align=center] {backward\\perturbation}
    (xp);
    \draw[dotted, line width=0.9pt] (y) -- node[right] {$\text{forward error}$} (yh);
    
    
    \end{tikzpicture}
    \caption{Illustration of backward stability for a computation of a function $f$. The computation output $\hat{y}$  equals the exact value of $f$ on a slightly perturbed input $x + \Delta x$.
    The smallest possible perturbation $|\Delta x|$ is the backward error, while $|\hat{y}-y|$ is the forward error.}
    \label{fig:forward-backward}
\end{figure}

Backward stability is the main measure of numerical stability. A computation of a function $f$ is {\em backward stable} if its output $\hat{y}$ equals the exact value of $f(x+\Delta x)$ on a slightly perturbed input $x + \Delta x$~\cite{Higham2002, Demmel97}, as illustrated in Fig.~\ref{fig:forward-backward}.
For instance, $f$ can be the composition of the join query defining the matrix and the numerical operation computed over the join matrix, whereas $x$ and $\Delta x$ are databases over the same schema and join keys for every join attribute and that may differ in the values for the data attributes. 

\nop{
\begin{example}
Baseline implementations of least-squares regression use the classical normal equation, $\mathbf{A}^\top \mathbf{A}\mathbf{x} = \mathbf{A}^\top \mathbf{b}$, which requires explicitly forming the normal matrix  $\mathbf{A}^\top \mathbf{A}$~\cite{dahiya2018sketching}. Although mathematically equivalent to the original least squares problem under standard rank assumptions, this transformation can significantly worsen numerical behavior and is not, in general, a backward stable way to solve least squares problems~\cite{Bjorck1996, Higham2002, GolubVanLoan2013}. More robust alternatives avoid forming $\mathbf{A}^\top \mathbf{A}$ explicitly: QR decomposition yields a backward stable method for least squares \cite{Bjorck1996, Demmel97}, and singular value decomposition (SVD) provides even stronger guarantees, particularly in ill-conditioned or rank-deficient settings \cite{Bjorck1996}. \nop{However, existing analyses typically bound stability in terms of the dimensions or condition number of the explicitly formed join matrix. }
\end{example}
 }

To our knowledge, {\em this paper is the first to formally analyze the numerical stability of operations over matrices defined by join queries directly with respect to perturbations to the input database.} 
We consider matrices defined by natural join queries. Such join matrices can be much larger than the input database as they may have many copies of input values. These copies compound the effect of computational errors that come with numerical operations. We start with two simple operations, the Frobenius norm and matrix-vector multiplication, and then move to the QR decomposition and Singular Value Decomposition of the join matrix. Our setting calls for a refinement of Fig.~\ref{fig:forward-backward}, as we now have a composite function $g \circ f$, where $f$ denotes the join query and $g$ denotes the subsequent matrix operation. The numerical stability of composed operations has been very recently recognized as an important challenge across disciplines: the numerical analysis community has investigated conditions under which forward stable algorithms compose stably~\cite{10.1093/imanum/drad026}, while the programming languages (PL) community has developed formal methods to check programs for numerical instability~\cite{10.1145/3729394} and to statically derive backward error bounds for composed operations~\cite{10.1145/3729324}. However, our database setting introduces a special type of composition that prior work does not address. Because the intermediate output (the join matrix) is structurally constrained to a specific subspace, we show that the classical notion of stability becomes insufficient. In our setting, we address two fundamental questions: 
\begin{enumerate}
    \item Does the backward stability of the join and numerical operation computations translate to the backward stability of a computation of their composition?
    \item Can tighter error bounds be derived based on the size of the input database and the query structure rather than on the large size of the join matrix?
\end{enumerate}

Our answer to the first question is that, even though (i) any join computation is trivially backward stable as it does not involve arithmetic computations on the data and (ii) there are backward stable algorithms for matrix multiplication, QR, and SVD~\cite{Higham2002}, {\em there may be no backward stable computation of the composition of the join and the numerical operation}. This is because a join matrix $\mathbf{A}$ belongs to a specific subspace of structured matrices, rather than to the full ambient matrix space. Consequently, a small perturbation $\Delta\mathbf{A}$ to $\mathbf{A}$ that explains the computed result may not lie in the join-induced subspace. Thus, there may not exist a perturbed input database for which the join query precisely defines the matrix $\mathbf{A} + \Delta\mathbf{A}$, i.e., the corresponding perturbed input to $g$ may not lie in the range of the join query $f$. \nop{Consequently, there may not exist a perturbed input to $f$ whose image under $f$ coincides with the perturbed input to $g$.}

\begin{figure}[t]
    \centering
    \begin{tikzpicture}[
      dot/.style={circle,fill,inner sep=2pt},
      lab/.style={font=\small},
      >={Stealth[length=3mm]}
    ]
    \node[font=\bfseries\small] at (-0.9, 2.5) {Input space $\mathcal{X}$};
    \node[font=\bfseries\small] at (4.5, 2.5) {Intermediate space $\mathcal{Y}$};
    \node[font=\bfseries\small] at (8.6, 2.5) {Output space $\mathcal{Z}$};
    
    \coordinate (x) at (-0.5,1.65);
    \coordinate (xp) at (-0.5, 0);
    \coordinate (y) at (3.5, 1.65);
    \coordinate (y0t) at (3.5, 0);
    \coordinate (yt) at (3.5, -1);
    \coordinate (z) at (8.6, 1.65);
    \coordinate (zh) at (8.6, -1);

    \node[dot] at (x) {};
    \node[dot] at (xp) {};
    \node[dot] at (y) {};
    \node[dot] at (y0t) {};
    \node[dot] at (yt) {};
    \node[dot] at (z) {};
    \node[dot] at (zh) {};

    \node[lab,left=1pt] at (x) {$x$};
    \node[lab,left=2pt] at (xp) {$x+\Delta x$};
    \node[lab,above=2pt] at (y) {$y=f(x)$};
    \node[lab,anchor=north east]at (y0t) {$\tilde y_0=f(x+\Delta x)\in\mathcal{Y}_0$};
    \node[lab,below=1pt] at (yt) {$\tilde y=\tilde y_0+\Delta y_0^\perp\in\mathcal{Y}$};
    \node[lab,above=1pt] at (z) {$z=g(y)=g(f(x))$};
    \node[lab,below=1pt] at (zh) {$\hat{z}=g(\tilde{y})=g(f(x+\Delta x)+\Delta y_0^\perp)$};

    \draw[->] (x) --  node[midway, above, sloped, fill=white, inner sep=1pt, font=\small] {$f$} (y);
    \draw[->] (y) --  node[midway, above, sloped, fill=white, inner sep=1pt, font=\small] {$g$} (z);
    \draw[->] (xp) -- node[midway, above, sloped, fill=white, inner sep=1pt, font=\small] {$f$} (y0t);
    \draw[->] (yt) -- node[midway, above, sloped, fill=white, inner sep=1pt, font=\small] {$g$} (zh);
    \draw[->,dashed,line width=1pt] (x) -- node[midway, above, sloped, fill=white, inner sep=1pt, font=\small] {computation of $g\circ f$} 
    (zh);

    \draw[dotted, line width=0.9pt] (x) --
    node[left, align=center] {projected\\backward\\perturbation}
    (xp);    \draw[dotted, line width=0.9pt] (y0t) -- (yt);
    \draw[dotted, line width=0.9pt] (z) -- (zh);

    
    \end{tikzpicture}
    \caption{Illustration of projected backward stability for an algorithm that implements the composition $g \circ f$ of two functions, where the first function is $f: \mathcal{X}\mapsto \mathcal{Y}_0$ and the second function $g: \mathcal{Y}\mapsto \mathcal{Z}$ where $\mathcal{Y}\supseteq\mathcal{Y}_0$. The input perturbation $\Delta x$ is small relative to $x$ and the projection error $\Delta y_0^\perp$ is also small relative to $f(x+\Delta x)$.}
    \label{fig:proj-backward}
\end{figure}

To resolve the domain mismatch between the full matrix space and the join-induced subspace, we introduce the notion of \emph{projected backward stability}, illustrated in Fig.~\ref{fig:proj-backward}. Projected backward stability generalizes classical backward stability to the composition $g \circ f$ of functions $f$ and $g$, where $f$ restricts its outputs to a structured subspace $\mathcal{Y}_0$. 
As in general it is not the case that the perturbed input $\tilde{y}$ to $g$ lies in $\mathcal{Y}_0$, we decompose it into $\tilde{y}=\tilde{y}_0+\Delta y_0^\perp$, where $\tilde{y}_0$ is the projection of $\tilde{y}$ onto $\mathcal{Y}_0$, so that $\tilde{y}_0$ is the output of $f$ for some perturbed input $x+\Delta x$, and $\Delta y_0^\perp$ is a small perturbation orthogonal to the subspace $\mathcal{Y}_0$. That is, $\tilde{y}_0=P_{\mathcal{Y}_0}(\tilde{y})=\underset{\tilde{y}_0\in\mathcal{Y}_0}{\operatorname{argmin}}
\,\left\|\tilde{y}-\tilde{y}_0\right\|_{\mathrm{F}}$, and $\Delta y_0^\perp=\tilde{y}-\tilde{y}_0$. This also means that $\tilde{y}_0$ is the closest to $\tilde{y}$ according to the Frobenius norm $\|\cdot\|_F$.
The projected backward perturbation $\Delta x$ is the difference between the true input $x$ to $f$ and a nearby input $x+\Delta x$ whose image under $f$ matches this projection. 

We argue that projected backward stability is the appropriate numerical stability concept in our setting: it accounts for the (join dependency) constraints while still permitting the reconstruction of a perturbed input database close to the original, in direct analogy to classical backward stability. In particular: $f$ is the join query; $\mathcal{X}$ is the set of databases; $\mathcal{Y}_0$ is the set of join matrices; $g$ is the numerical operation, e.g., matrix-vector multiplication or SVD; and $\mathcal{Y}$ is the set of all matrices. Then, the perturbation $\Delta y_0^\perp$ is the minimal change to a perturbed matrix to turn it into a join matrix for a given join query. Finally, the projected backward perturbation $\Delta x$ is the change to the input database that explains mathematically the result of $g$'s computation. The computation of $g\circ f$ is projected backward stable with respect to the input database if $\Delta x$ is small relative to $x$ and $\Delta y_0^\perp$ is small relative to $\tilde{y}_0$. We show that there is a backward stable computation for $g$ with respect to its input matrix if and only if there is a projected backward stable computation for $g\circ f$ with respect to the input database for every join query $f$. We can also derive the backward error from the projected backward error and vice-versa.

\nop{we study the projected backward stability of operations with respect to the input database $\mathbf{D}$. Consider the following setup: $f$ denotes a query that takes in the input database and outputs a data matrix $\mathbf{A}$, and $g$ denotes a matrix operation. The computation of $f\circ g$ is considered projected backward stable with respect to the input database, if the computed output of $g$ corresponds to an input matrix that can be expressed as $\tilde{\mathbf{A}}+\Delta\mathbf{A}^\perp$, where $\tilde{\mathbf{A}}$ is a matrix in the subspace defined by the query, and a small $\Delta{\mathbf{A}}^\perp$ orthogonal to that subspace (which implies $\tilde{\mathbf{A}}$ is the projection of $\tilde{\mathbf{A}}+\Delta\mathbf{A}$ on that subspace), and $\tilde{\mathbf{A}}$ is the exact output matrix of the same query on a nearby input database numerically close to $\mathbf{D}$.}

Towards an answer to the second question, we show how to bound the projected backward error using a new parameter that we call the \emph{database condition number}. This is a natural analogue to the fundamental notion of condition number\footnote{The condition number is defined as the value of the asymptotic worst-case relative change in output for a relative change in input. A problem with a low condition number (close to 1) is said to be well-conditioned, while a problem with a high condition number is said to be ill-conditioned. A problem is ill-conditioned if a small change in the input leads to a large change in the output. A (projected) backward stable computation can be expected to accurately solve well-conditioned problems.} of matrices, which measures how much a problem's output changes when the input is slightly perturbed~\cite{Higham2002}. The (database) condition number is a property of the problem and not of a particular computation. The database condition number of a relation is defined as the square root of the ratio of the maximal and minimal counts  of an input data value in the join matrix. 
We show that the perturbation in the join matrix yields a perturbation in the input database that is amplified by the database condition number, completely independent from the choice of the computation. This is in addition to the computation-dependent numerical error. In particular, an error upper bound $c$ in the join matrix translates to an error upper bound $\kappa_{\text{join}}\cdot c$ in the input database, where $\kappa_{\text{join}}$ denotes the database condition number. There is a natural relationship of the database condition number and the classical condition number: The former is precisely the condition number of the expansion matrix that transforms the input data columns into the corresponding columns in the join matrix. This transformation creates a number of copies of the input values in the join matrix as dictated by the join query. The rows in this expansion matrix thus select and replicate entries from the input relation.

For the QR and SVD decompositions of the join matrix, we answer the second question in the positive: We give an upper bound on the error of the FiGaRo algorithm~\cite{FIGARO:VLDBJ:2023} that depends on the size of the input database and not on the size of the join matrix.

\medskip

To sum up, the main contributions of this paper are as follows:
\begin{enumerate}
    \item[1.] We initiate a formal investigation on numerical stability of linear algebra operations over database joins. We show that in this setting, the widely accepted yardstick of numerical stability, namely backward stability, loses its effectiveness already for simple operations like the matrix-vector multiplication (Sec.~\ref{sec:bs}). 
    \item[2.] We introduce a natural generalization of backward stability called projected backward stability (Sec.~\ref{sec:projected-bs}) and we show its connection to the classical backward stability (Sec.~\ref{sec:pbs-bs-connection}).
    \item[3.] We introduce the notion of database condition number that amplifies the error bound of the computation over the join matrix into the error in the input database. We also exhibit the connection between this number and the classical condition number (Sec.~\ref{sec:join-condition-number}).
    \item[4.] We show that the QR and SVD decompositions of the join matrix are projected backward stable in our database setting and their backward error bounds depend on the number of computational steps performed over the join matrix. We further show that the \textsc{FiGaRo} algorithm~\cite{FIGARO:VLDBJ:2023}, which pushes the computation of the upper triangular matrix of the QR decomposition past $\alpha$-acyclic joins, can have an asymptotically tighter error bound (Sec.~\ref{sec:qr-svd}). 
    
    \nop{
    \item[3.] We show that for component-wise backward stable operations, such as computation of the Frobenius norm, the resulting error bound on the input relations is independent of the join structure. In contrast, for QR decomposition (which guarantees only column-normwise backward stability) and SVD (which guarantees only Frobenius-normwise backward stability), the backward error bound with respect to the input relations depends on the join condition number of the database.
    \item[3.]  [pending proof] We propose a numerically stable method to explicitly express the matrix $Q$ induced by the FIGARO algorithm~\cite{FIGARO:VLDBJ:2023} with orthogonality guarantee, using only tuple counts that do not depend on the data columns of any input relations. With this $Q$ and projected backward stability, the perturbed input relations corresponding to the computed QR or SVD factors can be explicitly reconstructed and remain close to the original input relations.
    }
\end{enumerate}

\section{Preliminaries}
\label{sec:preliminaries}

In this section, we recall standard relational algebra and linear algebra terminology and matrices defined by database joins, as used in prior work~\cite{FIGARO:VLDBJ:2023}.

\smallskip

\noindent{\bf Matrices.} We denote matrices by bold upper-case letters and column vectors by bold lower-case letters. For a real number $x \in \mathbb{R}$, $|x|$ denotes its absolute value. By $[n]$ we denote the set $\{1,\dots,n\}$. For a matrix $\vect{A}\in\mathbb{R}^{m \times n}$ and indices $i\in[m]$, $j\in[n]$, we use $\vect{A}[i,j]$, $\vect{A}[i,*]$, and $\vect{A}[*,j]$ to denote its entry at row $i$ and column $j$, its $i$-th row, and respectively its $j$-th column. The Frobenius norm of $\vect{A}$ is $\norm{\vect{A}}_F \defeq \sqrt{\sum_{i=1}^{m}\sum_{j=1}^{n} \vect{A}[i,j]^2}\;$. For a vector $\vect{a}\in\mathbb{R}^{m}$, $\vect a[i]$ denotes its $i$-th component and its Euclidean norm is $\norm{\vect{a}}_2 \defeq \sqrt{\sum_{i=1}^m \vect{a}[i]^2}\;$. 
For matrices $\vect A,\vect B\in\R^{m\times n}$, $|\vect A|\leq |\vect B|$ means that $|\vect A[i,j]|\leq |\vect B[i,j]|$ for all $i\in[m],j\in[n]$.  

\smallskip

\noindent{\bf Databases and Queries}
A database consists of a set of relations $S_1,\dots,S_r$. Each relation $S_i$ has schema $(\mathbf{X}_i, \mathbf{Y}_i)$, where $\mathbf{X}_i$ is the tuple of its join attributes and $\mathbf{Y}_i$ is the tuple of its data attributes. Each data attribute takes values in $\R$. For relations $S_i$ and $S_j$, we write $\mathbf{X}_{ij}$ for the tuple of join attributes shared by both relations. 
We consider full conjunctive (or join) queries expressed as $Q = S_1(\mathbf{X}_1, \mathbf{Y}_1)\bowtie\dots\bowtie S_r(\mathbf{X}_r, \mathbf{Y}_r)$, where two relations $S_i$ and $S_j$ equi-join on $\mathbf{X}_{ij}\neq\emptyset$. A query is $\alpha$-acyclic if and only if it admits a join tree~\cite{DBfoundations-book}. \nop{A database is fully reduced with respect to $Q$ if every tuple of an input relation contributes to at least one tuple in the result of $Q$; a database can be fully reduced in two passes over the database for any $\alpha$-cyclic query~\cite{DBfoundations-book}.}

\smallskip

\noindent{\bf From Databases to Matrices}
A relation $S_i$ with schema $(\mathbf{X}_i,\mathbf{Y}_i)$ defines a row-labeled, column-labeled real-valued matrix $\mathbf{S}_i$, where the rows of $\mathbf{S}_i$ are indexed by the tuples of $S_i$ and the columns are indexed by the attributes in $\mathbf{Y}_i$.
For every tuple $s_i=(\mathbf{x}_i,\mathbf{y}_i)\in S_i$, \nop{we define $\mathbf{S}_i[s_i] = \mathbf{y}_i$. Hence} the join keys $\mathbf{x}_i$ are not entries of $\mathbf{S}_i$; they appear only in the row label. Two tuples with the same join key but different data are distinct rows. We denote by $\vect S_i[*,\vect Y]$ the matrix formed by the columns in $\vect Y$ of $\vect S_i$.
A block-diagonal matrix $\vect A_{\text{block}}$ with blocks $\vect A_1,\ldots,\vect A_b$ is compactly denoted by the tuple $(\vect A_1,\ldots,\vect A_b)$. The matrix $\vect D = (\vect S_1,\ldots,\vect S_r)$ defined by the database $\mathcal{D}$ is block-diagonal, with the blocks $\vect S_1,\ldots,\vect S_r$. The matrix defined by the result of a join query over a database is called the {\em join matrix}, denoted by $\mathbf{A}$. We fix an arbitrary order of the row labels of a matrix; row permutations do not affect the algebraic quantities considered in this paper.

\nop{For every tuple $s_i=(\mathbf{x}_i,\mathbf{y}_i)\in S_i$, we define $\mathbf{S}_i[s_i,*] = \mathbf{y}_i$.} 

\nop{
For every tuple $s_i=(\mathbf{x}_i,\mathbf{y}_i)\in S_i$, we define $\mathbf{S}_i[s_i] = \mathbf{y}_i$. Hence the join keys $\mathbf{x}_i$ are not entries of $\mathbf{S}_i$; they appear only in the row identifiers. Two tuples with the same join key but different data are distinct rows.
}



Given a relation $S$, we denote by $\Delta S$ a relation with the same schema and the same join keys, but with possibly different data values. The sum $\vect S+\Delta \vect S$ is the entry-wise addition of the data values. For a database $\D=(S_1,\dots,S_r)$, we denote $\Delta \D$ by $(\Delta S_1,\dots,\Delta S_r)$ and $\vect D+\Delta \vect D$ by $(\vect S_1+\Delta \vect S_1,\dots,\vect S_r+\Delta \vect S_r)$.
The \emph{join space} of $\D$ is the set of all join matrices corresponding to a perturbed database $\D +\Delta \D$ of $\D$. Such perturbed databases induce a join space that is a subspace of the matrices in $\R^{m\times n}$ (Lemma~\ref{lemma:join_consistent_subspace}). \nop{The Frobenius norm of $\vect D$ is defined as: $\|\vect D\|_F=\sqrt{\sum_{i=1}^r\|\mathbf{S}_i\|_F^2}$.}

\begin{example}
Consider the join $Q
=
S_1(Y_1,X_{12})
\bowtie
S_2(X_{12},Y_2,X_{23})
\bowtie
S_3(X_{23},Y_3)$ and the input relations and the join matrix shown below:
{\normalsize
\setlength{\arraycolsep}{3pt}
\renewcommand{\arraystretch}{1.1}
\[ \begin{array}{@{}c@{\hspace{0.5em}}c@{\hspace{0.5em}}c@{\hspace{3.7em}}c@{}}
\begin{array}{c}
S_1 \\[2pt]
\begin{array}{c|c}
X_{12} & Y_1 \\
\hline
k_{11} & a_1 \\
k_{12} & a_2 \\
\end{array}
\end{array}
&
\begin{array}{c}
S_2 \\[2pt]
\begin{array}{c|c|c}
X_{12} & X_{23} & Y_2 \\
\hline
k_{11} & k_{21} & b_1 \\
k_{12} & k_{21} & b_2 \\
\end{array}
\end{array}
&
\begin{array}{c}
S_3 \\[2pt]
\begin{array}{c|c}
X_{23} & Y_3 \\
\hline
k_{21} & c_1 \\
\end{array}
\end{array}
&
\begin{array}{c}
\mathbf{A} \\[2pt]
\begin{array}{c|ccc}
\text{row label in query result} & Y_1 & Y_2 & Y_3 \\
\hline
(k_{11},k_{21},a_1,b_1,c_1) & a_1 & b_1 & c_1 \\
(k_{12},k_{21},a_2,b_2,c_1) & a_2 & b_2 & c_1 
\end{array}
\end{array}
\end{array} \]
}

The keys of the join attributes $X_{12}$ and $X_{23}$ are only retained in the row labels of $\mathbf{A}$; the actual matrix entries are values for the data attributes $Y_1,Y_2,Y_3$.
\end{example}

\begin{definition}[Our DB Setting] \label{def:eval-setting}

\begin{itemize}
    \item The input database $\mathcal{D}=(S_1,\dots,S_r)$ defines the row-labeled, column-labeled matrices $\vect{D} = (\vect{S}_1, \dots, \vect{S}_r)$. We denote by $d$ the total number of rows in  $\vect{D}$ and by $\ell$ ($n$) the total number of values (columns) in $\vect{S}_1,\ldots,\vect S_r$.
    \item A perturbation to $\vect{D} = (\vect{S}_1, \dots, \vect{S}_r)$ is denoted by $\Delta\vect{D} = (\Delta\vect{S}_1, \dots, \Delta\vect{S}_r)$. 
    \item A bound $|\Delta\vect{D}| \leq c\cdot |\vect{D}|$ is a shorthand notation for $|\Delta\vect{S}_i|\leq c\cdot |\vect{S}_i|$ for $i\in[r]$. 
    \item The join query is $Q = S_1 \bowtie \dots \bowtie S_r$ and has no self-joins.
    \item The query result $Q(\mathcal{D})$ defines the join matrix $\vect{A} \in \mathbb{R}^{m\times n}$ that has $m$ rows and $n$ columns.
\end{itemize}
\end{definition}

\smallskip

\noindent{\bf Numerical Stability.}
In numerical computations, it is essential to quantify the effects of rounding errors that arise due to finite-precision arithmetic. Let $\operatorname{fl}(x)$ denote the floating-point number obtained by rounding the real number $x$ to the nearest floating-point number representable by the computer. The \emph{unit round-off}, denoted by $u$, measures the relative error incurred in representing a real number in floating-point arithmetic. Then, 
$\operatorname{fl}(x \op y) = (x \op y) (1 + \delta), \text{ where } |\delta| \leq u, \text{ for some } \delta\in\R,\op\in\set{+,-,\times,/}$.
A convenient and common notation for bounding the accumulation of rounding errors is the quantity $\gamma_n \defeq \frac{nu}{1 - nu}$. When $nu < 1$, it follows from standard rounding models that a product of $n$ floating-point factors satisfies
$\operatorname{fl}(x_1 \cdots x_n) = (x_1 \cdots x_n)(1 + \theta_{n-1}), \text{ for } |\theta_{n-1}| \le \gamma_{n-1}$.
Thus $\gamma_n$ gives a compact way to express worst-case relative error growth in sequences of operations, especially products. It is often useful to allow in such bounds a small positive integer $c$ constant independent of the problem size as a multiple of $nu$: $\tilde{\gamma}_n \defeq \frac{cnu}{1 - cnu}$. This notation $\tilde{\gamma}_n$ enables concise statements of error bounds when several rounding terms combine. Both $\gamma_n$ and $\tilde{\gamma}_n$ are treated as quantities of order $nu$ whenever $nu \ll 1$ and $cnu \ll 1$~\cite{Higham2002}.

A numerical algorithm is \emph{backward stable} if the computed solution can be interpreted as the exact solution to a slightly perturbed input. 
\begin{definition}[Backward Stability~\cite{Higham2002}]
    Let $C$ be a numerical computation of a function $f$: if $y = f(x)$ for some $x$ then $\hat{y}=C(x)$. We say that $C$ is \emph{backward stable} if and only if there exists $\Delta x$ such that $\hat{y} = f(x+\Delta x)$ and $\Delta x$ is small relative to $x$. The \emph{backward error} is defined as $\min_{\Delta x: f(x+\Delta x)=\hat{y}}|\Delta x|$. The computation $C$ is \emph{$\epsilon$-backward stable} if for any input $x$, its backward error satisfies $|\Delta x|\leq\epsilon\cdot |x|$.
\end{definition}
We use the convention that computed quantities wear a hat: $\hat{y}$ denotes the computed approximation to $y$. A numerical computation is \emph{forward stable} if the computed solution $\hat{y}$ is close to the exact output: $f(x) = \hat{y} + \Delta y$, where $\Delta y$  is small relative to $y$. A computation of $f$ is \emph{mixed forward-backward stable} if its output is close to the exact output corresponding to a nearby admissible input: $f(x+\Delta x) = \hat{y} + \Delta y$.

\noindent{\bf Perturbation Bounds.} For a matrix $\mathbf A$ and a perturbation $\Delta\mathbf A$, we distinguish three types of relative perturbation bounds: (1) Entry-wise: $|\Delta\mathbf A[i,j]|\leq\epsilon\cdot|\mathbf A[i,j]|$; (2) Column-wise: $\|\Delta\mathbf A[*,j]\|_2\leq\epsilon\cdot\|\mathbf A[*,j]\|_2$; and (3) Frobenius-norm-wise:  $\|\Delta\mathbf A\|_F\leq\epsilon\cdot\|\mathbf A\|_F$. These bounds are progressively weaker: an entry-wise $\epsilon$-perturbation is also a column-wise $\epsilon$-perturbation, which is also a Frobenius-norm-wise $\epsilon$-perturbation. Entry-wise backward stability implies column-wise backward stability, which implies Frobenius-norm-wise backward stability.

\noindent{\bf Condition Number of a Matrix~\cite{Higham2002}.} Consider a rank-$k$ matrix with its $k$ non-zero singular values $\sigma_1\geq\cdots\geq\sigma_k>0$. Its spectral-norm condition number, denoted hereafter as the condition number, is $\kappa(\mathbf A)\defeq\sigma_1/\sigma_k$.
\section{Backward Stability with Respect to the Input Database}
\label{sec:bs}

We initiate the investigation of backward stability of simple numerical operations over join matrices {\em with respect to the input database}. Whereas the stability of many numerical operations with respect to their input matrices is well understood~\cite{Higham2002}, our setting, which investigates the stability of the composition of the numerical operation with the join, has not been addressed so far. In this section, we exemplify our setting for two simple numerical operations: the Frobenius norm and matrix-vector multiplication.

Computing the Frobenius norm over an arbitrary matrix is backward stable~\cite{Higham2002} and remains backward stable also with respect to the input database.

\begin{restatable}{proposition}{frobeniusnorminputrelationsnaive}
\label{proposition:frobenius_norm_input_relations_naive}
Consider the setting in Def.~\ref{def:eval-setting}. The computation $C$ of an operation $f$, which first materializes the join matrix and then computes its Frobenius norm, is backward stable with respect to the input database. That is, there exists a perturbation $\Delta \mathbf{D}$ such that, if $\hat z=C(\mathbf{D})$, then $\hat z = f(\mathbf{D}+\Delta \mathbf{D})$. Moreover, $|\Delta \vect D|\le \tilde\gamma_{mn}\cdot|\vect D|$.
\end{restatable}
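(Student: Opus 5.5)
The plan is to build the perturbation explicitly, using the fact that the Frobenius norm depends only on the squared entries of the join matrix. Each entry of $\vect A$ is a copy of some input value $v=\vect S_i[t,Y]$. We write $c_v\ge 1$ for the number of rows of $\vect A$ in which this value is copied, so that $c_v\le m$. Then
\[
\norm{\vect A}_F^2=\sum_{i=1}^r\sum_{v\in \vect S_i} c_v\, v^2 .
\]
Materializing the join involves no arithmetic on data values, so $\vect A$ is computed exactly. The only rounding comes from evaluating $\sum_{k} a_k^2$ over the $mn$ entries and then taking a square root.

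Step one is the classical backward error analysis of the recursive sum of squares. The computed value is
\[
\hat z=\Bigl(\sum_{k=1}^{mn} a_k^2(1+\theta^{(k)})\Bigr)^{1/2}(1+\delta),
\]
with $|\theta^{(k)}|\le\gamma_{mn+1}$ (one multiplication plus at most $mn-1$ additions) and $|\delta|\le u$. We fold the factor $(1+\delta)^2$ into every term, which gives
\[
\hat z^2=\sum_k a_k^2(1+\eta_k),\qquad |\eta_k|\le\tilde\gamma_{mn}.
\]
In other words, $\hat z$ is the exact Frobenius norm of a matrix whose entries are $a_k\sqrt{1+\eta_k}$. This perturbed matrix, however, generally does not lie in the join space: different copies of the same input value receive different factors $\eta_k$. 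This is the main obstacle.

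Step two resolves the obstacle by aggregating the factors per input value. Since the norm only needs the weighted sum of squares, we do not perturb copies individually. For each input value $v$ with copy positions $K_v$, we set
\[
1+\eta_v \defeq \frac{1}{c_v}\sum_{k\in K_v}(1+\eta_k),
\]
which is an average of the factors on its copies and therefore also satisfies $|\eta_v|\le\tilde\gamma_{mn}$. We then define the perturbed value $v+\Delta v\defeq v\sqrt{1+\eta_v}$. Because $c_v$ depends only on the join keys, which are unchanged under the perturbation, we obtain
\[
f(\vect D+\Delta\vect D)^2=\sum_v c_v v^2(1+\eta_v)=\sum_k a_k^2(1+\eta_k)=\hat z^2 .
\]
Taking nonnegative square roots gives $\hat z=f(\vect D+\Delta\vect D)$.

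Step three bounds the perturbation entrywise. We have $|\sqrt{1+\eta}-1|\le|\eta|$ for $|\eta|\le1$, so $|\Delta v|\le\tilde\gamma_{mn}|v|$. Values that appear in no join tuple ($c_v=0$) are left unperturbed. This yields $|\Delta\vect D|\le\tilde\gamma_{mn}|\vect D|$ in the shorthand of Def.~\ref{def:eval-setting}.

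The only delicate points are the averaging trick in step two and tracking the constants inside $\tilde\gamma$, in particular absorbing the square-root rounding. I expect both to be routine once the per-value aggregation is in place.
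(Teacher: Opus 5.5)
Your proof is correct, but it follows a different route from the paper's. The paper never examines how the rounding errors are spread over the copies of a value. It uses only the forward bound $|\hat z-z|\le\tilde\gamma_{mn}z$, writes $\hat z=(1+\theta)z$, and sets $\Delta\mathbf D=\theta\mathbf D$. The join matrix of $(1+\theta)\mathbf D$ is $(1+\theta)\mathbf A$ and the norm is positively homogeneous, so $f(\mathbf D+\Delta\mathbf D)=(1+\theta)z=\hat z$; the case $z=0$ is treated separately. The obstacle you identify, that different copies of one value carry different factors, therefore never arises there, because the perturbation is a single global scalar chosen directly in the input space. You instead start from the componentwise backward error on the materialized matrix. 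You then repair it by averaging the factors $1+\eta_k$ over the copy positions $K_v$ of each input value. This is the same mean-over-copies mechanism behind the paper's projection onto the join space (Lemma~\ref{lemma:projection_is_taking_mean}), applied here to squared entries so that the norm is matched exactly and no orthogonal residual is left. Your version needs more bookkeeping, but it gives a perturbation that reflects where rounding actually occurred. It also leaves dangling tuples untouched, needs no case split for $z=0$, and relies on $f^2$ being a copy-count-weighted sum $\sum_v c_v v^2$ rather than on homogeneity. The paper's one-line scaling, by contrast, only works because $f$ is a positively homogeneous scalar function of $\mathbf D$. Both routes give the same $\tilde\gamma_{mn}$ bound; your count $\gamma_{mn+1}$ slightly overestimates the at most $mn$ rounding factors per term, which is harmless.
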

\begin{proof}
Let $z=f(\mathbf{D})=\|\mathbf A\|_F$, where the join matrix $\vect A$ is fully materialized. 
If $z=0$, then every entry of $\mathbf A$ is zero, and we choose $\Delta\mathbf D=\mathbf 0$. Suppose now that $z>0$. The computation $C$ of $f$ needs one pass over $\mathbf A$, so it does $mn$ square operations followed by the square root of their sum. Thus the computation requires $O(mn)$ arithmetic operations. Since all terms in the sum are nonnegative and the computation of the square root is forward stable, the forward error satisfies $|\hat z-z|\leq \tilde\gamma_{mn}z$. We write $\hat z-z=\theta z$ for some $\theta$ such that $|\theta|\leq \tilde\gamma_{mn}$.

We construct $\Delta \mathbf{D}$ by scaling every numerical attribute included in the matrix $\mathbf{A}$ by the factor $1+\theta$, denoted compactly by $\Delta \mathbf{D}=\theta\mathbf{D}$. Then, $\mathbf{D} + \Delta \mathbf{D}=(1+\theta)\mathbf{D}$. Now, consider the result of $f(\mathbf{D}+\Delta\mathbf{D})$. The key attributes are unchanged and every entry of the materialized join matrix is scaled by the same factor $1+\theta$. Then, ${\mathbf A} + \Delta\vect A =(1+\theta)\mathbf A$ and $\|{\mathbf A} + \Delta\vect A\|_F=\|(1+\theta)\mathbf A\|_F=|1+\theta|\cdot\|\mathbf A\|_F=|1+\theta|z=(1+\theta)z=\hat z$. In the previous chain of equalities, $|1+\theta| = 1+\theta$ since $|\theta|\leq \tilde\gamma_{mn}$. Finally, $\Delta\vect D=\theta \vect D \implies |\Delta\vect D|\leq \tilde\gamma_{mn}\cdot|\vect D|$.
\end{proof}
Prop.~\ref{proposition:frobenius_norm_input_relations_naive} establishes backward stability by analyzing the standard algorithm that first materializes the join matrix and then computes its Frobenius norm. However, materializing the join is unnecessary, as observed for aggregates over joins~\cite{Schleich:SIGMOD:2016}: we can push the computation of the norm past the join. Besides avoiding the materialization, this yields a smaller error bound. Whereas the bound in Prop.~\ref{proposition:frobenius_norm_input_relations_naive} depends on the number of values in the join matrix ($mn$), the improved bound only depends on the number of values in the input database ($\ell$). Since the materialized join can be much larger than the input database ($\ell \ll mn$), this means a substantially smaller backward error.

\begin{restatable}{proposition}{frobeniusnorminputrelationsoverthejoin}
\label{proposition:frobenius_norm_input_relations_over_the_join}
Consider the setting in Def.~\ref{def:eval-setting}. The computation $C$ of an operation $f$, which pushes the computation of the Frobenius norm of the join matrix past the join, is backward stable with respect to the input database. That is, there exists a perturbation $\Delta \mathbf{D}$ such that, if $\hat z=C(\mathbf{D})$, then $\hat z = f(\mathbf{D}+\Delta \mathbf{D})$. Moreover, $|\Delta \vect D|\le \tilde\gamma_{\ell}\cdot|\vect D|$.
\end{restatable}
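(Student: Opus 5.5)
The plan mirrors the proof of Prop.~\ref{proposition:frobenius_norm_input_relations_naive}. First we show that the pushed-past-the-join computation is forward stable with relative error $\tilde\gamma_\ell$. Then we absorb that forward error into a uniform scaling of the input database.

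Concretely, write $z^2=\|\mathbf A\|_F^2=\sum_{i\in[r]}\sum_{s\in S_i}\sum_{Y\in\mathbf Y_i} c_s\,\mathbf S_i[s,Y]^2$. Here $c_s$ is the number of join tuples containing $s$, i.e.\ the number of copies of the values of $s$ in $\mathbf A$. The counts $c_s$ are computed by count aggregates over the join tree (or by the usual factorized evaluation of \cite{Schleich:SIGMOD:2016}). They are exact integers obtained from the key attributes only, so they involve no rounding on data values; we assume they are exactly representable. The computation $C$ then proceeds as follows. It squares each of the $\ell$ data values, multiplies each square by its exact count, sums the resulting $\ell$ nonnegative terms, and takes a square root.

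For the error analysis, each term $c_s y^2$ incurs a relative error $(1+\theta_2)$. A recursive summation of $\ell$ nonnegative terms incurs at most $\gamma_{\ell-1}$ relative error per term. Because all terms are nonnegative, these errors combine into a relative error of the total of at most $\tilde\gamma_\ell$. The square root contributes one more rounding and halves the relative perturbation of its argument. Hence $\hat z=(1+\theta)z$ with $|\theta|\le\tilde\gamma_\ell$, after adjusting the constant hidden in $\tilde\gamma$. If $z=0$, all entries of $\mathbf A$ are zero and we take $\Delta\mathbf D=\mathbf 0$.

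We then set $\Delta\mathbf D=\theta\mathbf D$, exactly as in Prop.~\ref{proposition:frobenius_norm_input_relations_naive}. Since keys are unchanged, $Q(\mathcal D+\Delta\mathcal D)$ has the same rows, and its join matrix is $(1+\theta)\mathbf A$. Therefore $f(\mathbf D+\Delta\mathbf D)=|1+\theta|\,z=\hat z$ and $|\Delta\mathbf D|\le\tilde\gamma_\ell|\mathbf D|$.

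The main obstacle is the forward-error step. We have to argue that the count factors $c_s$ introduce no error depending on $m$: they are exact integers, or at worst a single rounding per term, which is absorbed into the constant of $\tilde\gamma_\ell$. We also have to confirm that the summation depth is governed by $\ell$ rather than by $mn$. For the latter we rely on nonnegativity, which lets the per-term perturbations $\theta_k$ combine into one relative factor bounded by their maximum.
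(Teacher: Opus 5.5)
Your proposal is correct and follows essentially the same route as the paper: exact multiplicity counts reduce $\|\mathbf A\|_F^2$ to a nonnegative weighted sum over the $\ell$ input values, which gives the forward error $\hat z=(1+\theta)z$ with $|\theta|\le\tilde\gamma_\ell$, and this error is then absorbed by the uniform scaling $\Delta\mathbf D=\theta\mathbf D$. The one detail worth stating explicitly is that $|1+\theta|=1+\theta$, which follows from $\hat z\ge 0$ (as the paper notes) or from $\tilde\gamma_\ell<1$.
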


Even though standard matrix-vector multiplication is backward stable with respect to the join matrix~\cite{Higham2002}, it is not with respect to the input database: there exist problem instances for which no perturbation of the input database can explain the computed result.

\begin{restatable}{proposition}{matrixvectormultiplicationinputrelations}
\label{proposition:matvec_input_relations_counterexample}
Consider the setting in Def.~\ref{def:eval-setting} and the operation $f$ that defines the matrix-vector multiplication $f_{\vect x}(\vect D) = \vect A\vect x$ for a vector $\vect x$. Then, there exist a join matrix $\vect A$ and vector $\vect x$ for which the standard computation of matrix-vector multiplication is not backward stable with respect to the input database. 
\end{restatable}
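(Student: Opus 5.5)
We will exhibit an explicit counterexample in which a single input value is copied into several rows of the join matrix. The computed product then perturbs those copies independently, in a way that no single perturbed input value can reproduce.

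The instance is the smallest one with this replication. Take $S_1(X, Y_1)$ with one tuple $(k, a)$, and $S_2(X, Y_2)$ with two tuples $(k, b_1)$ and $(k, b_2)$. The join matrix is
\[
\vect A=\begin{pmatrix} a & b_1\\ a & b_2\end{pmatrix},
\]
so the value $a$ appears twice in the column $Y_1$. The join space is the set of matrices whose first column is constant; it has dimension $3$ inside $\R^{2\times 2}$. Choose $\vect x=(1,1)^\top$, so the exact result is $\vect y=(a+b_1,\,a+b_2)^\top$. The standard algorithm returns $\hat{\vect y}[i]=(a+b_i)(1+\delta_i)$ with $|\delta_i|\le u$. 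The multiplications by $1$ are exact, so only the additions round.

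The key step is to pin down what an explaining perturbation must look like. Any $\Delta\vect D$ gives $(\vect A+\Delta\vect A)\vect x=(a'+b_1',\,a'+b_2')$ with one shared value $a'$. A norm-wise argument is not enough, because $b_1', b_2'$ are free and can absorb any output. We therefore use the componentwise notion from the setting, $|\Delta\vect D|\le\epsilon|\vect D|$ with $\epsilon$ small. This is the same sense in which Prop.~\ref{proposition:frobenius_norm_input_relations_naive} is stated, and it fits inputs with zero entries: a zero input value must stay zero. Set $b_1=0$. Then $b_1'=0$ is forced, so $a'=\hat{\vect y}[1]=a(1+\delta_1)$. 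The second equation then forces $b_2'=(a+b_2)(1+\delta_2)-a(1+\delta_1)=b_2+(a+b_2)\delta_2-a\delta_1$.

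Now choose the data so that $b_2'$ cannot be close to $b_2$ relative to $|b_2|$. Take $b_2$ tiny compared with $a$, for example $b_2=\eta a$ with $\eta\ll u$. Pick $a$ and $b_2$ as floating-point numbers for which $\operatorname{fl}(a+b_2)=a$, i.e.\ $b_2$ is absorbed in rounding, and $\operatorname{fl}(a+0)=a$ exactly. Then $\delta_1=0$ and $\hat{\vect y}[2]=a$. The explaining perturbation must satisfy $a'=a$ and $b_2'=0$, so $|\Delta b_2|=|b_2|$. The relative perturbation of $S_2$ is $1$, which is not $O(u)$. So no $\Delta\vect D$ with $|\Delta\vect D|\le\epsilon|\vect D|$ and $\epsilon<1$ exists, for any $\epsilon$ depending only on $u$ and the sizes.

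For contrast, with respect to $\vect A$ itself the two copies of $a$ could be perturbed independently, which is why the classical result of~\cite{Higham2002} still holds.

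The main obstacle is making sure that the impossibility comes from the structure of the join space and not from an overly strict error measure. If a norm-wise measure is preferred, the argument needs more care. One would instead replicate $a$ into many rows, each with its own large $b_i$ of alternating sign, arranged so that the row-wise roundings $\delta_i$ require different values of $a'$. The residual after fitting a single $a'$ then has size of order $u|a+b_i|$, which can exceed any small multiple of $|b_i|$ only through cancellation. I would first present the clean componentwise counterexample above and remark that it reflects the join-space restriction formalized later.
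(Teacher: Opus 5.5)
Your argument is correct under the entry-wise reading of ``small'' ($|\Delta\mathbf D|\le\epsilon|\mathbf D|$, the convention of the setting). However, it establishes a weaker kind of failure than the paper's proof, by a different mechanism.

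In your two-row instance the map $(a',b_1',b_2')\mapsto(a'+b_1',a'+b_2')$ is onto $\mathbb R^2$, so the computed output always has an exact preimage. The failure is only one of size. You get it by using $b_1=0$ to pin $a'=a$, which forces the absorbed rounding error entirely into the tiny entry $b_2$. This survives the column-wise measure: the column of $S_2$ needs relative perturbation at least $1/\sqrt{2}$. It does not survive the Frobenius-norm-wise measure, where your own explaining perturbation $a'=a$, $b_2'=0$ already has $\|\Delta\mathbf D\|_F/\|\mathbf D\|_F<|b_2/a|\ll u$.

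The paper instead joins two two-tuple relations on a single key: $S_1=\{(k,t),(k,0)\}$ and $S_2=\{(k,1),(k,-1)\}$ with $t=2^{p+2}$. The image of $\mathbf D\mapsto\mathbf A\mathbf x$ is then the hyperplane $y_1-y_2=y_3-y_4$ in $\mathbb R^4$. The absorptions $\operatorname{fl}(t\pm 1)=t$ give $\hat{\mathbf y}=(t,t,1,-1)$, which lies off this hyperplane, so no perturbation of any size explains it.

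The paper's conclusion is independent of the error measure. It also exhibits, in its strongest form, the phenomenon that projected backward stability is designed to absorb: the computed result leaving the range of $g\circ f$. Your version gives a minimal instance that makes the copy-sharing mechanism very transparent, at the price of depending on the measure.

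Your sketched norm-wise fallback would not work as described. Replicating a single $a$ against distinct $b_i$ keeps the map onto. With $\mathbf x=(1,1)^\top$, the choice $a'=a$, $b_i'=\hat{\mathbf y}[i]-a$ gives $\|\Delta\mathbf D\|_F\le\sqrt{2k}\,u\,\|\mathbf D\|_F$, which is a dimension-dependent multiple of $u$, i.e. still backward stable. What is needed is replication on both sides, as in the paper's Cartesian product, so that the image becomes a proper subspace.
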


\begin{proof}
Consider the join matrix defined by the natural join of the relations $S_1(X, A) = \{(k, t), (k, 0)\}$ and $S_2(X, B) = \{(k, +1), (k, -1)\}$, where $X$ is the join attribute and $A$ and $B$ are data attributes and $t=2^{p+2}$ for $p$ defined below. For any floating point number system~\cite{Higham2002} with base 2, each nonzero normalized floating point number can be represented by $\mu\times2^{e-p}$, where $e$ and $\mu$ are both integers, $p$ is the positive integer precision, and $2^{p-1}\leq|\mu|\leq 2^p-1$.\footnote{For IEEE double: $p=53$ and $-1021 \leq e \leq 1024$. Then, $t = 2^{55}$ and $fl(2^{55}+1) = fl(2^{55}-1)=2^{55}$.} Let $t=2^{p+2}$; $t$ can be represented using $\mu=2^{p-1}$ and $e=p+3$. The next floating point number has $\mu=2^{p-1}+1$ and $e=p+3$ and its value is $(2^{p-1}+1)\times2^{p+3-p}=t+8$. The previous number has $\mu=2^p-1$ and $e=p+2$ and its value is $(2^p-1)\times2^{p+2-p}=t-4$. No numbers between $t$ and $t+8$, as well as between $t-4$ and $t$, exist in this floating point number system. Therefore, the integers $t+1$ and $t-1$ will both be rounded to the closest floating point number, which is $t$. We therefore have: $\operatorname{fl}(t+1)=\operatorname{fl}(t-1)=t$ for $t=2^{p+2}$.

Let $\mathbf{x}=(1,1)^\top$. With data columns ordered as $(A,B)$, the join matrix $\mathbf{A}$, exact multiplication $\mathbf{A}\mathbf{x}$, and the row-wise floating-point computation $\hat{\mathbf{y}} = \operatorname{fl}(\mathbf{A}\mathbf{x})$ are:
\[
\mathbf{A}=\begin{pmatrix} t&+1\\ t&-1\\ 0&+1\\ 0&-1 \end{pmatrix}, \quad
\mathbf{A}\mathbf{x}=\begin{pmatrix} t+1\\ t-1\\ +1\\ -1 \end{pmatrix}, \quad
\hat{\mathbf{y}}=\begin{pmatrix} t\\ t\\ +1\\ -1 \end{pmatrix}, \quad 
\mathbf{A}'=\begin{pmatrix} \! a_1 & b_1 \!\\ \! a_1& b_2 \!\\ \! a_2& b_1 \!\\ \! a_2& b_2 \!\end{pmatrix}\!, \quad 
\mathbf{A}'\mathbf{x}=\begin{pmatrix} \! a_1+ b_1 \!\\ \! a_1+ b_2 \!\\ \! a_2+ b_1 \!\\ \! a_2+ b_2 \!\end{pmatrix}\!.
\]

We now show by contradiction that the computed vector $\hat{\mathbf{y}}$ cannot be obtained as the exact matrix-vector multiplication over any perturbation of the numerical attributes of the input database. Suppose there exist relations $S'_1$ and $S'_2$ that (i) are perturbations of $S_1$ and $S_2$, respectively, and (ii) define the input matrices $\vect D'=(\vect S'_1, \vect S'_2)$ and the join matrix $\vect A'$, such that $f_{\vect x}(\vect D') = \hat{\vect y}$. This means that $\vect A'\vect x = \hat{\vect y}$. Since $S'_1$ and $S'_2$ are  $S_1$ and $S_2$, where the numerical values may be perturbed, they have the general form $S'_1=\{(k,a_1),(k,a_2)\}$ and $S'_2=\{(k,b_1),(k,b_2)\}$ for the same join key $k$ and numerical values $a_1,a_2, b_1,b_2\in\mathbb{R}$. Then, the new join matrix $\vect A'$ and the multiplication $\vect A'\vect x$ are as shown above.

Since ${\mathbf{A}}'\mathbf{x}=\hat{\mathbf{y}}$, we obtain the equalities $a_1 + b_1 =t$, $a_1+ b_2=t$, $a_2+ b_1=+1$, and $a_2+b_2=-1$ by inspecting the first four rows of $\mathbf{A}'\mathbf{x}$ and $\hat{\mathbf{y}}$. Subtracting the first two equations gives $b_1-b_2=0$, while subtracting the fourth equation from the third gives $b_1-b_2=2$. We reached a contradiction. Therefore, the numerical computation of $\mathbf{A}\mathbf{x}$ is not backward stable with respect to the input database.
\end{proof}

Although exemplified for two simple operations, the above observations apply to all linear algebra operations over the join matrix. The takeaway is that even though the join is trivially backward stable, since it does not touch the data values in the database, numerical operations that are backward stable with respect to the matrix may trivially lose this desirable property when analyzed with respect to the input database! So the composition of two backward stable operations may not be backward stable in our setting. The main reason is that the join forces copies of the same input value in the join matrix, yet the numerical operations over the join matrix may introduce different numerical errors for these copies. In particular, if these operations produce an output whose size is more than the size of the input database, then backward stability is often lost since we may not be able to propagate the many independent errors to a smaller number of input data values. This is the case for workhorse linear algebra operations, such as QR. This calls for a new notion of stability for the composition of a query with a numerical operation and that allows us to quantify the numerical error introduced by the computation. We discuss this new notion in the next section.

\section{Projected Backward Stability}
\label{sec:projected-bs}

Prop.~\ref{proposition:matvec_input_relations_counterexample} shows that the standard multiplication of the join matrix with a vector is not backward stable with respect to the input database. This failure, however, should not be interpreted as numerical instability, but it is due to the rigid structure of join matrices: their numerical perturbations may not necessarily be join matrices. Yet although perturbed join matrices may not be defined by the query over a perturbed input database, they remain close to valid join matrices. This observation motivates a generalized notion of backward stability that takes the query into account. We first define this notion for arbitrary functions $f$ and $g$ and then specialize it to our setting, where $f$ is the join and $g$ the numerical operation.

\begin{definition}[Projected backward stability]
\label{def:proj_b_s}
Let $\mathcal X$ be a normed space, $\mathcal Y$ be a finite-dimensional Hilbert space (so that projections are well-defined), and let $\mathcal Y_0\subseteq\mathcal Y$ be a subspace. Let $f:\mathcal X\to\mathcal Y_0$, $g:\mathcal Y\to\mathcal Z$, and $h=g\circ f$. Let $\mathcal{Y}_0^\perp$ denote the orthogonal complement of the subspace $\mathcal{Y}_0$ in $\mathcal{Y}$. Let $C$ be a numerical computation of $g\circ f$: if $z=g(f(x))$ for some $x$, then $\hat{z}=C(x)$. We say that $C$ is {\em projected backward stable} if there exists a perturbation $\Delta x\in\mathcal X$ and a normal perturbation $\Delta y_0^\perp\in\mathcal Y_0^\perp$ such that $\hat z=g\bigl(f(x+\Delta x)+\Delta y_0^\perp\bigr)$, $\Delta x$ is small relative to $x$, and $\Delta y_0^{\perp}$ is small relative to $f(x+\Delta x)$. 

The computation $C$ is $(\alpha, \beta)$-projected backward stable if for any input $x$, there exists $\Delta x$ and $\Delta y_0^\perp$ described above satisfying $|\Delta x|\leq \alpha\cdot|x|$ and $|\Delta y_0^\perp|\leq\beta\cdot|f(x+\Delta x)|$.
\end{definition}

The standard notions of normed spaces and Hilbert spaces are recalled in Appendix~\ref{app:preliminaries}.
Projected backward stability is depicted in Fig.~\ref{fig:proj-backward}. In our database setting (Def.~\ref{def:eval-setting}), $f$ is the join query, $\mathcal{X}$ is the set of matrices $\mathbf{D}=(\mathbf{S}_1,\ldots,\mathbf{S}_r)$ defined by relational databases, $\mathcal{Y}_0$ is the set of join matrices $\mathbf{A}$ such that $f(\mathbf{D})=\mathbf{A}$, $\mathcal{Y} \supseteq \mathcal{Y}_0$ is the unconstrained set of all $m \times n$ matrices, and $g$ is the numerical operation applied to the join matrix $\mathbf{A}$. The orthogonal complement $\mathcal{Y}_0^\perp$ captures the perturbations that break the join dependencies from the query. The normal perturbation $\Delta y_0^\perp$ records the component that cannot be represented by perturbing the database, and its magnitude is controlled separately. Notice that if the intermediate space lacks constraints (i.e., $\mathcal{Y}_0 = \mathcal{Y}$), then $\Delta y_0^\perp = 0$, recovering the standard notion of backward stability. A computation $C$ of $g\circ f$ is projected backward stable with respect to $\mathbf{D}$ if $C(\mathbf{D})=\hat{z}=g\bigl(\tilde{\mathbf{A}}+\Delta\mathbf{A}^{\perp}\bigr)$ for some join matrix $\tilde{\mathbf{A}}\in\mathcal{Y}_0$ and perturbation $\Delta\mathbf{A}^{\perp}\in\mathcal{Y}_0^\perp$, where $\Delta\mathbf{A}^{\perp}$ is relatively small compared to $\tilde{\mathbf{A}}$. Furthermore, $\tilde{\mathbf{A}}$ must correspond to a small relative perturbation of the input database, i.e., $\tilde{\mathbf{A}}=f(\mathbf{D}+\Delta\mathbf{D})$, where $\Delta\mathbf{D}$ is relatively small compared to $\mathbf{D}$.

We are now ready to apply this new stability lens to the matrix-vector multiplication $\vect A\vect x$. The power of projected backward stability lies in its ability to reconcile floating-point errors with the structure of the join matrix. By projecting the inevitable rounding errors of the matrix-vector multiplication back onto the subspace of valid join matrices, the computed output maps to a small, structurally valid perturbation of the input database, as stated next.

\begin{restatable}{proposition}{Axisprojectedbackwardstable}
\label{proposition:matvec_proj_b_s}
Consider the setting in Def.~\ref{def:eval-setting}, the join query $f$ such that $f(\mathbf{D})=\vect A$,  $g_{\vect x}(\vect A) = \vect A\vect x$ for a vector $\vect x$, and the join space $\mathcal{Y}_0$ of $\mathbf{A}$. Let $C$ be the computation of $g_{\vect x}\circ f$ that forms $\mathbf A$ and computes $\mathbf A\mathbf x$ using dot product. Then, $C$ is $(\tilde\gamma_n, \frac{2\tilde\gamma_n}{1-\tilde\gamma_n})$-projected backward stable if $\tilde\gamma_n<1$. That is, there exist $\tilde{\mathbf{A}}\in\mathcal{Y}_0$, $\Delta\mathbf A^\perp\in\mathcal{Y}_0^{\perp}$ and a perturbation $\Delta\mathbf{D}$ of $\mathbf{D}$ such that $f(\mathbf{D}+\Delta\mathbf{D})=\tilde{\mathbf{A}}$, $g_{\vect x}(\tilde{\vect A}+\Delta\mathbf{A}^\perp)=C(\mathbf{D})$, $|\Delta\mathbf A^\perp|\leq\frac{2\tilde\gamma_n}{1-\tilde\gamma_n}\cdot|\tilde{\mathbf A}|$, and $|\Delta \mathbf{D}|\leq\tilde\gamma_n\cdot|\mathbf D|$.
\end{restatable}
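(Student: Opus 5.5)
The plan is to start from the classical backward error analysis of the dot product and then split the resulting matrix perturbation into a component inside the join space and a component orthogonal to it. By the standard result~\cite{Higham2002}, computing each entry of $\mathbf A\mathbf x$ as a dot product of length $n$ gives $C(\mathbf D)=\hat{\mathbf y}=(\mathbf A+\Delta\mathbf A)\mathbf x$. Here $\Delta\mathbf A[i,j]=\theta_{ij}\mathbf A[i,j]$ with $|\theta_{ij}|\le\tilde\gamma_n$. This perturbation is entry-wise, but it breaks the join structure: different copies of the same input value receive different factors $\theta_{ij}$. So I set $\hat{\mathbf A}=\mathbf A+\Delta\mathbf A$, define $\tilde{\mathbf A}=P_{\mathcal Y_0}(\hat{\mathbf A})$ and $\Delta\mathbf A^\perp=\hat{\mathbf A}-\tilde{\mathbf A}$. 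Then $g_{\vect x}(\tilde{\mathbf A}+\Delta\mathbf A^\perp)=C(\mathbf D)$ holds by construction, and $\Delta\mathbf A^\perp\in\mathcal Y_0^\perp$.

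The key step is to describe $\mathcal Y_0$ and its orthogonal projection explicitly. Each data value $v=\mathbf S_k[s,Y]$ of the database, for a tuple $s$ and data attribute $Y$, occupies a set $I_v$ of positions $(i,j)$ in $\mathbf A$. These are the rows whose label contains $s$, restricted to the column $Y$. Since there are no self-joins and every column of $\mathbf A$ comes from exactly one relation, the sets $I_v$ partition the index set of $\mathbf A$. Varying the data values independently (Lemma~\ref{lemma:join_consistent_subspace}) shows that $\mathcal Y_0$ is the span of the indicator matrices $\mathbf E_v=\mathbf 1_{I_v}$. Because the sets $I_v$ are disjoint, these indicators are mutually orthogonal in the Frobenius inner product. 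Hence $P_{\mathcal Y_0}(\mathbf M)=\sum_v \bigl(\langle\mathbf M,\mathbf E_v\rangle/|I_v|\bigr)\mathbf E_v$, that is, the projection replaces each group of copies by its average. I expect this identification of the projection as copy-averaging to be the main point that needs care. The two facts it rests on are that the join keys are fixed under perturbation and that two distinct tuples carrying equal values are still distinct coordinates.

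Given this description, the bounds follow directly. On the group $I_v$ all entries of $\mathbf A$ equal $v$, so $\tilde{\mathbf A}$ takes the value $v(1+\bar\theta_v)$ there, where $\bar\theta_v$ is the mean of the $\theta_{ij}$ over $(i,j)\in I_v$. Thus $|\bar\theta_v|\le\tilde\gamma_n$. I define $\Delta\mathbf D$ by perturbing each input value $v$ to $v(1+\bar\theta_v)$. This gives $f(\mathbf D+\Delta\mathbf D)=\tilde{\mathbf A}$ and $|\Delta\mathbf D|\le\tilde\gamma_n|\mathbf D|$ entry-wise.

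For the orthogonal part, an entry of $\Delta\mathbf A^\perp$ at a position in $I_v$ equals $v(\theta_{ij}-\bar\theta_v)$, so its absolute value is at most $2\tilde\gamma_n|v|$. On the other hand $|\tilde{\mathbf A}[i,j]|=|v|\,|1+\bar\theta_v|\ge(1-\tilde\gamma_n)|v|$, which is where the assumption $\tilde\gamma_n<1$ is needed. Dividing the two bounds gives $|\Delta\mathbf A^\perp|\le\frac{2\tilde\gamma_n}{1-\tilde\gamma_n}|\tilde{\mathbf A}|$ entry-wise. Zero values are harmless, since all of their copies stay zero. This yields the claimed $(\tilde\gamma_n,\frac{2\tilde\gamma_n}{1-\tilde\gamma_n})$-projected backward stability.
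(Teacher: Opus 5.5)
Your proposal is correct and follows the paper's argument. The paper also starts from the entry-wise dot-product backward error $|\mathbf E|\le\tilde\gamma_n|\mathbf A|$ and then invokes the entry-wise case of Theorem~\ref{theorem:bs_implies_pbs}, whose proof is exactly your copy-averaging projection (Lemma~\ref{lemma:projection_is_taking_mean}) with the same bounds: $\tilde\gamma_n|v|$ on the averaged perturbation, $2\tilde\gamma_n|v|$ on the normal component, and $(1-\tilde\gamma_n)|v|$ from below on $|\tilde{\mathbf A}|$.

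One detail is missing. The database need not be fully reduced, so for entries of dangling tuples $I_v=\emptyset$ and $\bar\theta_v$ is undefined. You should restrict your projection formula to nonempty $I_v$ and set $\Delta\mathbf D$ to zero on those entries, as the paper does.
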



\nop{
\begin{restatable}{proposition}{Axisprojectedbackwardstable}
\label{proposition:matvec_proj_b_s}
Consider the setting in Def.~\ref{def:eval-setting} and the join query $f$ such that $f(\mathbf{D})=\vect A$, and $g_{\vect x}(\vect A) = \vect A\vect x$ for a given vector $\vect x$. Suppose that $\hat{\mathbf y}$ is obtained by materializing $\mathbf A$ and then evaluating $\mathbf A\mathbf x$ using the standard dot-product algorithm. Then this procedure is $(\tilde\gamma_n, \frac{2\tilde\gamma_n}{1-\tilde\gamma_n})$-projected backward stable with respect to $\mathbf{D}$. More precisely, there exist $\tilde{\mathbf{A}}\in\mathbb R^{m\times n}$, $\Delta\mathbf A^\perp\in\mathcal{Y}_0^{\perp}$ and a perturbation $\Delta\mathbf{D}=(\Delta\mathbf S_1,\ldots,\Delta\mathbf S_r)$ such that $f(\mathbf{D}+\Delta\mathbf{D})=\tilde{\mathbf{A}}$ and $\hat{\mathbf y}=(\tilde{\mathbf {A}}+\Delta\mathbf A^\perp)\mathbf x$. Furthermore, $|\Delta\mathbf A^\perp|\leq\frac{2\tilde\gamma_n}{1-\tilde\gamma_n}|\tilde{\mathbf A}|$ and $|\Delta \mathbf{D}|\leq\tilde\gamma_n|\mathbf D|$.
\end{restatable}
}

\begin{example}
    We use the counterexample in the proof of Prop.~\ref{proposition:matvec_input_relations_counterexample} to show how a small normal perturbation $\Delta \mathbf{A}^\perp$ maps the computed output $\hat{\mathbf{y}}$ to a valid input database. Consider the query and input database from the proof of  Prop.~\ref{proposition:matvec_input_relations_counterexample}: the natural join of the relations $S_1(X, A) = \{(k, t), (k, 0)\}$ and $S_2(X, B) = \{(k, +1), (k, -1)\}$, for $t=2^{p+2}$ in a floating point number system with base 2 and precision $p$. $\vect A$, $\vect x=(1,1)^\top$, $\mathbf{A}\mathbf{x}$, and $\hat{\mathbf{y}}=\operatorname{fl}(\mathbf{A}\mathbf{x})$ are as before, whereas the new quantities $\tilde{\mathbf{A}}$ and its normal perturbation $\Delta \mathbf{A}^\perp$ are as shown below: 
\[
\mathbf{A}=\begin{pmatrix} t&+1\\ t&-1\\ 0&+1\\ 0&-1 \end{pmatrix}, \quad
\mathbf{A}\mathbf{x}=\begin{pmatrix} t+1\\ t-1\\ +1\\ -1 \end{pmatrix}, \quad
\hat{\mathbf{y}}=\begin{pmatrix} t\\ t\\ +1\\ -1 \end{pmatrix}, \quad 
\tilde{\mathbf{A}}=\begin{pmatrix} t&+1\\ t&-1\\ 0&+1\\ 0&-1 \end{pmatrix},
\quad 
\Delta \mathbf{A}^\perp=\begin{pmatrix} -1&0\\ +1&0\\ 0&0\\ 0&0 \end{pmatrix}\!.
\]
In this example, $\mathbf{A}=\tilde{\mathbf{A}}$: The entries $t+1$ and $t-1$ in $\tilde{\mathbf{A}}+\Delta\mathbf{A}^\perp$ originate from the same input data value $t$ and are replaced by their average $t$ in $\tilde{\mathbf{A}}$. This defines the perturbation $\Delta\mathbf{A}^\perp$ as shown above. This replacement gives the desired projection onto the space of valid join matrices (Lemma~\ref{lemma:projection_is_taking_mean} in Appendix~\ref{app:projected-bs}).
It can be checked that $(\tilde{\mathbf{A}}+\Delta\mathbf{A}^\perp)\mathbf{x}=\hat{\mathbf{y}}$.
Moreover, the entry-wise perturbations of $\Delta\mathbf{A}^\perp$ with respect to $\tilde{\mathbf{A}}$ at indices $(1,1)$ and $(2,1)$ are $1/t$=$2^{-p-2}$, well below the unit round off $2^{-p}$.
Since $\mathbf{A}=\tilde{\mathbf{A}}$, $\Delta\mathbf{D}=0$ in this example. 
\end{example}

\section{The Database Condition Number}
\label{sec:join-condition-number}

The perturbation of the database that explains the error in the output of a numerical operation over the join matrix is a function of two fundamental phenomena: (1) the inherent numerical error due to the numerical operation; and (2) the amplification of this error due to the join that may produce many copies of the same input data value. In this section, we capture the second by the new notion of \emph{database condition number} and then use it to bound the projected backward error.
The amplification of an error is controlled by how many times an entry from an input matrix is copied in the join matrix. The database condition number is a property of the database {\em and} of the join query. We show it to be a natural analogue to the classical condition number of matrices and exhibit the connection between the two.

\nop{In this section, we only consider fully reduced input databases for clarity and simplicity: why?? However, the perturbation bound translation from the join matrix to the input database works even if the database is not calibrated, as further illustrated in Lemma~\ref{lemma:condition_number_translation_with_dangling_tuples}.}

We first introduce a simple notation. For a matrix $\vect S_i$, we fix an order on its rows so we can refer uniquely to the $j$-th row in $\vect S_i$. We denote by $\phi_i(j)$ the number of occurrences of the $j$-th row of $\vect S_i$ in the join matrix. This count holds for all entries in that row.

\begin{definition}[Relation/Database Condition Number]
Consider the setting in Def.~\ref{def:eval-setting}. For any matrix $\vect S_i$ with $d_i$ rows in the database $\vect D = (\vect S_1,\ldots,\vect S_r)$, the \emph{relation condition number} $ \kappa_{\mathrm{join}}(\vect S_i)$ and the  \emph{database condition number} $ \kappa_{\mathrm{join}}(\vect D)$ {\em with respect to the join query} are:
\begin{align*}
 \kappa_{\mathrm{join}}(\vect S_i) = 
 \sqrt{\frac{\max_{j\in[d_i]}\phi_i(j)} {\min_{j\in[d_i]:\,\phi_i(j)>0}\phi_i(j)}} 
\hspace*{3em} 
\kappa_{\mathrm{join}}(\vect D) =
\sqrt{
\frac{
\max_{\vect S_i\in\vect D}\max_{j\in[d_i]}\phi_i(j)}{\min_{\vect S_i\in\vect D}\min_{j\in[d_i]:\,\phi_i(j)>0}\phi_i(j)}}.
\end{align*}
The above holds in case the join matrix is not empty. If the join matrix is empty, then the relation/database condition number is 1.
\end{definition}
The relation (database) condition number is thus the square root of the ratio of the maximum and the minimum numbers of occurrences of a row from a relation (the database) in the query result. It ranges from 1 to the square root of the size of the query result. 

The relation and database condition numbers are equivalent to the classical condition number of so-called expansion matrices that transform the input matrices represented by the input relations and the entire database into the join matrix. To show this connection, we first introduce the expansion matrix for a relation and the database.
\begin{definition}[Relation/Database Expansion Matrix]
\label{def:expansion-matrices}
Consider the setting in Def.~\ref{def:eval-setting}. Let $\mathbf D=(\mathbf S_1,\ldots,\mathbf S_r)$, where $\mathbf S_i\in\mathbb R^{d_i\times n_i}$, and write the join matrix as $\mathbf A=[\mathbf A_1\ \cdots\ \mathbf A_r]$, where $\mathbf A_i\in\mathbb R^{m\times n_i}$ consists of the columns contributed by $\mathbf S_i$, with $d=\sum_{i\in[r]} d_i$ and $n=\sum_{i\in[r]} n_i$. For each $i\in[r]$, the \emph{relation expansion matrix} $\mathbf E_i\in\{0,1\}^{m\times d_i}$ is (for $k\in[m]$ and $j\in[d_i]$):
\[ \mathbf E_i[k,j]=\begin{cases} 1, & \text{if the row $k$ in $\mathbf{A}_i$ is a copy of the row $j$ in $\mathbf{S}_i$},\\ 0, & \text{otherwise}. \end{cases} \]
The \emph{database expansion matrix} $\vect E_{\vect D}$ is the block-diagonal matrix $(\mathbf E_1,\ldots,\mathbf E_r)$.
\end{definition}

By construction, $\mathbf A_i=\mathbf E_i\mathbf S_i$ for every $i\in[r]$ and therefore $\mathbf E_{\mathbf D}\mathbf D$ is a block-diagonal representation $(\vect A_1,\ldots,\vect A_r)$ of $\mathbf A$, which we call $\vect A_{\mathrm{block}}$. 
We show that the classical condition number of the expansion matrix of each relation (database) is equal to our newly introduced relation (database) condition number.

\begin{restatable}{proposition}{cnequalsjcn}
\label{proposition:join_to_matrix_condition_number}
Consider the setting in Def.~\ref{def:eval-setting}. If the join matrix is non-empty, the following hold: 
\begin{itemize}
    \item $\kappa(\mathbf E_i)=\kappa_{\mathrm{join}}(\mathbf S_i)$, where $\vect E_i$ is the expansion matrix for input matrix $\vect S_i$, and
    \item $\kappa(\mathbf E_{\mathbf D})=\kappa_{\mathrm{join}}(\mathbf D)$, where $\vect E_{\vect D}$ is the expansion matrix of the database matrix $\vect D$.
\end{itemize} 
\end{restatable}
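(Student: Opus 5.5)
The plan is to compute the Gram matrix $\mathbf E_i^\top\mathbf E_i$ explicitly, read off the singular values of $\mathbf E_i$, and then lift the computation to the block-diagonal matrix $\mathbf E_{\mathbf D}$.

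\textbf{Gram matrix of $\mathbf E_i$.} Each row $k$ of $\mathbf A_i$ is a copy of exactly one row of $\mathbf S_i$: since $Q$ has no self-joins, every tuple of the result restricts to exactly one tuple of $S_i$. Hence each row of $\mathbf E_i$ contains exactly one entry equal to $1$. For $j\neq j'$ the columns $\mathbf E_i[*,j]$ and $\mathbf E_i[*,j']$ therefore have disjoint supports and are orthogonal. Column $j$ has exactly $\phi_i(j)$ ones, so
\[
\mathbf E_i^\top\mathbf E_i=\operatorname{diag}\bigl(\phi_i(1),\ldots,\phi_i(d_i)\bigr).
\]

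\textbf{Singular values of $\mathbf E_i$.} The singular values of $\mathbf E_i$ are the square roots of the eigenvalues of $\mathbf E_i^\top\mathbf E_i$. The nonzero ones are thus $\sqrt{\phi_i(j)}$ for those $j$ with $\phi_i(j)>0$. The rank $k$ equals the number of such $j$; this is where we need the join matrix to be non-empty, so that at least one $\phi_i(j)>0$ and $k\geq 1$. Zero columns, i.e.\ dangling tuples with $\phi_i(j)=0$, contribute only zero singular values, which the paper's definition of $\kappa$ ignores. Therefore
\[
\kappa(\mathbf E_i)=\frac{\sigma_1}{\sigma_k}=\sqrt{\frac{\max_j\phi_i(j)}{\min_{j:\phi_i(j)>0}\phi_i(j)}}=\kappa_{\mathrm{join}}(\mathbf S_i).
\]

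\textbf{Database expansion matrix.} $\mathbf E_{\mathbf D}$ is block-diagonal with blocks $\mathbf E_1,\ldots,\mathbf E_r$. So $\mathbf E_{\mathbf D}^\top\mathbf E_{\mathbf D}$ is block-diagonal with diagonal blocks $\operatorname{diag}(\phi_i(j))$, i.e.\ it is diagonal with all counts $\phi_i(j)$ over all $i,j$ on its diagonal. Its nonzero singular values are the union of those of the blocks. The maximum and the positive minimum are then taken over all relations, which gives exactly $\kappa_{\mathrm{join}}(\mathbf D)$.

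The main obstacle is only the justification that each row of $\mathbf E_i$ has exactly one $1$. This relies on the no-self-join assumption and on each result tuple projecting to a single source row, which holds even when two tuples share the same data values, since rows are labelled by tuples. Beyond that, the remaining care is handling zero counts via the rank-$k$ definition of $\kappa$; the rest is routine.
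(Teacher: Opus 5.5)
Your proposal is correct and follows essentially the same route as the paper. Both arguments show $\mathbf E_i^\top\mathbf E_i=\operatorname{diag}\bigl(\phi_i(1),\ldots,\phi_i(d_i)\bigr)$ because each row of $\mathbf E_i$ has a single $1$, read off the positive singular values as $\sqrt{\phi_i(j)}$ for $\phi_i(j)>0$, and then pass to the block-diagonal $\mathbf E_{\mathbf D}$. The paper phrases the singular-value step through an explicit full SVD of $\mathbf E_i$ rather than through the eigenvalues of the Gram matrix, but this is the same argument.
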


We next illustrate the connection between the database(relation) condition number and the condition number of the expansion matrix.
\begin{example}
\label{ex:condition-numbers}
Consider the join $Q=S_1(X,Y_1)\bowtie S_2(X,Y_2)$ and the join matrix $\vect A$:
\[
\begin{array}{c}
S_1 \\
\begin{array}{c|c}
X & Y_1\\ \hline
x_1 & a_1\\
x_2 & a_2
\end{array}
\end{array},
\begin{array}{c}
S_2 \\
\begin{array}{c|c}
X & Y_2\\ \hline
x_1 & b_1\\
x_2 & b_2\\
x_2 & b_3\\
x_2 & b_4
\end{array}
\end{array},
\vect E_1=
\begin{bmatrix}
1&0\\
0&1\\
0&1\\
0&1
\end{bmatrix},
\vect S_1 = 
\begin{bmatrix}
a_1\\
a_2
\end{bmatrix},
\vect E_2=\vect I_4,
\vect S_2=
\begin{bmatrix}
b_1\\
b_2\\
b_3\\
b_4
\end{bmatrix},
\vect A=
\begin{bmatrix}
a_1 & b_1\\
a_2 & b_2\\
a_2 & b_3\\
a_2 & b_4
\end{bmatrix}.
\]
It can be checked that: $\mathbf A_1=\mathbf A[*,Y_1]=\mathbf E_1\mathbf S_1$ and $\mathbf A_2=\mathbf A[*,Y_2]=\mathbf E_2\mathbf S_2$. Also, $\vect A_{\text{block}} = \mathbf{E}_{\vect D} \vect D$, where the block-diagonal matrices are: $\vect A_{\text{block}}=(\vect A_1, \vect A_2)$, $\mathbf{E}_{\mathbf{D}}=(\mathbf{E}_1, \mathbf{E}_2)$, and $\vect D = (\vect S_1, \vect S_2)$.  
The two rows of $\vect S_1$ occur once and thrice, respectively, in $\vect A$, whereas every row of $\vect S_2$ occurs once in $\vect A$. Hence $\bigl(\phi_1(1),\phi_1(2)\bigr)=(1,3),
\bigl(\phi_2(1),\ldots,\phi_2(4)\bigr)=(1,1,1,1)$. It follows that $\kappa_{\mathrm{join}}(\mathbf S_1)=\sqrt{3}$, $\kappa_{\mathrm{join}}(\mathbf S_2)=1$, and $\kappa_{\mathrm{join}}(\mathbf D)=\sqrt{3}$.
We next show that $\kappa(\mathbf E_1)=\sqrt{3}=\kappa_{\mathrm{join}}(\vect S_1)$ and $\kappa(\mathbf E_2)=1=\kappa_{\mathrm{join}}(\vect S_2)$, so the relation condition numbers are the condition numbers of the expansion matrices $\vect E_1$ and $\vect E_2$. We first show how to compute the condition number of $\vect E_1$. Using the SVD decomposition $\vect E_1 = \vect U_1\vect \Sigma_1\vect V_1^{\top}$, we have $\vect E_1^\top\vect E_1 = \vect V_1\vect \Sigma_1^2\vect V_1^{\top}$. Since $\mathbf E_1^\top\mathbf E_1$ is the diagonal matrix $\operatorname{diag}(1,3)$ in our example, we have $\vect V_1 = \vect I_2$ and $\vect \Sigma_1^2=\operatorname{diag}(1,3)$. So $3$ ($1$) is the max (min) singular value of $\vect E_1^\top\vect E_1$ and its square root is the max (min) singular value of $\vect E_1$. So $\kappa(\vect E_1) = \sqrt{3}$.
Since $\vect E_2 = \vect I_4$, the singular values of $\vect E_2$ are all $1$, so $\kappa(\vect E_2) = 1$.
Moreover, $\mathbf{E}_{\mathbf{D}}=(\mathbf{E}_1, \mathbf{E}_2)$ and $\mathbf{E}_{\mathbf{D}}^\top \mathbf{E}_{\mathbf{D}}=\operatorname{diag}(1,3,1,1,1,1)$. Therefore, $\kappa(\mathbf{E}_{\mathbf{D}})=\sqrt{3}=\kappa_{\mathrm{join}}(\mathbf{D})$.
\end{example}

The relation (database) condition number translates the column-wise (Frobenius-norm-wise) perturbation bound from the join space to the input database, as shown next.

\begin{restatable}{proposition}{condboundtranslationdangling}
\label{proposition:condition_number_translation_with_dangling_tuples}
Consider our setting in Def.~\ref{def:eval-setting} and let $\Delta\vect A$ be a perturbation of the join matrix $\vect A$ on the join space. Then, there exists a perturbation $\Delta\mathbf{D}=(\Delta\vect S_1,\dots,\Delta\vect S_r)$ of the input $\mathbf{D}$ such that $\vect A+\Delta\vect A$ is the join matrix of the perturbed input  $\mathbf{D}+\Delta\mathbf{D}$.
Moreover, for every $\gamma\geq 0$, the following statements hold:
\begin{itemize}
    \item For every $j\in[r]$ and $Y\in\mathbf{Y}_j$, if $\norm{\Delta\vect A[*,Y]}_2\leq\gamma\cdot\norm{\vect A[*,Y]}_2$  then \\$\norm{\Delta\vect S_j[*,Y]}_2\leq\kappa_{\mathrm{join}}(\mathbf{S}_j)\cdot\gamma\cdot\norm{\vect S_j[*,Y]}_2$.

    \item If $\norm{\Delta\vect A}_F\leq\gamma\cdot\norm{\vect A}_F$, then $\norm{\Delta\mathbf{D}}_F\leq\kappa_{\mathrm{join}}(\mathbf{D})\cdot\gamma\cdot\norm{\mathbf{D}}_F$.
\end{itemize}
\end{restatable}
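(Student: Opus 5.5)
Since $\Delta\vect A$ lies in the join space, $\vect A+\Delta\vect A$ is the join matrix of some perturbed database $\D+\Delta\D$. In particular $\Delta\vect A$ is itself a join-structured matrix: with $\vect A=[\vect A_1\ \cdots\ \vect A_r]$ as in Def.~\ref{def:expansion-matrices}, we can write $\Delta\vect A_i=\vect E_i\Delta\vect S_i$ for some $\Delta\vect S_i\in\R^{d_i\times n_i}$. This uses the linearity of the join space (Lemma~\ref{lemma:join_consistent_subspace}) and the fact that the expansion matrices depend only on the join keys, which the perturbation does not change.

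One subtlety is rows with $\phi_i(j)=0$ (dangling tuples). For these, $\Delta\vect S_i[j,*]$ is unconstrained by $\Delta\vect A$, and we set it to zero. This fixes $\Delta\D$ and establishes the existence claim. For each column $Y\in\vect Y_i$ we then have the identities
\[
\norm{\Delta\vect A[*,Y]}_2^2=\sum_{j}\phi_i(j)\,\Delta\vect S_i[j,Y]^2,\qquad
\norm{\vect A[*,Y]}_2^2=\sum_{j}\phi_i(j)\,\vect S_i[j,Y]^2 .
\]
Both follow from $\vect E_i^\top\vect E_i=\operatorname{diag}(\phi_i(1),\ldots,\phi_i(d_i))$, since each row of $\vect E_i$ has exactly one $1$. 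This is the same computation as in the proof of Prop.~\ref{proposition:join_to_matrix_condition_number}.

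Let $\phi_{\max}$ and $\phi_{\min}$ be the maximum and the minimum positive count for $\vect S_i$. For the column-wise bound, the supports are what matter: $\Delta\vect S_i[j,Y]=0$ whenever $\phi_i(j)=0$, so
\[
\phi_{\min}\norm{\Delta\vect S_i[*,Y]}_2^2\le\norm{\Delta\vect A[*,Y]}_2^2\le\gamma^2\norm{\vect A[*,Y]}_2^2\le\gamma^2\phi_{\max}\norm{\vect S_i[*,Y]}_2^2 .
\]
Dividing by $\phi_{\min}$ and taking square roots gives the factor $\kappa_{\mathrm{join}}(\vect S_i)$.

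The Frobenius bound follows the same argument after summing over all columns of all relations. It uses the global extrema, that is, the block-diagonal $\vect E_{\vect D}$ with $\vect E_{\vect D}^\top\vect E_{\vect D}$ diagonal. We also need $\norm{\vect A}_F^2=\sum_i\norm{\vect A_i}_F^2$, and likewise for $\Delta\vect A$. Combining these yields the factor $\kappa_{\mathrm{join}}(\vect D)$.

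The main obstacle is the existence and well-definedness of $\Delta\vect S_i$ from $\Delta\vect A$. I will argue it via the characterization of the join space: the span of $\{\vect E_i\vect S_i\}$ over all data values. All copies of an input value must carry identical perturbations, so $\Delta\vect S_i[j,*]$ can be read off any copy. We also need the convention that dangling rows receive zero perturbation; without it the lower bound $\phi_{\min}\norm{\Delta\vect S_i}^2\le\norm{\Delta\vect A_i}^2$ fails. The remaining steps are routine inequalities.
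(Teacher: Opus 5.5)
Your proof is correct and follows essentially the same route as the paper's. You read off $\Delta\mathbf S_j$ from the necessarily equal copies in $\Delta\mathbf A$, set dangling rows to zero, and compare the multiplicity-weighted sums $\sum_p\phi_j(p)(\cdot)^2$ against the minimal and maximal positive counts.

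The only difference is packaging. The paper passes to the fully reduced sub-database $\mathbf D^+$, applies Lemmas~\ref{lemma:join_condition_number_columnwise} and~\ref{lemma:join_condition_number_fnorm} there, and then uses $\|\mathbf S_j^+[*,Y]\|_2\le\|\mathbf S_j[*,Y]\|_2$. You instead handle dangling rows inline: they have zero weight in $\|\mathbf A[*,Y]\|_2^2$ and zero perturbation. Because you never divide by $\|\mathbf A[*,Y]\|_2$, this also spares you the paper's separate zero-column case.
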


The database condition number can be seen as a measure of the data skew in the query result (and the join matrix). A value of one means  a well-conditioned database, where the input entries that contribute to the output are represented uniformly, and no amplification of relative perturbations.

\nop{
\andrei{Maybe we don't want this: Returning to Example~\ref{ex:condition-numbers}, suppose that a
perturbation of the first column of the join matrix satisfies $\frac{\|\Delta\mathbf A[*,Y_1]\|_2}
     {\|\mathbf A[*,Y_1]\|_2}
\leq\gamma$.
Proposition~\ref{proposition:join_condition_number_columnwise} gives $\frac{\|\Delta\mathbf S_1[*,Y_1]\|_2}
     {\|\mathbf S_1[*,Y_1]\|_2}
\leq\sqrt{3}\,\gamma$.
For the second relation, whose multiplicities are uniform, the
corresponding factor is $\kappa(\mathcal D)=1$. Similarly, if $\frac{\|\Delta\mathbf A\|_F}{\|\mathbf A\|_F}\leq\gamma$,
then Proposition~\ref{proposition:join_condition_number_fnorm} yields $\frac{\|\Delta\mathbf D\|_F}{\|\mathbf D\|_F}
\leq\sqrt{3}\,\gamma$.
Thus, in this example, the threefold multiplicity skew becomes a
factor of $\sqrt{3}$ in the relative perturbation bound.}
}

\section{Relationship of Projected Backward Stability and Backward Stability}
\label{sec:pbs-bs-connection}

There is a strong relationship between the projected backward stability of the computation $C$ of $g\circ f$ with respect to the input database, where $f$ is the join query and $g$ is the numerical computation, and the backward stability of the computation $C_g$ of $g$ with respect to the join matrix. In particular, the projected backward stability of $C$ implies the backward stability of $C_g$ (Theorem~\ref{theorem:pbs_implies_bs}), while the backward stability of $C_g$ implies the projected backward stability of $C$ (Theorem~\ref{theorem:bs_implies_pbs}).

\nop{
In this section, we study computations that first materialize the join from the input database and then perform matrix operations on the resulting join matrix. Let $f$ denote the join query, $g$ the matrix operation, and $C=(C_f,C_g)$ the corresponding computation. We establish a connection between the backward stability of $C_g$ and the projected backward stability of the composed computation $C$. The key quantities governing this connection are the relation and database condition numbers, which measure how perturbations of the join matrix can be attributed to perturbations of the underlying input relations. We establish both directions of this relationship: under suitable conditions, backward stability of $C_g$ yields projected backward stability of $C$ (Theorem~\ref{theorem:bs_implies_pbs}), while projected backward stability of $C$ in turn yields backward stability of $C_g$ (Theorem~\ref{theorem:pbs_implies_bs}}

\begin{restatable}{theorem}
{pbsimpliesbs}
\label{theorem:pbs_implies_bs}
Consider the setting in Def.~\ref{def:eval-setting} and let $C$ be the computation of the join query followed by the computation $C_g$ of a numerical operation over the join matrix $\vect A$. For $\alpha,\beta\geq0$, the following hold:
\begin{itemize}
    \item If $C$ is entry-wise $(\alpha,\beta)$-projected backward stable, then $C_g$ is entry-wise $\bigl(\alpha+\beta\cdot(1+\alpha)\bigr)$-backward stable on the subspace $\mathcal{Y}_0$;
    \item If $C$ is column-wise $(\alpha,\beta)$-projected backward stable, then $C_g$ is column-wise\\ $\sqrt{\bigl(\kappa_{\mathrm{join}}(\mathbf{S}_i)\cdot\alpha\bigr)^2+\beta^2\cdot\bigl(1+\kappa_{\mathrm{join}}(\mathbf S_i)\cdot\alpha\bigr)^2}$-backward stable on the subspace $\mathcal{Y}_0$  for each input data column  in $\vect S_i$ and $i\in[r]$;
    \item If $C$ is Frobenius-norm-wise $(\alpha,\beta)$-projected backward stable, then $C_g$ is Frobenius-norm-wise $\sqrt{\bigl(\kappa_{\mathrm{join}}(\mathbf{D})\cdot\alpha\bigr)^2+\beta^2\cdot\bigl(1+\kappa_{\mathrm{join}}(\mathbf D)\cdot\alpha\bigr)^2}$-backward stable on the subspace $\mathcal{Y}_0$.
\end{itemize}
\end{restatable}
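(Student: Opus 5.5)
Fix an input $\mathbf D$ with join matrix $\mathbf A=f(\mathbf D)$. Projected backward stability of $C$ gives $\Delta\mathbf D$ and $\Delta\mathbf A^\perp\in\mathcal Y_0^\perp$ with
\[
C_g(\mathbf A)=C(\mathbf D)=g\bigl(f(\mathbf D+\Delta\mathbf D)+\Delta\mathbf A^\perp\bigr).
\]
We set $\tilde{\mathbf A}=f(\mathbf D+\Delta\mathbf D)$ and write $\Delta\mathbf A_0=\tilde{\mathbf A}-\mathbf A$. The candidate backward perturbation of $\mathbf A$ is then
\[
\Delta\mathbf A=\Delta\mathbf A_0+\Delta\mathbf A^\perp,
\]
so that $C_g(\mathbf A)=g(\mathbf A+\Delta\mathbf A)$ exactly. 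It remains to bound $\Delta\mathbf A$ relative to $\mathbf A$ in each of the three norms.

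\textbf{Bounding $\Delta\mathbf A_0$.} Since $\tilde{\mathbf A}$ and $\mathbf A$ are both join matrices, $\Delta\mathbf A_0$ lies in $\mathcal Y_0$. Column by column (per relation $i$) it equals $\mathbf E_i\Delta\mathbf S_i$, because each column of $\mathbf A$ contributed by $\mathbf S_i$ is $\mathbf E_i\mathbf S_i$ and the map is linear.
\begin{itemize}
\item \emph{Entry-wise case.} Each entry of $\Delta\mathbf A_0$ copies an entry of $\Delta\mathbf D$, and the matching entry of $\mathbf A$ copies the same entry of $\mathbf D$. Hence $|\Delta\mathbf A_0|\le\alpha|\mathbf A|$.
\item \emph{Column-wise case.} For a column $Y$ of $\mathbf S_i$,
\[
\|\mathbf E_i\mathbf x\|_2\le\sigma_{\max}(\mathbf E_i)\|\mathbf x\|_2
\quad\text{and}\quad
\|\mathbf E_i\mathbf S_i[*,Y]\|_2\ge\sigma_{\min}(\mathbf E_i)\|\mathbf S_i[*,Y]\|_2,
\]
where $\sigma_{\min}$ is the smallest nonzero singular value. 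Together these give
\[
\|\Delta\mathbf A_0[*,Y]\|_2\le\kappa(\mathbf E_i)\,\alpha\,\|\mathbf A[*,Y]\|_2,
\]
and Proposition~\ref{proposition:join_to_matrix_condition_number} identifies $\kappa(\mathbf E_i)=\kappa_{\mathrm{join}}(\mathbf S_i)$.
\item \emph{Frobenius case.} The same argument applies with $\mathbf E_{\mathbf D}$ in place of $\mathbf E_i$, using $\|\mathbf A\|_F=\|\mathbf A_{\mathrm{block}}\|_F=\|\mathbf E_{\mathbf D}\mathbf D\|_F$.
\end{itemize}

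\textbf{Rows and tuples that do not contribute to the join.} This step needs care. The lower bound via $\sigma_{\min}$ fails if $\mathbf S_i[*,Y]$ has mass on rows with $\phi_i(j)=0$, which lie in the kernel of $\mathbf E_i$. I would handle this by replacing $\Delta\mathbf D$ with its restriction to contributing rows. This does not change $\tilde{\mathbf A}$, it only decreases $|\Delta\mathbf D|$, and the relative bound $\alpha$ with respect to $\mathbf D$ restricted to those rows still holds, at least entry-wise. For the norm-wise cases I expect the paper's definition to implicitly assume contributing tuples, or to have the bound stated relative to the contributing part. I expect this subtlety to be the main obstacle.

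\textbf{Bounding $\Delta\mathbf A^\perp$.} Use $|\Delta\mathbf A^\perp|\le\beta|\tilde{\mathbf A}|$ together with $|\tilde{\mathbf A}|\le|\mathbf A|+|\Delta\mathbf A_0|\le(1+c\alpha)|\mathbf A|$, where $c$ is $1$, $\kappa_{\mathrm{join}}(\mathbf S_i)$ or $\kappa_{\mathrm{join}}(\mathbf D)$ according to the norm. This gives a bound of $\beta(1+c\alpha)$ relative to $\mathbf A$.

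\textbf{Combining the two parts.}
\begin{itemize}
\item \emph{Entry-wise.} The triangle inequality gives $\alpha+\beta(1+\alpha)$.
\item \emph{Column-wise and Frobenius.} Here we exploit orthogonality. $\mathcal Y_0$ is a space of matrices closed under zeroing all columns except one, since each column's entries are governed by independent input values. Therefore the projection onto $\mathcal Y_0$ acts column by column, so $\Delta\mathbf A_0[*,Y]\perp\Delta\mathbf A^\perp[*,Y]$. Pythagoras then gives
\[
\|\Delta\mathbf A[*,Y]\|_2^2=\|\Delta\mathbf A_0[*,Y]\|_2^2+\|\Delta\mathbf A^\perp[*,Y]\|_2^2
\le\bigl((c\alpha)^2+\beta^2(1+c\alpha)^2\bigr)\|\mathbf A[*,Y]\|_2^2,
\]
and the Frobenius analogue follows in the same way. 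Justifying this column-wise orthogonal splitting of $\mathcal Y_0$ is the second point that needs checking.
\end{itemize}

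Finally, $\mathbf A+\Delta\mathbf A$ need not lie in $\mathcal Y_0$, but backward stability ``on the subspace $\mathcal Y_0$'' only requires the input $\mathbf A$ to range over $\mathcal Y_0$. So the argument is complete.
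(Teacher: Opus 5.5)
Your overall structure matches the paper's proof, and the entry-wise case is complete. You split the backward perturbation into $\tilde{\mathbf A}-\mathbf A\in\mathcal Y_0$ and $\Delta\mathbf A^\perp\in\mathcal Y_0^\perp$. You bound the first part through tuple multiplicities; your $\sigma_{\max}/\sigma_{\min}^{+}$ argument for $\mathbf E_i$ is the paper's Lemma~\ref{lemma:jcn_base} in disguise, since $\mathbf E_i^\top\mathbf E_i=\operatorname{diag}(\phi_i(1),\ldots,\phi_i(d_i))$. You bound $\tilde{\mathbf A}$ by the triangle inequality, and combine by the triangle inequality entry-wise and by Pythagoras otherwise. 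Your justification of the column-wise orthogonality is sufficient: without self-joins the join space is closed under zeroing all but one column.

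The genuine gap is the dangling-tuple issue you flagged, and your proposed fix does not close it. In the column-wise and Frobenius cases you fix an arbitrary $\mathbf D$ with $f(\mathbf D)=\mathbf A$, and then the argument fails. Suppose $\mathbf S_i$ has a dangling tuple whose value in column $Y$ is a huge $M$. The hypothesis at $\mathbf D$ only guarantees $\|\Delta\mathbf S_i[*,Y]\|_2\le\alpha\|\mathbf S_i[*,Y]\|_2\approx\alpha M$, and that mass may sit on contributing rows. Then $\|(\tilde{\mathbf A}-\mathbf A)[*,Y]\|_2$ is not controlled by $\|\mathbf A[*,Y]\|_2$ at all. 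Restricting $\Delta\mathbf D$ to contributing rows does not help, because the reference norm $\|\mathbf S_i[*,Y]\|_2$ still contains $M$. You also cannot assume calibration: Definition~\ref{def:eval-setting} does not require it, and the paper states explicitly that this theorem holds for uncalibrated databases.

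The missing idea is that you may choose the database at which you invoke projected backward stability. The conclusion concerns only $C_g$ at $\mathbf A\in\mathcal Y_0$, and $C$ sees its input only through $f$. The paper applies the hypothesis to $\bar{\mathbf D}$, obtained from $\mathbf D$ by setting all data values of dangling tuples to zero, without changing any join key. Then $f(\bar{\mathbf D})=\mathbf A$ and $C(\bar{\mathbf D})=C_g(\mathbf A)$. Moreover, $\|\bar{\mathbf S}_i[*,Y]\|_2$ (resp.\ $\|\bar{\mathbf D}\|_F$) equals the norm of its contributing part, so the bound the hypothesis delivers is automatically relative to the contributing rows. Zeroing $\Delta\bar{\mathbf D}$ on the dangling rows leaves $\tilde{\mathbf A}$ unchanged and only decreases norms. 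After that, your $\sigma_{\min}^{+}$ lower bound applies, as do the paper's Lemmas~\ref{lemma:join_condition_number_columnwise} and~\ref{lemma:join_condition_number_fnorm} on the calibrated restriction, and the rest of your proof goes through.
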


\begin{figure}[tb]
  \centering
  \begin{tikzpicture}[font=\small, line cap=round, line join=round]
    \coordinate (A) at (0,2);
    \coordinate (T) at (0,0);
    \coordinate (B) at (1.6,0);
    \coordinate (R) at (7,1.5);

    \draw[-,semithick, dotted](A) -- node[midway,left=3pt] {$\|\mathbf{A}-\mathbf{\tilde{\mathbf{A}}}\|_F\leq\rho\cdot\|\mathbf A\|_F$} (T);
    \draw[-,semithick, dotted] (T) -- node[midway,below=7pt] {} (B);
    \draw[-,semithick, dotted] (A) -- node[midway,right=3pt] {$\|\tilde{\mathbf A}+\Delta\mathbf A^\perp-\mathbf A\|_F$} (B);
    \draw[-{Stealth}, semithick, dashed](A) -- node[midway, above=1pt] {$C_g$} (R);
    \draw[-{Stealth}, semithick] (B) -- node[midway, below=1pt] {$g$} (R);

    \fill (A) circle (1.2pt) node[left=1pt] {$\mathbf A$};
    \fill (T) circle (1.2pt) node[left=6pt] {$\tilde{\mathbf A}$};
    \fill (B) circle (1.2pt) node[below=3pt, right=5pt] {$\tilde{\mathbf A}+\Delta\mathbf A^\perp\quad\|\Delta\mathbf{A}^\perp\|_F\leq\beta\cdot\|\tilde{\mathbf{A}}\|_F\leq\beta\cdot(1+\rho)\cdot\|\mathbf A\|_F$ };
    \fill (R) circle (1.2pt)
    node[below, xshift=40pt, yshift=-3pt]{$\hat{z}=C_g(\mathbf{A})=g(\tilde{\mathbf{A}}+\Delta\mathbf{A}^\perp)$};

  \end{tikzpicture}
\caption{Geometric interpretation of the Frobenius-norm-wise backward-error bound of $C_g$ in Theorem~\ref{theorem:pbs_implies_bs}. We let $\rho=\kappa_\mathrm{join}(\mathbf D)\cdot\alpha$. The quantities $\|\mathbf{A}-\tilde{\mathbf{A}}\|_F$ and $\|\Delta\mathbf{A}^\perp\|_F$ are the lengths of the perpendicular legs of a right triangle, whose hypotenuse length is the Frobenius norm of the total perturbation $\mathbf E=\tilde{\mathbf A}+\Delta\mathbf A^\perp-\mathbf A$.}
  \label{fig:pbs-to-bs-geometry}
\end{figure}

The Frobenius-norm-wise bound in Theorem~\ref{theorem:pbs_implies_bs} admits a geometric explanation (Fig.~\ref{fig:pbs-to-bs-geometry}). Let $\rho=\kappa_\mathrm{join}(\mathbf D)\cdot\alpha$, where $\alpha$ satisfies $\|\Delta\mathbf{D}\|_F\leq\alpha\cdot\|\mathbf{D}\|_F$. Lemma~\ref{lemma:join_condition_number_fnorm} and the normal-error bound $\beta$ give\footnote{We assume here that the input database is fully reduced (Lemma~\ref{lemma:join_condition_number_fnorm} applies to fully reduced databases). Theorem~\ref{theorem:pbs_implies_bs} holds for uncalibrated databases, too (cf.\@ its proof in Appendix~\ref{sec:pbs-bs-connection-proof}).}
$\|\tilde{\mathbf A}-\mathbf A\|_F \leq\rho\cdot\|\mathbf A\|_F$ and $\|\Delta\mathbf A^\perp\|_F \leq\beta\cdot\|\tilde{\mathbf A}\|_F \leq\beta\cdot(1+\rho)\cdot\|\mathbf A\|_F$.
These two perturbations are orthogonal to each other because $\tilde{\mathbf A}-\mathbf A\in\mathcal Y_0$ and $\Delta\mathbf A^\perp\in\mathcal Y_0^\perp$. Their sum $\mathbf E=\tilde{\mathbf A}+\Delta\mathbf A^\perp-\mathbf A$ therefore represents the hypotenuse of the right triangle in Fig.~\ref{fig:pbs-to-bs-geometry}. By the Pythagorean theorem,
$\|\mathbf E\|_F^2=\|\mathbf A-\tilde{\mathbf A}\|_F^2+\|\Delta\mathbf A^\perp\|_F^2\leq\bigl(\rho^2+\beta^2\cdot(1+\rho)^2\bigr)\cdot\|\mathbf A\|_F^2$. Since $C_g(\mathbf A)=g(\mathbf A+\mathbf E)$ and $\|\mathbf E\|_F\leq\sqrt{\rho^2+\beta^2\cdot(1+\rho)^2}\cdot\|\mathbf A\|_F$, the relative backward-error bound of $C_g$ is $\sqrt{\rho^2+\beta^2\cdot(1+\rho)^2}$.
The column-wise error bound follows from the same geometric argument.

\nop{
We briefly explain the perturbation bounds in Theorem~\ref{theorem:pbs_implies_bs}. In the entry-wise case, let $a$, $\tilde a$, and $\Delta a^\perp$ denote corresponding entries of $\mathbf A[i,j]$, $\tilde{\mathbf A}[i,j]$, and $\Delta\mathbf A^\perp[i,j]$ for arbitrary $i$ and $j$. Since $C$ is projected backward stable, $|\tilde a-a|\leq\alpha\cdot|a|$ and $|\Delta a^\perp|\leq\beta\cdot|\tilde a|$. so $|\tilde a|\leq(1+\alpha)\cdot|a|$. Hence
\[ |\tilde a+\Delta a^\perp-a| \leq \alpha\cdot|a|+\beta\cdot(1+\alpha)\cdot|a| = \bigl(\alpha+\beta\cdot(1+\alpha)\bigr)\cdot|a|, \]
where the coefficient $\bigl(\alpha+\beta\cdot(1+\alpha)\bigr)$ is precisely the entry-wise backward error bound for $g$.

For the Frobenius-norm-wise bound, since $\|\Delta\mathbf{D}\|_F\leq\alpha\cdot\|\mathbf{D}\|_F$, let $\rho=\kappa_\mathrm{join}(\mathbf D)\cdot\alpha$. Lemma~\ref{lemma:join_condition_number_fnorm} and the normal-error bound $\beta$ gives\footnote{In this example we assume the input database is fully reduced, as Lemma~\ref{lemma:join_condition_number_fnorm} applies to fully reduced databases. However, Theorem~\ref{theorem:pbs_implies_bs} works on uncalibrated databases too. Details can be found in its proof in Appendix~\ref{sec:pbs-bs-connection-proof}.}
\[ \|\tilde{\mathbf A}-\mathbf A\|_F \leq\kappa_\mathrm{join}\cdot\alpha\cdot\|\mathbf A\|_F, \qquad \|\Delta\mathbf A^\perp\|_F \leq\beta\cdot\|\tilde{\mathbf A}\|_F \leq\beta\cdot(1+\kappa_\mathrm{join}\cdot\alpha)\|\mathbf A\|_F. \]
These two perturbations are orthogonal to each other because $\tilde{\mathbf A}-\mathbf A\in\mathcal Y_0$ and $\Delta\mathbf A^\perp\in\mathcal Y_0^\perp$. Their sum $\mathbf E=\tilde{\mathbf A}+\Delta\mathbf A^\perp-\mathbf A$ is therefore the hypotenuse of the right triangle in Figure~\ref{fig:pbs-to-bs-geometry}. Let $\rho=\alpha\cdot\kappa_\mathrm{join}(\mathbf D)$. By The Pythagorean Theorem,
\[ \|\mathbf E\|_F = \sqrt{\|\tilde{\mathbf A}-\mathbf A\|_F^2+\|\Delta\mathbf A^\perp\|_F^2} \leq\sqrt{\rho^2+\beta^2\cdot(1+\rho)^2}\cdot\|\mathbf A\|_F. \]
The column-wise bound follows from the same geometry in each column, with $\kappa_\mathrm{join}(\mathbf S_i)$ in place of $\kappa_\mathrm{join}(\mathbf D)$.
}

\begin{restatable}{theorem}
{bsimpliespbs}
\label{theorem:bs_implies_pbs}
Consider the setting in Def.~\ref{def:eval-setting} and let $C$ be the computation of the join query followed by the computation $C_g$ of a numerical operation over the join matrix $\vect A$.  For $0\leq\epsilon<1$, the following hold:
\begin{itemize}
    \item If $C_g$ is entry-wise $\epsilon$-backward stable, then $C$  is entry-wise $(\epsilon, \frac{2\epsilon}{1-\epsilon})$-projected backward stable.
    \item If $C_g$ is column-wise $\epsilon$-backward stable, then  $C$ is column-wise $(\kappa_{\mathrm{join}}(\vect S_i)\cdot\epsilon, \frac{\epsilon}{\sqrt{1-\epsilon^2}})$-projected backward stable for each input data column in $\vect S_i$ and $i\in[r]$;
    \item If $C_g$ is Frobenius-norm-wise $\epsilon$-backward stable, then $C$ is Frobenius-norm-wise\\ $(\kappa_{\mathrm{join}}(\vect D)\cdot\epsilon, \frac{\epsilon}{\sqrt{1-\epsilon^2}})$-projected backward stable.
\end{itemize}


\end{restatable}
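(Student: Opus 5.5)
Given that $C_g$ is $\epsilon$-backward stable, we have $C_g(\mathbf A)=g(\mathbf A+\Delta\mathbf A)$ for some perturbation $\Delta\mathbf A$ of the stated type. The plan is to split this perturbation orthogonally with respect to the join space $\mathcal Y_0$ (Lemma~\ref{lemma:join_consistent_subspace}). Set $\tilde{\mathbf A}=P_{\mathcal Y_0}(\mathbf A+\Delta\mathbf A)=\mathbf A+P_{\mathcal Y_0}(\Delta\mathbf A)$ and $\Delta\mathbf A^\perp=(\mathbf A+\Delta\mathbf A)-\tilde{\mathbf A}=P_{\mathcal Y_0^\perp}(\Delta\mathbf A)$. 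Then $\hat z=g(\tilde{\mathbf A}+\Delta\mathbf A^\perp)$ holds by construction. Since $\tilde{\mathbf A}-\mathbf A$ lies in the join space, Prop.~\ref{proposition:condition_number_translation_with_dangling_tuples} supplies $\Delta\mathbf D$ with $f(\mathbf D+\Delta\mathbf D)=\tilde{\mathbf A}$. What remains is to bound $\Delta\mathbf D$ relative to $\mathbf D$ and $\Delta\mathbf A^\perp$ relative to $\tilde{\mathbf A}$, separately in each of the three norms.

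\emph{Frobenius and column-wise cases.} The projection onto $\mathcal Y_0$ is orthogonal, and by Lemma~\ref{lemma:projection_is_taking_mean} it averages entries that are copies of the same input value. Averaging acts within each column, so the decomposition is orthogonal both column by column and globally. It follows that $\|\tilde{\mathbf A}-\mathbf A\|\le\|\Delta\mathbf A\|\le\epsilon\|\mathbf A\|$, and Prop.~\ref{proposition:condition_number_translation_with_dangling_tuples} with $\gamma=\epsilon$ then gives the factor $\kappa_{\mathrm{join}}(\mathbf S_i)\epsilon$ or $\kappa_{\mathrm{join}}(\mathbf D)\epsilon$. For $\beta$, I would apply Pythagoras a second time, now to the vector $\mathbf A+\Delta\mathbf A$. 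Because $\mathbf A\in\mathcal Y_0$, we have $\Delta\mathbf A^\perp=P_{\mathcal Y_0^\perp}(\mathbf A+\Delta\mathbf A)$, and therefore $\|\tilde{\mathbf A}\|^2+\|\Delta\mathbf A^\perp\|^2=\|\mathbf A+\Delta\mathbf A\|^2$. In addition, $\|\Delta\mathbf A^\perp\|$ is at most the distance from $\mathbf A+\Delta\mathbf A$ to $\mathcal Y_0$, which is at most $\|\Delta\mathbf A\|\le\epsilon\|\mathbf A\|$. Set $s=\|\Delta\mathbf A^\perp\|$ and $t=\|\tilde{\mathbf A}\|$; the aim is $s\le\frac{\epsilon}{\sqrt{1-\epsilon^2}}\,t$. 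This step is the delicate one. Writing $\mathbf A+\Delta\mathbf A=\tilde{\mathbf A}+\Delta\mathbf A^\perp$, the point $\mathbf A$ lies in $\mathcal Y_0$ within distance $\epsilon\|\mathbf A\|$ of $\tilde{\mathbf A}+\Delta\mathbf A^\perp$. Hence $s^2+\|\tilde{\mathbf A}-\mathbf A\|^2\le\epsilon^2\|\mathbf A\|^2$. Using $\|\mathbf A\|\le t+\|\tilde{\mathbf A}-\mathbf A\|$ alone is too crude. The plan is instead to maximise $s$ subject to this constraint, which is a two-dimensional geometry problem: the worst case is when $\mathbf A+\Delta\mathbf A$ lies on the cone of half-angle $\arcsin\epsilon$ around $\mathbf A$. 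Its angle to $\mathcal Y_0$ is then at most $\arcsin\epsilon$, and the tangent of that angle, $\epsilon/\sqrt{1-\epsilon^2}$, bounds $s/t$. The column-wise case runs the same argument with each column of $\mathbf A$ in place of $\mathbf A$, using per-column orthogonality.

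\emph{Entry-wise case.} Here $|\Delta\mathbf A|\le\epsilon|\mathbf A|$. Averaging copies of a value $a$ gives $\tilde a=a+\text{mean}(\delta_k)$ with $|\text{mean}(\delta_k)|\le\epsilon|a|$. So $\Delta\mathbf D$ is exactly this mean perturbation, which gives $|\Delta\mathbf D|\le\epsilon|\mathbf D|$ with no condition number. Each normal entry is $\delta_k-\text{mean}(\delta)$, of magnitude at most $2\epsilon|a|$, while $|\tilde a|\ge(1-\epsilon)|a|$. This yields $\frac{2\epsilon}{1-\epsilon}$, exactly as in the proof of Prop.~\ref{proposition:matvec_proj_b_s}, which I would follow verbatim. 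I expect the main obstacle to be making the tangent-angle bound rigorous, in particular when $\tilde{\mathbf A}$ is small or when the database is not fully reduced (dangling rows contribute nothing to $\mathbf A$ and can be left unperturbed).
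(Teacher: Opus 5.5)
\paragraph{Verdict.} Your proposal is correct, and its skeleton is the paper's proof. It uses the same orthogonal split $\tilde{\mathbf A}=P_{\mathcal Y_0}(\mathbf A+\Delta\mathbf A)$. In the entry-wise case it uses the same mean-averaging argument, with the identical estimates $2\epsilon|a|$ and $(1-\epsilon)|a|$. For the database side of the column-wise and Frobenius cases it uses the same Pythagoras step followed by Prop.~\ref{proposition:condition_number_translation_with_dangling_tuples}.

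\paragraph{The bound on $\beta$.} The only real divergence is how you obtain $\beta=\epsilon/\sqrt{1-\epsilon^2}$. The paper sets $\lambda=\|\tilde{\mathbf A}-\mathbf A\|/\|\mathbf A\|\le\epsilon$. It then bounds $s^2\le(\epsilon^2-\lambda^2)\|\mathbf A\|^2$ by Pythagoras and $t\ge(1-\lambda)\|\mathbf A\|$ by the triangle inequality. Finally it maximises $\sqrt{\epsilon^2-\lambda^2}/(1-\lambda)$ over $\lambda\in[0,\epsilon]$ via the identity $\frac{\epsilon^2}{1-\epsilon^2}-\frac{\epsilon^2-\lambda^2}{(1-\lambda)^2}=\frac{(\lambda-\epsilon^2)^2}{(1-\epsilon^2)(1-\lambda)^2}$, with the maximum at $\lambda=\epsilon^2$.

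Your remark that the triangle inequality is ``too crude'' is therefore only true if you decouple it from the Pythagorean constraint. Once coupled, it loses nothing, because it is tight at the extremal configuration where $\tilde{\mathbf A}=(1-\epsilon^2)\mathbf A$.

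\paragraph{Making the cone argument rigorous.} Your cone argument is the geometric form of the same optimisation, and it closes in two lines. Put $\mathbf v=\mathbf A+\Delta\mathbf A$.
\begin{itemize}
\item You have $\langle\mathbf v,\mathbf A\rangle\ge(1-\epsilon)\|\mathbf A\|^2>0$ and $\|\mathbf A\|\sin\angle(\mathbf v,\mathbf A)=\operatorname{dist}(\mathbf A,\operatorname{span}\mathbf v)\le\epsilon\|\mathbf A\|$. Together these give $\angle(\mathbf v,\mathbf A)\le\arcsin\epsilon$.
\item Since $\Delta\mathbf A^\perp\perp\mathbf A$, Cauchy--Schwarz gives $\|\tilde{\mathbf A}\|\,\|\mathbf A\|\ge\langle\tilde{\mathbf A},\mathbf A\rangle=\langle\mathbf v,\mathbf A\rangle$. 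Hence the angle between $\mathbf v$ and $\mathcal Y_0$, whose tangent is $s/t$, is at most $\angle(\mathbf v,\mathbf A)$.
\end{itemize}
The same line also removes your worry about small $\tilde{\mathbf A}$. It gives $t\ge(1-\epsilon)\|\mathbf A\|>0$ whenever $\mathbf A\neq\mathbf 0$, per column in the column-wise case. If $\mathbf A=\mathbf 0$, then $\Delta\mathbf A=\mathbf 0$ is forced.

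The geometric route explains the constant as $\tan(\arcsin\epsilon)$ and shows why $\epsilon<1$ is needed. The paper's route is a purely mechanical check.

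\paragraph{Two small corrections.}
\begin{itemize}
\item The claim that averaging acts within each column needs the no-self-join hypothesis of Def.~\ref{def:eval-setting}: each input value is copied into a single column. You should cite it.
\item You cannot follow the proof of Prop.~\ref{proposition:matvec_proj_b_s} as a template for the entry-wise case. In the paper that proposition is obtained as a corollary of this very theorem. Your own entry-wise sketch is already complete without it.
\end{itemize}
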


The condition $\epsilon<1$ in Theorem~\ref{theorem:bs_implies_pbs} is necessary. We explain this next for the Frobenius-norm-wise bound by showing that when $\epsilon\geq1$, there can be no finite bound $\beta$ for the normal perturbation.
Assume $\mathbf A\in\mathcal Y_0\setminus\{\mathbf 0\}$ and $\mathcal Y_0^\perp\neq\{\mathbf 0\}$. Further, assume $C_g$ is $\epsilon$-backward stable for $\epsilon>1$. Consider the scenario where $\tilde{\mathbf A}=\mathbf 0$ and $\Delta\mathbf A^\perp\in\mathcal Y_0^\perp$ with $\|\Delta\mathbf A^\perp\|_F=\sqrt{\epsilon^2-1}\cdot\|\mathbf A\|_F$. By the Pythagorean Theorem, $\|\Delta\mathbf A^\perp-\mathbf A\|_F^2=\|\mathbf A\|_F^2+\|\Delta\mathbf A^\perp\|_F^2=\epsilon^2\cdot\|\mathbf A\|_F^2$ (the geometry is the same as shown in Figure~\ref{fig:pbs-to-bs-geometry}, with the extra condition that $\tilde{\mathbf{A}}=\mathbf{0}$).
Denote $\mathbf E=\tilde{\mathbf A}+\Delta\mathbf A^\perp-\mathbf A$ and we have $C_g(\mathbf{A})=g(\mathbf{A}+\mathbf{E})$, with $\|\mathbf{E}\|_F=\epsilon\cdot\|\mathbf{A}\|_F$, satisfying the backward error bound of $\epsilon$ for $C_g$. But the projected backward stability of $C$ requires a finite $\beta$ satisfying $\|\Delta\mathbf A^\perp\|_F\leq\beta\cdot\|\tilde{\mathbf A}\|_F$, which cannot be fulfilled in our scenario, since the left-hand side is positive and the right-hand side is zero.
If $\epsilon=1$, consider the scenario where $\tilde{\mathbf A}=t\mathbf A$ and $\Delta\mathbf A^\perp\in\mathcal Y_0^\perp$ with $\|\Delta\mathbf A^\perp\|_F=\sqrt{2t-t^2}\,\|\mathbf A\|_F$, where $0<t<1$. By the Pythagorean Theorem, $\|\tilde{\mathbf A}+\Delta\mathbf A^\perp-\mathbf A\|_F^2=\|\Delta\mathbf{A}^\perp\|_F^2+\|\mathbf{A}-\tilde{\mathbf{A}}\|_F^2=\bigl((1-t)^2+2t-t^2\bigr)\cdot\|\mathbf A\|_F^2=\|\mathbf A\|_F^2$. Denote $\mathbf E=\tilde{\mathbf A}+\Delta\mathbf A^\perp-\mathbf A$ and we have $C_g(\mathbf{A})=g(\mathbf{A}+\mathbf{E})$, with $\|\mathbf{E}\|_F=\|\mathbf{A}\|_F$, satisfying the backward error bound of 1 for $C_g$. But $\frac{\|\Delta\mathbf A^\perp\|_F}{\|\tilde{\mathbf A}\|_F}=\sqrt{\frac{2}{t}-1}\to\infty$ as $t\to0$. Thus, there is no finite bound $\beta$ for the normal perturbation for $\epsilon=1$.
Therefore, the condition $\epsilon<1$ is necessary in Theorem~\ref{theorem:bs_implies_pbs}.

Prop.~\ref{proposition:matvec_proj_b_s} is a direct corollary of Theorem~\ref{theorem:bs_implies_pbs}. As the matrix vector product $\mathbf{A}\vect x$ is entry-wise $\tilde{\gamma}_n$-backward stable with respect to $\mathbf{A}$~\cite[Sec.~3.5, Eq.~(3.11)]{Higham2002}, Theorem~\ref{theorem:bs_implies_pbs} states that the entire computation of the join matrix $\mathbf{A}$ followed by $\mathbf{A}\mathbf{x}$ is entry-wise $(\tilde{\gamma}_n, \frac{2\tilde{\gamma}_n}{1-\tilde{\gamma}_n})$-projected backward stable with respect to the input database.


\section{Numerical Stability of QR and SVD}
\label{sec:qr-svd}

We now investigate the stability of the QR and SVD decompositions. In the standard setting, so for  arbitrary input matrices, there are backward stable algorithms for these decompositions. In our database setting with structurally restricted join matrices, backward stability fails while projected backward stability holds. We first discuss the QR and then the SVD.

Assume we want to compute the upper triangular matrix $\hat{\vect{R}}$ from the QR decomposition of a matrix $\vect{A} \in \mathbb{R}^{m \times n}$ using Givens Rotations~\cite{GivensRotations:1958}. Then, there exists an orthogonal matrix $\vect{Q}'$ and a perturbation matrix $\Delta \vect{A}$, such that $\vect{A} + \Delta \vect{A} = \vect{Q}' \hat{\vect{R}}$, where $\Delta \vect{A}$ satisfies the column-wise bounds $\|\Delta \vect{A}[*,j]\|_2 \le \tilde{\gamma}_{m+n-2} \, \|\vect{A}[*,j]\|_2$, for $j \in [n]$~\cite{Higham2002}.

Backward stability with respect to the join matrix does not generally imply backward stability with respect to the input database. Prop.~\ref{proposition:qr_input_relations_counterexample} in Appendix~\ref{sec:qr-svd-proofs} establishes this failure for a common algorithm for QR decomposition. Yet Theorem~\ref{theorem:bs_implies_pbs} together with the backward stability result for QR decomposition~\cite{Higham2002} implies projected backward stability.

\begin{restatable}{corollary}{qrnaiveprojected}[Theorem~\ref{theorem:bs_implies_pbs} and \cite{Higham2002}]
\label{proposition:qr_naive_projected}
Consider the setting in Def.~\ref{def:eval-setting} and let $C$ be the computation that materializes the join matrix $\mathbf A$ and then applies Givens Rotations to compute the upper-triangular matrix $\mathbf{R}$ from the QR decomposition of $\vect A$. If $\tilde{\gamma}_{m+n-2}<1$, then $C$ is column-wise $(\kappa_\mathrm{join}(\mathbf{S}_j)\cdot\tilde{\gamma}_{m+n-2}, \frac{\tilde{\gamma}_{m+n-2}}{\sqrt{1-\tilde{\gamma}_{m+n-2}^2}})$-projected backward stable  for each input data column in $\vect S_j$ and $j\in[r]$. 
\end{restatable}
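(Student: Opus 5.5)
The plan is to instantiate the second item of Theorem~\ref{theorem:bs_implies_pbs} with the classical column-wise backward error bound for Givens QR. I will take $g$ to be the map sending a matrix to the upper-triangular factor of its QR decomposition, and $C_g$ to be the Givens-rotation algorithm applied to the materialized join matrix $\mathbf A$. The join computation itself performs no arithmetic on data values, so the composite computation $C$ is exactly $C_g$ applied to $f(\mathbf D)=\mathbf A$. This is precisely the situation Theorem~\ref{theorem:bs_implies_pbs} addresses.

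The first step is to record the result of~\cite{Higham2002} in the form the theorem needs. It gives an orthogonal $\mathbf Q'$ and a perturbation $\Delta\mathbf A$ with $\mathbf A+\Delta\mathbf A=\mathbf Q'\hat{\mathbf R}$ and $\|\Delta\mathbf A[*,j]\|_2\le\tilde\gamma_{m+n-2}\|\mathbf A[*,j]\|_2$ for every $j\in[n]$. So $\hat{\mathbf R}=g(\mathbf A+\Delta\mathbf A)$, where $g$ is understood as returning the triangular factor of some QR decomposition. Hence $C_g$ is column-wise $\epsilon$-backward stable with $\epsilon=\tilde\gamma_{m+n-2}$.

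The one point to check is that this definitional match is legitimate. QR is not unique: the signs of the rows of $\mathbf R$, and more when the matrix is rank-deficient, can vary. I would therefore define $g$ set-valued, or as "$\hat{\mathbf R}$ is an R-factor of its argument." The proof of Theorem~\ref{theorem:bs_implies_pbs} uses only the existence of a backward perturbation, so this choice does not affect it. I would also note that every column of $\mathbf A$ belongs to exactly one $\mathbf S_j$, so the column-wise bounds are indexed consistently with the data columns of the input relations.

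The final step applies Theorem~\ref{theorem:bs_implies_pbs} under the hypothesis $\tilde\gamma_{m+n-2}<1$, which plays the role of $\epsilon<1$. Its second item gives column-wise $(\kappa_{\mathrm{join}}(\mathbf S_j)\,\epsilon,\ \epsilon/\sqrt{1-\epsilon^2})$-projected backward stability for each input data column in $\mathbf S_j$. Substituting $\epsilon=\tilde\gamma_{m+n-2}$ yields the stated pair of bounds. No real obstacle is expected beyond the well-definedness of $g$ just discussed.
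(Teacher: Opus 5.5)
Your proposal is correct and matches the paper's proof. Both use Higham's column-wise $\tilde\gamma_{m+n-2}$ backward error bound for Givens QR (Lemma~\ref{lemma:qr_backward_original}) to get column-wise $\epsilon$-backward stability of $C_g$, then apply the second item of Theorem~\ref{theorem:bs_implies_pbs} with $\epsilon=\tilde\gamma_{m+n-2}<1$. Your remark on reading $g$ as returning an R-factor of its argument, to handle the non-uniqueness of QR, is a harmless refinement that the paper leaves implicit.
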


A similar guarantee holds for computing the singular values.
LAPACK's \texttt{xGESVD} returns the exact singular values of
$\mathbf A+\Delta \mathbf A$, where
$\|\Delta \mathbf A\|_2\le p(m,n)\cdot u\cdot\|\mathbf A\|_2$, $p(m,n)$ is a
modestly growing dimension-dependent
factor, and $u$ is the unit round-off~\cite[Sec.~4.9]{LAPACK1999}. 
Since $\|\Delta \mathbf A\|_F\le\sqrt{\min(m,n)}\cdot\|\Delta \mathbf A\|_2$
and $\|\mathbf A\|_2\le\|\mathbf A\|_F$, \texttt{xGESVD} is
Frobenius-norm-wise $\epsilon_{\mathrm{svd}}(m,n)$-backward stable,
where $\epsilon_{\mathrm{svd}}(m,n)
  \defeq \sqrt{\min(m,n)}\cdot p(m,n)\cdot u$. Combining this classical result with 
Theorem~\ref{theorem:bs_implies_pbs}, we obtain the following
corollary.

\begin{restatable}{corollary}{svdnaiveprojected}[Theorem~\ref{theorem:bs_implies_pbs} and \cite[Sec.~4.9]{LAPACK1999}]
\label{proposition:svd_naive_projected}
Consider the setting in Def.~\ref{def:eval-setting} and let $C$ be the computation that materializes the join matrix $\mathbf A$ and then applies \texttt{xGESVD} to compute $\mathbf{\Sigma}$, the diagonal matrix containing singular values of $\vect A$. If $\epsilon_{\mathrm{svd}}<1$, then $C$
is Frobenius-norm-wise $\left(
    \kappa_{\mathrm{join}}(\mathbf D)\cdot\epsilon_{\mathrm{svd}},\,
    \frac{\epsilon_{\mathrm{svd}}}{\sqrt{1-\epsilon_{\mathrm{svd}}^2}}
  \right)$-projected backward stable.
\end{restatable}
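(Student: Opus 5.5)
The corollary should follow by direct instantiation of the third item of Theorem~\ref{theorem:bs_implies_pbs}. The plan has two steps. First, verify that the classical LAPACK guarantee makes the computation $C_g$ of $\mathbf{\Sigma}$ a Frobenius-norm-wise $\epsilon_{\mathrm{svd}}$-backward stable computation on the join matrix. Second, observe that $C$ is exactly the composition assumed in Theorem~\ref{theorem:bs_implies_pbs}: the join is materialized exactly, with no arithmetic on data values, and is followed by $C_g$.

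For the first step, I take from \cite[Sec.~4.9]{LAPACK1999} that \texttt{xGESVD} returns $\hat{\mathbf\Sigma}$ equal to the exact singular values of $\mathbf A+\Delta\mathbf A$, with $\|\Delta\mathbf A\|_2\le p(m,n)\cdot u\cdot\|\mathbf A\|_2$. Since $\Delta\mathbf A$ has rank at most $\min(m,n)$, we have $\|\Delta\mathbf A\|_F\le\sqrt{\min(m,n)}\cdot\|\Delta\mathbf A\|_2$. Combining this with $\|\mathbf A\|_2\le\|\mathbf A\|_F$ gives $\|\Delta\mathbf A\|_F\le\epsilon_{\mathrm{svd}}\cdot\|\mathbf A\|_F$. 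Hence, for every input matrix, and in particular for every join matrix in $\mathcal Y_0$, the output is $g(\mathbf A+\Delta\mathbf A)$, where $g$ maps a matrix to its singular values and the perturbation has relative Frobenius size at most $\epsilon_{\mathrm{svd}}$. This is precisely Frobenius-norm-wise $\epsilon_{\mathrm{svd}}$-backward stability in the sense used by Theorem~\ref{theorem:bs_implies_pbs}.

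For the second step, the hypothesis $\epsilon_{\mathrm{svd}}<1$ is exactly the condition $0\le\epsilon<1$ required by Theorem~\ref{theorem:bs_implies_pbs}. Applying its Frobenius-norm-wise item with $\epsilon=\epsilon_{\mathrm{svd}}$ then yields $(\kappa_{\mathrm{join}}(\mathbf D)\cdot\epsilon_{\mathrm{svd}},\,\epsilon_{\mathrm{svd}}/\sqrt{1-\epsilon_{\mathrm{svd}}^2})$-projected backward stability of $C$. Internally, the theorem decomposes $\mathbf A+\Delta\mathbf A$ into its orthogonal projection $\tilde{\mathbf A}$ onto $\mathcal Y_0$ plus a normal component $\Delta\mathbf A^\perp$. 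It then bounds $\|\Delta\mathbf A^\perp\|_F$ relative to $\|\tilde{\mathbf A}\|_F$ using Pythagoras, and pulls $\tilde{\mathbf A}-\mathbf A$ back to a database perturbation via Prop.~\ref{proposition:condition_number_translation_with_dangling_tuples}, which introduces the factor $\kappa_{\mathrm{join}}(\mathbf D)$.

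The only point needing care, and the main potential obstacle, is matching the form of the LAPACK guarantee to the definition of backward stability. The guarantee concerns only the singular values, not the singular vectors, so $g$ must be taken as the map $\mathbf A\mapsto\mathbf\Sigma$. The spectral-norm bound must also be converted into the Frobenius-norm form before Theorem~\ref{theorem:bs_implies_pbs} can be invoked. Once this is done, the corollary is immediate.
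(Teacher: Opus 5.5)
Your proposal is correct and follows the same route as the paper. The paper also converts LAPACK's spectral-norm backward error bound for \texttt{xGESVD} into Frobenius-norm-wise $\epsilon_{\mathrm{svd}}$-backward stability, via $\|\Delta\mathbf A\|_F\le\sqrt{\min(m,n)}\,\|\Delta\mathbf A\|_2$ and $\|\mathbf A\|_2\le\|\mathbf A\|_F$, and then applies the Frobenius-norm-wise item of Theorem~\ref{theorem:bs_implies_pbs} with $\epsilon=\epsilon_{\mathrm{svd}}<1$.
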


Recall that pushing the computation of the Frobenius norm past the join down to the input relations improves the numerical stability by avoiding repeated arithmetic on copied values (Prop.~\ref{proposition:frobenius_norm_input_relations_over_the_join}).
We establish a similar benefit for QR using the \textsc{FiGaRo} algorithm~\cite{FIGARO:VLDBJ:2023}, which pushes the computation of the upper-triangular matrix $\vect R$ past the $\alpha$-acyclic joins and thereby avoids the materialization of the join matrix $\vect A$. In particular, \textsc{FiGaRo} computes an almost upper-triangular matrix $\vect R_0$ such that $\vect A = \vect Q_0 \vect R_0$, where $\vect Q_0$ is some orthonormal matrix defined by a sequence of Givens rotations. The matrix $\vect R_0$ can then be transformed into the upper-triangular matrix $\vect R$ using, e.g., the Givens Rotations algorithm for QR decomposition. Note also that the singular values of $\vect R_0$ are precisely those of $\vect A$.

By pushing the linear operation past the join we obtain a much stronger numerical stability, where the error depends on the number $d$ of rows in the input database instead of the number $m$ of rows in the query result (and the join matrix). Note that $m=\mathcal{O}(d^{\rho^*(Q)})$ for any input database and there are databases for which this is asymptotically tight. Here, $\rho^*(Q)$ is the fractional edge cover number of $Q$~\cite{AGM:2008}. It can be as large as the number $r$ of relations in $Q$. Since $d$ can easily be in the order of hundreds of millions and $r$ in the order of 5-10, the error reduction from $m$ to $d$ brought by \textsc{FiGaRo} can be significant.

\begin{definition} [\textsc{FiGaRo} setting]
\label{def:figaro-setting}
In addition to the database setting from Def.~\ref{def:eval-setting}, in the \textsc{FiGaRo} setting we require calibrated databases, $\alpha$-acyclic queries, and given join trees as input to the \textsc{FiGaRo} algorithm.
\end{definition}

\begin{restatable}{theorem}{QRprojectedstable}
\label{theorem:QR_projected_stable}
Consider the \textsc{FiGaRo} setting in Def.~\ref{def:figaro-setting} and let $C$ be the computation that uses  $\textsc{FiGaRo}$ on the input database followed by Givens Rotations to compute the upper-triangular matrix $\mathbf{R}$ from the QR decomposition of $\vect A$. 
Let $\eta=\tilde{\gamma}_{d}+\tilde{\gamma}_{d+n-2}\cdot\bigl(1+\tilde{\gamma}_{d}\bigr)$. If $\eta<1$, then $C$ is column-wise $\Big(\kappa_\mathrm{join}(\mathbf{S}_j)\cdot\eta, \frac{\eta}{\sqrt{1-\eta^2}}\Big)$-projected backward stable for each input data column in $\vect S_j$ and $j\in[r]$. 
\end{restatable}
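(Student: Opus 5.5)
The plan is to reduce the statement to Theorem~\ref{theorem:bs_implies_pbs}. We show that the composite computation $C_g$ (FiGaRo followed by Givens rotations, viewed as a computation of $\mathbf R$ from the join matrix $\mathbf A$) is column-wise $\eta$-backward stable with respect to $\mathbf A$. Once we have $\mathbf A+\Delta\mathbf A=\mathbf Q\hat{\mathbf R}$ for an orthogonal $\mathbf Q$ with $\|\Delta\mathbf A[*,j]\|_2\le\eta\,\|\mathbf A[*,j]\|_2$ for every column $j$, the column-wise bullet of Theorem~\ref{theorem:bs_implies_pbs} with $\epsilon=\eta<1$ yields exactly the claimed $\bigl(\kappa_\mathrm{join}(\mathbf S_j)\cdot\eta,\ \eta/\sqrt{1-\eta^2}\bigr)$-projected backward stability. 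The whole argument therefore lies in establishing the column-wise backward error $\eta$ for the two-stage computation.

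\emph{Stage 1 (FiGaRo).} FiGaRo computes $\hat{\mathbf R}_0$, and we would show there is an orthogonal $\mathbf Q_0$ and a perturbation $\Delta_1\mathbf A$ with $\mathbf A+\Delta_1\mathbf A=\mathbf Q_0\hat{\mathbf R}_0$ and $\|\Delta_1\mathbf A[*,j]\|_2\le\tilde\gamma_d\|\mathbf A[*,j]\|_2$. The approach is to follow the structure of FiGaRo over the join tree. Each output value is produced by a sequence of scalings by square roots of counts, which are exact integers computed without data arithmetic, and by Givens rotations (heads/tails aggregation) applied along the join tree. For a fixed column, the data values of an input relation undergo a number of rotations and additions that is bounded by the number $d$ of input rows, not by $m$. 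We would model each such operation with the standard $(1+\delta)$ rounding model and accumulate the errors with Higham's lemma on sequences of rotations (Lemma 19.8 style). This gives a column-wise error factor $\tilde\gamma_d$ relative to the norm of the (scaled) column. The scaling by $\sqrt{\phi}$ is designed precisely so that the norm of the column of the intermediate matrix equals $\|\mathbf A[*,j]\|_2$. Since $\mathbf Q_0$ is orthogonal, the error can be pulled back to $\mathbf A$ column-wise.

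\emph{Stage 2 (Givens on $\hat{\mathbf R}_0$).} The matrix $\hat{\mathbf R}_0$ has at most $d$ rows, so the classical result~\cite{Higham2002} gives $\hat{\mathbf R}_0+\Delta_2=\mathbf Q_1\hat{\mathbf R}$ with $\|\Delta_2[*,j]\|_2\le\tilde\gamma_{d+n-2}\|\hat{\mathbf R}_0[*,j]\|_2$. Orthogonal invariance gives $\|\hat{\mathbf R}_0[*,j]\|_2=\|(\mathbf A+\Delta_1\mathbf A)[*,j]\|_2\le(1+\tilde\gamma_d)\|\mathbf A[*,j]\|_2$. Setting $\mathbf Q=\mathbf Q_0\mathbf Q_1$ (after padding $\mathbf Q_1$ to size $m$) and $\Delta\mathbf A=\Delta_1\mathbf A+\mathbf Q_0\Delta_2$, the triangle inequality yields the column bound $\tilde\gamma_d+\tilde\gamma_{d+n-2}(1+\tilde\gamma_d)=\eta$.

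The main obstacle is Stage 1. The difficulty is to justify rigorously that FiGaRo's rounding errors are column-wise backward errors on $\mathbf A$ whose constant depends only on $d$. The rotations act on rows of the conceptual join matrix that contain many copies of the same value, so the perturbation must be expressed via the implicit orthogonal $\mathbf Q_0$ acting on all $m$ rows. We would first try to prove an invariant by induction over the join tree, bottom-up: after processing a subtree, the computed partial result equals an exact orthogonal transform of a column-wise perturbed subjoin matrix, with a perturbation factor $\tilde\gamma_{k}$, where $k$ is the number of input rows processed so far. The induction would use the fact that count-scalings are exact up to one rounding and that composed rotations still fit the standard bound.
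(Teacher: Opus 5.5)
Your proposal is correct and follows essentially the paper's route. The paper also takes a column-wise $\tilde\gamma_d$ error for \textsc{FiGaRo}, proved as a forward-error bound on $\hat{\mathbf R}_0$ and pulled back through the orthogonal $\mathbf Q_0$ (equivalent to your Stage~1 backward formulation), combines it with Givens backward stability on the at most $d$ nonzero rows of $\hat{\mathbf R}_0$, and reaches the column bound $\eta$ by the triangle inequality. The differences are presentational: the paper writes out the projection/Pythagorean step inline instead of citing Theorem~\ref{theorem:bs_implies_pbs}, and it proves Stage~1 stagewise rather than by your join-tree induction, treating each head--tail stage as an orthogonal block matrix with entrywise rounding perturbation, bounding $\||\mathbf G_t|\|_2\le\sqrt{10}$, and combining stages with Higham's Lemma~3.6.
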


\textsc{FiGaRo} can improve the numerical stability of computing the singular values, too.

\begin{restatable}{theorem}{SVDprojectedstable}
\label{theorem:SVD_projected_stable}
Consider the \textsc{FiGaRo} setting in Def.~\ref{def:figaro-setting} and  let $C$ be the computation that uses  $\textsc{FiGaRo}$ on the input database followed by \texttt{xGESVD} to compute the singular values of $\mathbf{A}$.
Let $\eta=\tilde{\gamma}_{d}+\epsilon_\mathrm{svd}(d,n)\cdot\bigl(1+\tilde{\gamma}_{d}\bigr)$. If $\eta<1$, then $C$ is Frobenius-norm-wise $\Big(\kappa_\mathrm{join}(\mathbf{D})\cdot\eta, \frac{\eta}{\sqrt{1-\eta^2}}\Big)$-projected backward stable. 
\end{restatable}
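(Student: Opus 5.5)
The plan mirrors the proof of Theorem~\ref{theorem:QR_projected_stable}: view $C$ as a two-stage computation $C_g=C_2\circ C_1$ over the join matrix. $C_1$ is \textsc{FiGaRo}, which maps $\mathbf A$ (implicitly, via $\mathbf D$) to $\hat{\mathbf R}_0\in\mathbb R^{d'\times n}$ with $d'\le d$ rows. $C_2$ is \texttt{xGESVD} applied to $\hat{\mathbf R}_0$. We first show that $C_g$ is Frobenius-norm-wise $\eta$-backward stable with respect to $\mathbf A$. We then invoke the third item of Theorem~\ref{theorem:bs_implies_pbs} with $\epsilon=\eta<1$, which gives exactly the claimed $\bigl(\kappa_\mathrm{join}(\mathbf D)\cdot\eta,\ \eta/\sqrt{1-\eta^2}\bigr)$-projected backward stability.

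\emph{Step 1 (FiGaRo backward error).} Take from the analysis underlying Theorem~\ref{theorem:QR_projected_stable} the fact that \textsc{FiGaRo} is backward stable on the join matrix. Concretely, there is an orthogonal $\mathbf Q_0$ and a $\Delta\mathbf A_1$ with $\mathbf A+\Delta\mathbf A_1=\mathbf Q_0\hat{\mathbf R}_0$ (padding $\hat{\mathbf R}_0$ with zero rows) and column-wise bound $\|\Delta\mathbf A_1[*,j]\|_2\le\tilde\gamma_d\|\mathbf A[*,j]\|_2$. The reason is that every entry of $\hat{\mathbf R}_0$ comes from $O(d)$ Givens rotations and scalings by square roots of counts, applied per relation along the join tree. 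Summing the squared column bounds gives the Frobenius bound $\|\Delta\mathbf A_1\|_F\le\tilde\gamma_d\|\mathbf A\|_F$. Orthogonal invariance then yields $\|\hat{\mathbf R}_0\|_F=\|\mathbf A+\Delta\mathbf A_1\|_F\le(1+\tilde\gamma_d)\|\mathbf A\|_F$.

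\emph{Step 2 (SVD on the small matrix).} \texttt{xGESVD} on $\hat{\mathbf R}_0\in\mathbb R^{d'\times n}$ returns the exact singular values of $\hat{\mathbf R}_0+\Delta\mathbf R$ with $\|\Delta\mathbf R\|_F\le\epsilon_\mathrm{svd}(d,n)\|\hat{\mathbf R}_0\|_F$. Here we use that $p(\cdot,\cdot)$ is monotone, so $\epsilon_\mathrm{svd}(d',n)\le\epsilon_\mathrm{svd}(d,n)$. The singular values of $\hat{\mathbf R}_0+\Delta\mathbf R$ coincide with those of $\mathbf Q_0(\hat{\mathbf R}_0+\Delta\mathbf R)=\mathbf A+\Delta\mathbf A_1+\mathbf Q_0\Delta\mathbf R$. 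So the computed $\hat{\mathbf\Sigma}$ is the exact singular-value matrix of $\mathbf A+\mathbf E$ with $\mathbf E=\Delta\mathbf A_1+\mathbf Q_0\Delta\mathbf R$. By the triangle inequality and orthogonal invariance,
\[
\|\mathbf E\|_F\le\tilde\gamma_d\|\mathbf A\|_F+\epsilon_\mathrm{svd}(d,n)(1+\tilde\gamma_d)\|\mathbf A\|_F=\eta\|\mathbf A\|_F .
\]
This is the required Frobenius-norm-wise $\eta$-backward stability of $C_g$, and Theorem~\ref{theorem:bs_implies_pbs} finishes the proof.

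The main obstacle is Step 1. We must make precise that \textsc{FiGaRo}'s rounding errors, although incurred on the compact representation, correspond to a perturbation of the full join matrix $\mathbf A$ whose relative size depends on $d$ and not on $m$. I would try first to reuse the per-column lemma from the proof of Theorem~\ref{theorem:QR_projected_stable} verbatim. Its $\tilde\gamma_d$ term is exactly the \textsc{FiGaRo} part of that theorem's $\eta$, so only the QR term $\tilde\gamma_{d+n-2}$ needs to be replaced by $\epsilon_\mathrm{svd}(d,n)$. A minor point is that the padding of $\hat{\mathbf R}_0$ with zero rows and the multiplication by $\mathbf Q_0$ change neither the singular values nor the Frobenius norms.
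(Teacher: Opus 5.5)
Your proposal is correct and follows the paper's proof essentially step for step. The paper likewise combines the data-independent $\mathbf Q_0$ with the column-wise \textsc{FiGaRo} forward bound (your $\Delta\mathbf A_1$ is exactly $\mathbf Q_0[\hat{\mathbf R}_0-\mathbf R_0;\mathbf 0]$), adds the \texttt{xGESVD} backward error on the matrix with at most $d$ rows to get a perturbation $\mathbf E_A$ with $\|\mathbf E_A\|_F\le\eta\|\mathbf A\|_F$, and projects onto $\mathcal Y_0$. The only difference is that the paper repeats the Case~3 argument of Theorem~\ref{theorem:bs_implies_pbs} inline, because $C$ never materializes $\mathbf A$ and so is not literally ``join followed by $C_g$''; you cite the theorem instead, which is harmless since its argument is per-instance and carries over unchanged.
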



\nop{
\begin{restatable}{corollary}{qrnaiveprojected}[Theorem~\ref{theorem:bs_implies_pbs} and \cite{Higham2002}]
\label{proposition:qr_naive_projected}
Consider the setting in Def.~\ref{def:eval-setting}, $f(\mathbf{D})=\vect A$, and $g(\vect A) = \vect R$, where $\vect R$ is the upper triangular matrix obtained from the QR decomposition of $\vect A$. 
Consider the computation $C$ of $g\circ f$ that materializes $\mathbf A$ and applies Givens Rotations to $\vect A$. If $\tilde{\gamma}_{m+n-2}<1$, this computation is column-wise $(\kappa_\mathrm{join}(\mathbf{S}_i)\cdot\tilde{\gamma}_{m+n-2}, \frac{\tilde{\gamma}_{m+n-2}}{\sqrt{1-\tilde{\gamma}_{m+n-2}^2}})$-projected backward stable  for each input data column in $\vect S_i$ and $i\in[r]$. 
\end{restatable}
}

\nop{
\begin{theorem}
\label{theorem:qr_backward_original}
Let $\hat{\vect{R}}$ be the computed upper triangular matrix obtained by performing the QR decomposition of $\vect{A} \in \mathbb{R}^{m \times n}$ using a sequence of Givens rotations. Then there exists an orthogonal matrix $\vect{Q}'$ and a perturbation matrix $\Delta \vect{A}$, such that $\vect{A} + \Delta \vect{A} = \vect{Q}' \hat{\vect{R}}$, where $\Delta \vect{A}$ satisfies the column-wise bounds $\|\Delta \vect{A}[*,j]\|_2 \le \tilde{\gamma}_{m+n-2} \, \|\vect{A}[*,j]\|_2,  j \in [n]$, where the constant inside $\tilde{\gamma}_{m+n-2}$ is a modest integer constant independent of $m$ and $n$.
\end{theorem}

Therefore, the computation of the upper triangular factor in the QR decomposition is backward stable. It produces the exact upper triangular matrix of a matrix that differs from $\mathbf{A}$ by a small relative perturbation.
}

\nop{
\textsc{FiGaRo} is a factorized Givens-rotation algorithm for QR decomposition over matrices defined by database joins~\cite{FIGARO:VLDBJ:2023}. Rather than first materializing the join matrix $\mathbf{A}$ and then applying Givens rotations row by row, \textsc{FiGaRo} pushes these rotations past the join and applies them in block form to repeated tuple patterns. In exact arithmetic, these block transformations have the same effect as the corresponding sequence of Givens rotations on the materialized join, but they reduce $\mathbf{A}$ to an almost upper-triangular matrix $\mathbf{R}_0$ with at most $d$ rows, where $d$ is the total number of rows in the input database. This reduction takes $O(dn)$ time and preserves the singular values.
Applying Givens QR to the computed reduction yields the following
guarantee.
}

\nop{
Consider the \textsc{FiGaRo} setting in Def.~\ref{def:figaro-setting}, $f(\mathbf{D})=\mathbf{A}$, and $g(\mathbf{A})=\mathbf{R}$, where $\mathbf{R}$ is the upper triangular matrix obtained from the QR decomposition of $\mathbf{A}$. Consider the computation $C$ of $g\circ f$ using $\textsc{FiGaRo}$ followed by Givens Rotations.
Let $\eta=\tilde{\gamma}_{d}+\tilde{\gamma}_{d+n-2}\cdot\bigl(1+\tilde{\gamma}_{d}\bigr)$. If $\eta<1$, then $C$ is column-wise $\Big(\kappa_\mathrm{join}(\mathbf{S}_j)\cdot\eta, \frac{\eta}{\sqrt{1-\eta^2}}\Big)$-projected backward stable for each input data column in $\vect S_j$ and $j\in[r]$. 
}

\nop{
For $(d+n)u\ll1$, the rounding factor in this bound is
$\eta=O((d+n)u)$, compared with $O((m+n)u)$ for materialized
Givens QR. Thus, both components of projected backward stability
replace the dimension factor $m+n$ by $d+n$.
The relation condition numbers still account for the join
multiplicities.
}

\section{Conclusion}
\label{sec:conclusion}

\nop{
shows that backward stability of the linear algebra computation with respect to the materialized join matrix may not translate to backward stability of the combined relational-linear algebra computations with respect to the input database, because floating-point perturbations can violate the join dependencies in the join matrix. To address this structural mismatch, it introduces the new notion of projected backward stability and establishes its connection to the classical notion of backward stability. The paper also introduces the new concept of database condition number that captures the amplification factor due to the join query of the perturbation produced by the linear algebra computation over the join matrix. The database condition number is specific to the problem (database and join query) and can be related to the classical condition number of the matrices that transform the input database into the join matrix. Finally, by pushing the linear algebra computation past the join, such as the QR decomposition and Singular Value Decomposition, it gives bounds on the numerical error that depend on the database condition number and the input database size as opposed to the size of the join matrix.
}

This paper initiates the study of numerical stability for computations that combine relational algebra and linear algebra. It introduces the fundamental concepts of projected backward stability and database condition number to allow making formal statements about the stability of numerical operations over matrices defined by database join queries.

The framework introduced in this paper is a fertile ground for future work. We mention below a few directions:
\begin{itemize}
    \item Analyze the numerical stability in the context of more complex queries, e.g., different join types and even self-joins. 
    Every column-wise error bound in the formal statements of this paper fails in the presence of self-joins. 
    The presence of self-joins does not make a difference, however, for our entry-wise and Frobenius-norm-wise error bounds.   
    \item Determine tight or optimal bounds for projected backward stability. Our current analysis establishes upper bounds on the database and orthogonal perturbations, but does not show whether these bounds are tight or characterize the best achievable pair of bounds.
    \item Extend the framework to analytical pipelines, e.g., series-parallel graphs whose nodes are relational and linear algebra operators. How do the structural and numerical perturbations compose across such pipelines? Develop a syntax-directed, compositional method that derives global error bounds from the operators in the pipeline, while avoiding a numerical analysis for every combination of relational and linear-algebraic operations.
    \item Our work examines numerical stability with respect to perturbations of data values. An alternative view is that it investigates the sensitivity of (relational and linear) algebraic data transformations: How much the input data must change to account for a change in the output of a given scale. This perspective can inform what-if analysis of computational pipelines: understanding whether a desired change in the output can be realized through an input perturbation, and how large this perturbation must be. This view relates to prior work~\cite{WalenzY16} that studies how varying query parameters on a fixed database affects query results, with applications to computational fact-checking.     
\end{itemize}

\bibliography{bibtex}

\newpage

\appendix

This appendix complements the main paper with further preliminaries and proofs of the formal statements.

\section{Further Preliminaries}
\label{app:preliminaries}

We review the notions of normed and Hilbert spaces used in our
definition of projected backward stability (Definition~\ref{def:proj_b_s}). A \emph{normed space} is a
vector space $\mathcal{X}$ equipped with a norm
$\|\cdot\|:\mathcal{X}\to\mathbb{R}_{\geq 0}$ satisfying, for all
$x,y\in\mathcal{X}$ and $\alpha\in\mathbb{R}$,
\[
\|x\|=0 \iff x=0,
\qquad
\|\alpha x\|=|\alpha|\,\|x\|,
\qquad
\|x+y\|\leq \|x\|+\|y\|.
\]
The norm provides a notion of magnitude, and hence allows us to quantify
the size of perturbations such as $\Delta x$ through quantities such as
$\|\Delta x\|/\|x\|$.

A \emph{finite-dimensional Hilbert space} is a finite-dimensional
vector space $\mathcal{Y}$ equipped with an inner product
$\langle\cdot,\cdot\rangle$. The inner product induces a norm
\[
\|y\| = \sqrt{\langle y,y\rangle}.
\]
The inner product also defines orthogonality: two vectors
$y_1,y_2\in\mathcal{Y}$ are orthogonal if
$\langle y_1,y_2\rangle=0$. For a subspace
$\mathcal{Y}_0\subseteq\mathcal{Y}$, its \emph{orthogonal complement} is
\[
\mathcal{Y}_0^\perp
=
\{y\in\mathcal{Y}:
\langle y,y_0\rangle=0
\text{ for all }y_0\in\mathcal{Y}_0\}.
\]
Because $\mathcal{Y}$ is finite-dimensional, every subspace
$\mathcal{Y}_0$ is closed and every $y\in\mathcal{Y}$ has a unique
orthogonal decomposition
\[
y = y_0 + y_0^\perp,
\qquad
y_0\in\mathcal{Y}_0,\quad
y_0^\perp\in\mathcal{Y}_0^\perp.
\]
Equivalently, there is a unique orthogonal projection
$P_{\mathcal{Y}_0}:\mathcal{Y}\to\mathcal{Y}_0$ such that
\[
P_{\mathcal{Y}_0}y=y_0,
\qquad
(I-P_{\mathcal{Y}_0})y=y_0^\perp.
\]
Thus, in our setting, the finite-dimensional Hilbert space structure
provides both a well-defined notion of magnitude through the norm and a
well-defined notion of orthogonal projection, which we use to separate
the component in $\mathcal{Y}_0$ from the off-range component in
$\mathcal{Y}_0^\perp$.
\section{Proofs of Statements from the Main Body of the Paper}
\label{sec:app-theorem_proofs}

\subsection{Proofs of Statements from Section~\ref{sec:bs}}
Here we provide the proofs of the formal statements in Section~\ref{sec:bs}.
\frobeniusnorminputrelationsoverthejoin*

\begin{proof}

Let $z=f(\mathbf D)=\|\mathbf A\|_F$ and $\hat z=C(\mathbf D)$. Compute the multiplicity $\phi_k(i)$ of each row $i$ of $S_k$
in the join result. Then
\[
z=\sqrt{\sum_{k=1}^r\sum_i\sum_j
\phi_k(i)\mathbf S_k[i,j]^2}.
\]

Once the multiplicities have been computed exactly, evaluating this
expression requires $O(\ell)$ arithmetic operations. Since all
terms are nonnegative and the square root is forward stable,
the standard rounding model gives $|\hat z-z| \leq \tilde\gamma_{\ell} z$.

If $z=0$, then $\hat z = 0$, and we set $\Delta \vect D = \vect 0$. Otherwise, we write $\hat z-z = \theta z$, where $|\theta|\leq \tilde\gamma_{\ell}$. Since $\hat z\geq 0$, we have $1+\theta = \hat z/z\geq 0$. Define the database perturbation $\Delta \vect D = \theta\vect D$. 

This scales every numerical data entry by $\theta$ to form its
perturbation, while preserving the join keys, row counts, and row
order. Thus $\mathbf D+\Delta\mathbf D=(1+\theta)\mathbf D$
on the numerical data entries. The join structure and all
multiplicities $\phi_k(i)$ remain unchanged. Consequently,
\[
f(\mathbf D+\Delta\mathbf D)
=\sqrt{\sum_{k=1}^r\sum_i\sum_j
\phi_k(i)(1+\theta)^2\mathbf S_k[i,j]^2}=|1+\theta|z
 =(1+\theta)z
 =\hat z.
\]

Finally, entry-wise, $|\Delta\mathbf D|
=|\theta|\,|\mathbf D|
\leq\tilde\gamma_\ell\cdot|\mathbf D|$. This proves backward stability with respect to the input database.
\end{proof}

\subsection{Proofs of Statements from Section~\ref{sec:projected-bs}}
\label{app:projected-bs}

Here we provide the proofs of the formal statements in Section~\ref{sec:projected-bs}.

\begin{lemma}
\label{lemma:join_consistent_subspace}
Consider our setting in Definition~\ref{def:eval-setting} and let $\mathbf{A}$ be the query matrix, and $\mathcal Y_0\subseteq\mathbb R^{m\times n}$ be the query space. Then $\mathcal Y_0$ is a linear subspace of $\mathbb R^{m\times n}$.
\end{lemma}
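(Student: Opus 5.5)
The plan is to exhibit $\mathcal Y_0$ as the image of a linear map, since the image of a vector space under a linear map is a linear subspace. By definition, $\mathcal Y_0$ is the set of join matrices $f(\mathbf D+\Delta\mathbf D)$, where $\Delta\mathbf D=(\Delta\mathbf S_1,\dots,\Delta\mathbf S_r)$ ranges over all perturbations that keep the schema and the join keys of $\mathcal D$ fixed and change only the data values.

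First, observe that the perturbed data $\mathbf D+\Delta\mathbf D$ ranges over the whole space $\mathcal X=\mathbb R^{d_1\times n_1}\times\cdots\times\mathbb R^{d_r\times n_r}$. Indeed, any choice of data values is obtained by taking $\Delta\mathbf S_i=\mathbf S_i'-\mathbf S_i$. So $\mathcal Y_0=\{f(\mathbf D'):\mathbf D'\in\mathcal X\}$, where every $\mathbf D'$ carries the same join keys as $\mathcal D$.

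Second, I show that on this set $f$ is linear. Because the join keys are unchanged, the join produces the same set of result tuples, up to their data values. Each result row $k\in[m]$ is formed from exactly one row $j$ of each $\mathbf S_i$, and which row that is depends only on the keys. This is precisely the expansion matrix $\mathbf E_i$ from Def.~\ref{def:expansion-matrices}, and it is independent of the data. Note that Def.~\ref{def:expansion-matrices} comes after this lemma in the text, but it only depends on the key structure, so I will define it inline if needed.

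Third, I use this to write the join matrix as
$f(\mathbf D')=[\mathbf E_1\mathbf S_1'\ \cdots\ \mathbf E_r\mathbf S_r']$,
with the row order fixed. This expression is linear in $(\mathbf S_1',\dots,\mathbf S_r')$. Hence $\mathcal Y_0$ is the image of $\mathcal X$ under the linear map $L(\mathbf S_1',\dots,\mathbf S_r')=[\mathbf E_1\mathbf S_1'\ \cdots\ \mathbf E_r\mathbf S_r']$. It therefore contains $\mathbf 0$ and is closed under addition and scalar multiplication, so it is a linear subspace of $\mathbb R^{m\times n}$.

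The main obstacle is making rigorous that the result tuples and their correspondence to input rows do not depend on the data values. Two cases need care. The first is when distinct input tuples have the same key but different data: these tuples are distinct rows, and a perturbation might make them coincide, which under set semantics could merge result tuples. I would handle this by treating input rows as labeled, that is, with bag or identifier semantics as the paper does ("distinct rows"), so that $m$ and the row labels stay fixed. The second is the absence of self-joins, which guarantees that each $\mathbf S_i$ contributes its own disjoint block of columns, so that the columns of $L$ are well defined.
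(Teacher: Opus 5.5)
Your proposal is correct and follows essentially the same route as the paper. The paper observes that, with the row labels, keys and join structure fixed, every join-matrix entry is a copy of a fixed input entry, so the data-to-join-matrix map is linear and $\mathcal Y_0$ is closed under linear combinations; your formula $[\mathbf E_1\mathbf S_1'\ \cdots\ \mathbf E_r\mathbf S_r']$ simply makes that linear map explicit and reads $\mathcal Y_0$ off as its image.
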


\begin{proof}
With the row labels, key columns, and join structure fixed, every entry of a join matrix is a copy of a fixed numerical entry of one of the input relations in the database. Hence the map from the numerical entries of $S_1,\ldots,S_r$ to the materialized join matrix is linear. Therefore, if $\mathbf A,\mathbf B\in\mathcal Y_0$ and $\alpha,\beta\in\mathbb R$, then $\alpha\mathbf A+\beta\mathbf B$ is the join matrix obtained by applying the same linear combination to the corresponding numerical entries of the input database. Thus $\alpha\mathbf A+\beta\mathbf B\in\mathcal Y_0$, and $\mathcal Y_0$ is a linear subspace.
\end{proof}

\begin{lemma}
\label{lemma:projection_is_taking_mean}
Consider our setting in Definition~\ref{def:eval-setting} and let $\mathbf{D}$ be an input database and $\mathbf{A}$ be the query matrix,  $\mathcal{Y}_0$ be the join space, and $\mathcal{Y}=\mathbb{R}^{m\times n}$. For any $\tilde{\mathbf{A}}+\Delta\mathbf{A}^\perp\in\mathcal{Y}$ with $\tilde{\mathbf{A}}\in\mathcal{Y}_0$ and $\Delta \mathbf{A}^\perp\in\mathcal{Y}_0^\perp$, its projection on $\mathcal{Y}_0$, $\tilde{\mathbf{A}}$, satisfies the following property: For any fixed entry $s$ in a relation in the input database $\mathbf{D}$ that appears in the join, denote $\mathcal{P}_s=\{(i,j)|\mathbf{A}[i,j] \text{ is a copy of } s\}$ as the set of indices of entries in $\mathbf{A}$ parameterized by the same input entry $s$, then, we have: for any $(i',j')\in \mathcal{P}_s$, $\tilde{\mathbf{A}}[i',j'
]=\frac{\sum_{(i,j)\in \mathcal{P}_s}(\tilde{\mathbf{A}}+\Delta\mathbf{A}^\perp)[i,j]}{|\mathcal{P}_s|}$
\end{lemma}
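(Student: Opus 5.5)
The plan is to identify $\mathcal{Y}_0$ explicitly as a coordinate-like subspace and compute the orthogonal projection with respect to the Frobenius inner product. By Lemma~\ref{lemma:join_consistent_subspace}, every element of $\mathcal{Y}_0$ is the image of the linear map that sends the numerical entries of a perturbed database to the join matrix. Each entry $(i,j)$ of $\mathbf A$ is a copy of exactly one input entry $s$, because there are no self-joins and each data column comes from a single relation. Hence the sets $\mathcal{P}_s$, over all input entries $s$ that appear in the join, partition $[m]\times[n]$. For each such $s$ define the indicator matrix $\mathbf{B}_s\in\{0,1\}^{m\times n}$ with ones exactly on $\mathcal{P}_s$. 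Then $\mathcal{Y}_0=\operatorname{span}\{\mathbf{B}_s\}$: a join matrix of $\mathbf D+\Delta\mathbf D$ equals $\sum_s (s+\Delta s)\mathbf{B}_s$, and conversely any coefficient vector is realized by choosing the perturbed values. Entries $s$ not appearing in the join (dangling tuples) contribute nothing and can be ignored.

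Next, observe that the $\mathbf{B}_s$ have pairwise disjoint supports. They are therefore mutually orthogonal in the Frobenius inner product, with $\langle \mathbf{B}_s,\mathbf{B}_s\rangle=|\mathcal{P}_s|>0$. The orthogonal projection of any $\mathbf{M}\in\mathcal{Y}$ onto $\mathcal{Y}_0$ is thus
\[
P_{\mathcal{Y}_0}\mathbf{M}=\sum_s \frac{\langle \mathbf{M},\mathbf{B}_s\rangle}{|\mathcal{P}_s|}\,\mathbf{B}_s ,
\]
and $\langle \mathbf{M},\mathbf{B}_s\rangle=\sum_{(i,j)\in\mathcal{P}_s}\mathbf{M}[i,j]$. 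Evaluating at any $(i',j')\in\mathcal{P}_s$ gives the average of $\mathbf M$ over $\mathcal{P}_s$, because only $\mathbf{B}_s$ is nonzero there.

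Finally, apply this with $\mathbf{M}=\tilde{\mathbf A}+\Delta\mathbf A^\perp$. By uniqueness of the orthogonal decomposition in a finite-dimensional Hilbert space (Appendix~\ref{app:preliminaries}), $\tilde{\mathbf A}\in\mathcal{Y}_0$ and $\Delta\mathbf A^\perp\in\mathcal{Y}_0^\perp$ force $\tilde{\mathbf A}=P_{\mathcal{Y}_0}\mathbf M$, which yields the claimed formula. As a sanity check, $\langle\Delta\mathbf A^\perp,\mathbf B_s\rangle=0$ says directly that the entries of $\Delta\mathbf A^\perp$ sum to zero over each $\mathcal{P}_s$, while $\tilde{\mathbf A}$ is constant on $\mathcal{P}_s$.

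The main obstacle is the first step: justifying carefully that the $\mathcal{P}_s$ are disjoint and cover all entries, so that the spanning set is orthogonal. This is where the no-self-join assumption of Def.~\ref{def:eval-setting} and the fixed row labels and join keys under perturbation are used. With self-joins, one entry position could still be tied to a single $s$, but the same $s$ would feed several columns; disjointness across different $s$ still holds, so the argument should survive. I would nonetheless state that assumption explicitly.
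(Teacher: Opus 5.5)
Your proof is correct, but it works globally where the paper works locally. You describe all of $\mathcal{Y}_0$ as the span of the indicator matrices $\mathbf{B}_s$. You then use disjointness of their supports to get an orthogonal basis, write $P_{\mathcal{Y}_0}$ explicitly as an averaging operator, and conclude by uniqueness of the orthogonal decomposition.

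The paper instead fixes one $s$ and uses only three facts:
\begin{itemize}
\item the indicator $\mathbf{M}_s$ (your $\mathbf{B}_s$) lies in $\mathcal{Y}_0$;
\item every element of $\mathcal{Y}_0$, in particular $\tilde{\mathbf{A}}$, is constant on $\mathcal{P}_s$;
\item $\langle\Delta\mathbf{A}^\perp,\mathbf{M}_s\rangle_F=0$.
\end{itemize}
Solving for that constant gives the mean. This is exactly your closing ``sanity check'', which by itself already proves the lemma.

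The paper's route therefore never needs the step you call the main obstacle. It uses neither disjointness of the $\mathcal{P}_s$ for different $s$ nor the fact that they cover $[m]\times[n]$. Covering is not needed in your argument either: you only use disjointness (for orthogonality of the $\mathbf{B}_s$) and the spanning claim.

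What your version buys is a closed form for $P_{\mathcal{Y}_0}$. It also shows that $\dim\mathcal{Y}_0$ equals the number of data entries in non-dangling tuples. This makes transparent a fact the paper later uses without separate proof: without self-joins, each $\mathcal{P}_s$ lies in one column, so the projection acts column by column. Your remark on self-joins is right for this lemma: neither argument needs that assumption, since every position of $\mathbf{A}$ is a copy of exactly one input entry.
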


\begin{proof}
Fix an entry $s$ in a relation in the input database $\mathbf{D}$, and define the matrix $\mathbf{M}_s\in\mathbb{R}^{m\times n}$ by
\[
\mathbf{M}_s[i,j]=\begin{cases}1,&(i,j)\in \mathcal{P}_s,\\0,&(i,j)\notin \mathcal{P}_s.\end{cases}
\]
Because the join keys are fixed and the data entry $s$ is a free parameter, perturbing $s$ by any scalar changes precisely the entries of the query matrix indexed by $\mathcal{P}_s$ by that scalar. Consequently, $\mathbf{M}_s\in\mathcal{Y}_0$.

Moreover, since all entries indexed by $\mathcal{P}_s$ are parameterized by the same input entry $s$, every matrix in $\mathcal{Y}_0$ has the same value at all indices in $\mathcal{P}_s$. In particular, there exists a scalar $c_s$ such that $\tilde{\mathbf{A}}[i,j]=c_s$ for every $(i,j)\in \mathcal{P}_s$.

Since $\Delta\mathbf{A}^{\perp}\in\mathcal{Y}_0^\perp$ and $\mathbf{M}_s\in\mathcal{Y}_0$, $\Delta\mathbf{A}^\perp$ is orthogonal to $\mathbf{M}_s$ and their inner product is 0, so we have
\[
0=\left\langle\Delta\mathbf{A}^{\perp},\mathbf{M}_s\right\rangle_F=\sum_{(i,j)\in \mathcal{P}_s}\Delta\mathbf{A}^{\perp}[i,j].
\]
It follows that
\[
0=\sum_{(i,j)\in \mathcal{P}_s}\left((\Delta\mathbf{A}^\perp+\tilde{\mathbf{A}})[i,j]-\tilde{\mathbf{A}}[i,j]\right)=\sum_{(i,j)\in \mathcal{P}_s}(\Delta\mathbf{A}^\perp+\tilde{\mathbf{A}})[i,j]-|\mathcal{P}_s|c_s.
\]
Therefore,
\[
c_s=\frac{\sum_{(i,j)\in \mathcal{P}_s}(\tilde{\mathbf{A}}+\Delta\mathbf{A}^{\perp})[i,j]}{|\mathcal{P}_s|}.
\]
Thus, for every $(i',j')\in \mathcal{P}_s$,
\[
\tilde{\mathbf{A}}[i',j']=\frac{\sum_{(i,j)\in \mathcal{P}_s}(\tilde{\mathbf{A}}+\Delta\mathbf{A}^{\perp})[i,j]}{|\mathcal{P}_s|}.
\]
Since $s$ was arbitrary, the claimed property holds for every input entry $s$.
\end{proof}

\Axisprojectedbackwardstable*

\begin{proof}
For each $i\in[m]$, the computed
entry $\hat{\mathbf y}_i$ is obtained by evaluating the length-$n$ dot product $\mathbf A_{i,:}\mathbf x=\sum_{j=1}^n\mathbf A_{ij}\mathbf x_j$
in floating-point arithmetic. Under the standard model
\[
\operatorname{fl}(a\mathbin{\mathrm{op}}b)
=(a\mathbin{\mathrm{op}}b)(1+\delta),
\qquad
|\delta|\leq u,
\]
and assuming that no overflow or exceptional underflow occurs, the standard dot-product error analysis shows that there exist numbers $\theta_{ij}$ satisfying $|\theta_{ij}|\leq\tilde\gamma_n$
such that
\[
\hat{\mathbf y}_i
=
\sum_{j=1}^n
\mathbf A_{ij}(1+\theta_{ij})\mathbf x_j.
\]
Define the perturbation matrix $\mathbf E$ by $\mathbf E_{ij}:=\theta_{ij}\mathbf A_{ij}.$
Then $\hat{\mathbf y} =
(\mathbf A+\mathbf E)\mathbf x,$
and $|\mathbf E|
\leq
\tilde\gamma_n|\mathbf A|.$
This means that the computation of the matrix vector product $\mathbf{A}\mathbf{x}$ is $\tilde{\gamma}_{n}$-backward stable with respect to $\mathbf{A}$. Using Theorem~\ref{theorem:bs_implies_pbs}, we immediately obtain: the computation for $g_{\mathbf{x}}(f(\mathbf{D}))$ is $(\tilde{\gamma}_n, \frac{2\tilde{\gamma}_n}{1-\tilde{\gamma}_n})$-projected backward stable. This completes the proof.
\end{proof}

\subsection{Proofs of Statements from Section~\ref{sec:join-condition-number}}
\label{sec:join-condition-number-proofs}
Here we provide the proofs of the formal statements in Section~\ref{sec:join-condition-number}.

\cnequalsjcn*
\begin{proof}
Consider a relation expansion matrix $\mathbf E_i$. Its $j$-th column contains exactly $\phi_i(j)$ ones, so the inner product of this column with itself is $\phi_i(j)$. Two distinct columns have inner product zero: each join row uses exactly one tuple from $S_i$, so each row of $\mathbf E_i$ contains exactly one entry equal to $1$. Consequently, $\mathbf E_i^\top\mathbf E_i=\operatorname{diag}\bigl(\phi_i(1),\ldots,\phi_i(d_i)\bigr)$.
Let $\mathbf E_i=\mathbf U_i\mathbf \Sigma_i\mathbf V_i^\top$ be a full SVD. Since $\mathbf U_i^\top\mathbf U_i=\mathbf I_m$, we obtain
\[ \mathbf E_i^\top\mathbf E_i=\mathbf V_i(\mathbf \Sigma_i^\top\mathbf \Sigma_i)\mathbf V_i^\top. \]
The matrix $\mathbf\Sigma_i^\top\mathbf\Sigma_i$ is diagonal and nonnegative, with positive diagonal entries equal to the squares of the positive singular values of $\mathbf E_i$. Since $\mathbf V_i$ is orthogonal, this is an SVD of $\mathbf E_i^\top\mathbf E_i$. On the other hand, $\mathbf E_i^\top\mathbf E_i$ is itself diagonal and nonnegative, so its singular values are its diagonal entries, up to ordering. Comparing these two descriptions, the positive singular values of $\mathbf E_i$ are precisely $\sqrt{\phi_i(j)}$ for $j\in[d_i]$ with $\phi_i(j)>0$. Dangling tuples contribute zero diagonal entries, which do not enter the positive-singular-value ratio.

Since the join is nonempty, at least one tuple of $S_i$ has positive multiplicity. Thus,
\[ \sigma_{\max}(\mathbf E_i)=\sqrt{\max_{j\in[d_i]}\phi_i(j)},\qquad \sigma_{\min}^{+}(\mathbf E_i)=\sqrt{\min_{j\in[d_i]:\,\phi_i(j)>0}\phi_i(j)}. \]
Here $\sigma_{\min}^{+}(\mathbf E_i)$ denotes the smallest positive singular value of $\mathbf{E}_i$. Taking their ratio gives
\[ \kappa(\mathbf E_i)=\frac{\sigma_{\max}(\mathbf E_i)}{\sigma_{\min}^{+}(\mathbf E_i)}=\sqrt{\frac{\max_{j\in[d_i]}\phi_i(j)}{\min_{j\in[d_i]:\,\phi_i(j)>0}\phi_i(j)}}=\kappa_{\mathrm{join}}(\mathbf S_i), \]
where the last equality is the definition of the relation condition number.

For the database expansion matrix, block-diagonal multiplication gives
\[ \mathbf E_{\mathbf D}^\top\mathbf E_{\mathbf D}=\bigl(\mathbf E_1^\top\mathbf E_1,\ldots,\mathbf E_r^\top\mathbf E_r\bigr). \]
Each block is diagonal, so the entire matrix is diagonal, with entries $\phi_i(j)$ over all $i\in[r]$ and $j\in[d_i]$. Writing a full SVD as $\mathbf E_{\mathbf D}=\mathbf U_{\mathbf D}\mathbf \Sigma_{\mathbf D}\mathbf V_{\mathbf D}^\top$, we also have
\[ \mathbf E_{\mathbf D}^\top\mathbf E_{\mathbf D}=\mathbf V_{\mathbf D}(\mathbf\Sigma_{\mathbf D}^\top\mathbf \Sigma_{\mathbf D})\mathbf V_{\mathbf D}^\top. \]
As above, this is an SVD of $\mathbf E_{\mathbf D}^\top\mathbf E_{\mathbf D}$, whose positive singular values are therefore the squares of the positive singular values of $\mathbf E_{\mathbf D}$. Since the same matrix is diagonal and nonnegative, these squared singular values are precisely the positive tuple multiplicities across the database. Consequently,
\[ \kappa(\mathbf E_{\mathbf D})=\sqrt{\frac{\max_{i\in[r]}\max_{j\in[d_i]}\phi_i(j)}{\min_{i\in[r]}\min_{j\in[d_i]:\,\phi_i(j)>0}\phi_i(j)}}=\kappa_{\mathrm{join}}(\mathbf D), \]
as required.
\end{proof}

\begin{lemma}
\label{lemma:jcn_base}
Let $\mathbf{s}=(s_1,\ldots,s_n)\neq\mathbf{0}$, and form the vector
\[ \mathbf{a}=(\underbrace{s_1,\ldots,s_1}_{\lambda_1\text{ times}},\underbrace{s_2,\ldots,s_2}_{\lambda_2\text{ times}},\ldots,\underbrace{s_n,\ldots,s_n}_{\lambda_n\text{ times}})^\top, \]
where the positive integers $\lambda_i$ satisfy
\[ 1\leq\lambda_1\leq\lambda_2\leq\cdots\leq\lambda_n. \]
For a perturbation $\Delta\mathbf{s}=(\Delta s_1,\ldots,\Delta s_n)^\top$, define the corresponding repeated-entry perturbation
\[ \Delta\mathbf{a}=(\underbrace{\Delta s_1,\ldots,\Delta s_1}_{\lambda_1\text{ times}},\underbrace{\Delta s_2,\ldots,\Delta s_2}_{\lambda_2\text{ times}},\ldots,\underbrace{\Delta s_n,\ldots,\Delta s_n}_{\lambda_n\text{ times}})^\top. \]
Then
\[ \sqrt{\frac{\lambda_1}{\lambda_n}}\cdot\frac{\|\Delta\mathbf{a}\|_2}{\|\mathbf{a}\|_2}\leq\frac{\|\Delta\mathbf{s}\|_2}{\|\mathbf{s}\|_2}\leq\sqrt{\frac{\lambda_n}{\lambda_1}}\cdot\frac{\|\Delta\mathbf{a}\|_2}{\|\mathbf{a}\|_2}. \]
\end{lemma}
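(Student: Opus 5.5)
The plan is to express both norms of the repeated vectors as weighted sums and then compare weights. By construction,
\[
\|\mathbf a\|_2^2=\sum_{i=1}^n\lambda_i s_i^2,\qquad \|\Delta\mathbf a\|_2^2=\sum_{i=1}^n\lambda_i\Delta s_i^2 .
\]
Since every weight satisfies $\lambda_1\le\lambda_i\le\lambda_n$, we obtain the sandwich bounds
\[
\lambda_1\|\mathbf s\|_2^2\le\|\mathbf a\|_2^2\le\lambda_n\|\mathbf s\|_2^2,\qquad
\lambda_1\|\Delta\mathbf s\|_2^2\le\|\Delta\mathbf a\|_2^2\le\lambda_n\|\Delta\mathbf s\|_2^2 .
\]
Equivalently, in the language of Proposition~\ref{proposition:join_to_matrix_condition_number}, we have $\mathbf a=\mathbf E\mathbf s$ and $\Delta\mathbf a=\mathbf E\Delta\mathbf s$ for an expansion matrix $\mathbf E$ with $\mathbf E^\top\mathbf E=\operatorname{diag}(\lambda_1,\ldots,\lambda_n)$. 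The sandwich bounds are then just the bounds $\sigma_{\min}\|\mathbf v\|_2\le\|\mathbf E\mathbf v\|_2\le\sigma_{\max}\|\mathbf v\|_2$, with $\sigma_{\min}=\sqrt{\lambda_1}$ and $\sigma_{\max}=\sqrt{\lambda_n}$.

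Next, note that $\mathbf s\neq\mathbf 0$ and all $\lambda_i\ge1$, so $\|\mathbf a\|_2>0$ and every ratio in the statement is well defined.

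For the right inequality, we bound the numerator using $\|\Delta\mathbf s\|_2\le\|\Delta\mathbf a\|_2/\sqrt{\lambda_1}$. We bound the denominator using $\|\mathbf s\|_2\ge\|\mathbf a\|_2/\sqrt{\lambda_n}$. Dividing the two gives
\[
\frac{\|\Delta\mathbf s\|_2}{\|\mathbf s\|_2}\le\sqrt{\frac{\lambda_n}{\lambda_1}}\cdot\frac{\|\Delta\mathbf a\|_2}{\|\mathbf a\|_2}.
\]

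For the left inequality, we use the symmetric pair of bounds, $\|\Delta\mathbf s\|_2\ge\|\Delta\mathbf a\|_2/\sqrt{\lambda_n}$ and $\|\mathbf s\|_2\le\|\mathbf a\|_2/\sqrt{\lambda_1}$. Their ratio gives
\[
\frac{\|\Delta\mathbf s\|_2}{\|\mathbf s\|_2}\ge\sqrt{\frac{\lambda_1}{\lambda_n}}\cdot\frac{\|\Delta\mathbf a\|_2}{\|\mathbf a\|_2}.
\]
The case $\Delta\mathbf s=\mathbf 0$ is trivial, since both sides vanish.

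There is no real obstacle here. The only point requiring care is pairing the correct extreme weight with the numerator and the denominator in each direction, since this pairing produces the condition-number factor $\sqrt{\lambda_n/\lambda_1}$ instead of $1$. Finally, the ordering assumption $\lambda_1\le\cdots\le\lambda_n$ only serves to identify the minimum and maximum weights, so the result holds for any multiplicities.
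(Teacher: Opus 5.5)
Your proposal is correct and follows essentially the same argument as the paper: write $\|\mathbf a\|_2^2$ and $\|\Delta\mathbf a\|_2^2$ as $\lambda$-weighted sums, sandwich them between $\lambda_1$ and $\lambda_n$ times the unweighted norms, and combine. The only difference is cosmetic. The paper divides the squared sandwich bounds and takes square roots, whereas you pair the extreme weights directly in numerator and denominator; your expansion-matrix remark is a nice optional link to the singular-value view, but it is not needed.
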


\begin{proof}
By the definitions of $\mathbf{a}$ and $\Delta\mathbf{a}$,
\[ \|\mathbf{a}\|_2^2=\sum_{i=1}^n\lambda_i|s_i|^2,\qquad \|\Delta\mathbf{a}\|_2^2=\sum_{i=1}^n\lambda_i|\Delta s_i|^2. \]
Since $\lambda_1\leq\lambda_i\leq\lambda_n$ for every $i$,
\[ \lambda_1\|\Delta\mathbf{s}\|_2^2\leq\|\Delta\mathbf{a}\|_2^2\leq\lambda_n\|\Delta\mathbf{s}\|_2^2 \]
and
\[ \lambda_1\|\mathbf{s}\|_2^2\leq\|\mathbf{a}\|_2^2\leq\lambda_n\|\mathbf{s}\|_2^2. \]
Therefore,
\[ \frac{\lambda_1}{\lambda_n}\cdot\frac{\|\Delta\mathbf{a}\|_2^2}{\|\mathbf{a}\|_2^2}\leq\frac{\|\Delta\mathbf{s}\|_2^2}{\|\mathbf{s}\|_2^2}\leq\frac{\lambda_n}{\lambda_1}\cdot\frac{\|\Delta\mathbf{a}\|_2^2}{\|\mathbf{a}\|_2^2}. \]
Taking square roots gives the result.
\end{proof}

Now we provide two lemmas that translate the perturbation bounds between the input database and the join matrix in calibrated databases. We then generalize the results to any arbitrary database.

\begin{lemma}
\label{lemma:join_condition_number_columnwise}
Consider the setting in Def.~\ref{def:eval-setting} with the condition that the input database is fully reduced and let $\vect A+\Delta \vect A$ be the join matrix obtained from the perturbed input database $
\mathbf{D}+\Delta\mathbf{D}=(\vect S_1+\Delta\vect S_1,\dots,\vect S_r+\Delta\vect S_r)$. For every $j\in[r]$, $Y\in\mathbf{Y}_j$, and $\gamma,\gamma'\geq 0$, the following statements hold:
\begin{itemize}
    \item If $\norm{\Delta\vect A[*,Y]}_2\leq\gamma\cdot\norm{\vect A[*,Y]}_2$, then $\norm{\Delta\vect S_j[*,Y]}_2\leq\kappa_{\mathrm{join}}(\mathbf S_j)\cdot\gamma\cdot\norm{\vect S_j[*,Y]}_2$.
    \item If $\norm{\Delta\vect S_j[*,Y]}_2\leq \gamma'\cdot\norm{\vect S_j[*,Y]}_2$, then $\norm{\Delta\vect A[*,Y]}_2\leq\kappa_{\mathrm{join}}(\mathbf S_j)\cdot\gamma'\cdot\norm{\vect A[*,Y]}_2$.
\end{itemize}
\end{lemma}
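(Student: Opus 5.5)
The plan is to reduce both statements to Lemma~\ref{lemma:jcn_base}, applied to a single data column. Fix $j\in[r]$ and a data attribute $Y\in\mathbf{Y}_j$. Write $\mathbf s=\vect S_j[*,Y]\in\mathbb R^{d_j}$ and $\Delta\mathbf s=\Delta\vect S_j[*,Y]$. By Def.~\ref{def:expansion-matrices} we have $\vect A[*,Y]=\mathbf E_j\mathbf s$. Since the perturbed database has the same join keys, the expansion matrix is unchanged, so also $\Delta\vect A[*,Y]=\mathbf E_j\Delta\mathbf s$. The absence of self-joins matters here: column $Y$ of $\vect A$ is fed only by relation $S_j$.

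Each row of $\mathbf E_j$ has exactly one $1$, and column $k$ of $\mathbf E_j$ has $\phi_j(k)$ ones. Hence $\vect A[*,Y]$ is, up to a row permutation, exactly the vector $\mathbf a$ of Lemma~\ref{lemma:jcn_base}: entry $s_k$ is repeated $\lambda_k=\phi_j(k)$ times. In the same way $\Delta\vect A[*,Y]$ is, up to the same permutation, the vector $\Delta\mathbf a$. Row permutations preserve Euclidean norms. Because the database is fully reduced, every $\phi_j(k)\ge1$, so the hypothesis $\lambda_k\ge1$ holds. After reordering the tuples of $S_j$ by multiplicity, we get $\lambda_1=\min_k\phi_j(k)$ and $\lambda_{d_j}=\max_k\phi_j(k)$, so $\sqrt{\lambda_{d_j}/\lambda_1}=\kappa_{\mathrm{join}}(\mathbf S_j)$ by definition. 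Reordering the tuples does not change any of the norms involved.

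Both inequalities then follow from Lemma~\ref{lemma:jcn_base} directly. Its right inequality gives $\|\Delta\mathbf s\|_2/\|\mathbf s\|_2\le\kappa_{\mathrm{join}}(\mathbf S_j)\,\|\Delta\mathbf a\|_2/\|\mathbf a\|_2\le\kappa_{\mathrm{join}}(\mathbf S_j)\gamma$, which is the first bullet. Its left inequality, rearranged, gives $\|\Delta\mathbf a\|_2/\|\mathbf a\|_2\le\kappa_{\mathrm{join}}(\mathbf S_j)\,\|\Delta\mathbf s\|_2/\|\mathbf s\|_2\le\kappa_{\mathrm{join}}(\mathbf S_j)\gamma'$, which is the second bullet.

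The only real obstacle is the degenerate case $\mathbf s=\mathbf 0$, where Lemma~\ref{lemma:jcn_base} does not apply. I would handle it separately. Full reduction gives $\|\mathbf a\|_2=0$ if and only if $\mathbf s=\mathbf 0$. In the first bullet, the hypothesis then forces $\Delta\vect A[*,Y]=\mathbf 0$, and since every $\phi_j(k)\ge1$, also $\Delta\mathbf s=\mathbf 0$. The second bullet is symmetric. In both cases the conclusion holds trivially as $0\le0$. It is also worth checking explicitly that without full reduction the bullets can fail: dangling tuples are invisible in $\vect A$, yet their perturbations count in $\Delta\vect S_j$. This is why the hypothesis is needed here and is removed later only by a separate construction of $\Delta\mathbf D$.
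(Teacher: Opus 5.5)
Your proof is correct and follows the paper's argument. Like the paper, you dispose of the case $\mathbf S_j[*,Y]=\mathbf 0$ separately, apply Lemma~\ref{lemma:jcn_base} with $\lambda_p=\phi_j(p)\geq 1$ (using full reduction), and identify the resulting ratio with $\kappa_{\mathrm{join}}(\mathbf S_j)$. Your expansion-matrix phrasing and your explicit row permutation and sorting, which put the vectors in the form that Lemma~\ref{lemma:jcn_base} formally requires, are details the paper leaves implicit.
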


\begin{proof}

Let $s_p=\mathbf{S}_j[p,Y]$. In the column $\mathbf{A}[*,Y]$, the entry $s_p$ occurs exactly $\lambda_p=\phi_j(p)$ times. Since the relations are fully reduced, $\lambda_p>0$ for every $p\in[d_j]$ where $d_j$ denotes the number of rows in $\mathbf{S}_j$. Moreover, because the perturbations preserve the key columns and the join structure, $\Delta\mathbf{S}_j[p,Y]$ is repeated exactly $\lambda_p$ times in $\Delta\mathbf{A}[*,Y]$.

If $\mathbf{S}_j[*,Y]=\mathbf{0}$, then $\mathbf{A}[*,Y]=\mathbf{0}$. In the first case, the assumption implies $\Delta\mathbf{A}[*,Y]=\mathbf{0}$ and hence $\Delta\mathbf{S}_j[*,Y]=\mathbf{0}$. In the second case, the assumption implies $\Delta\mathbf{S}_j[*,Y]=\mathbf{0}$ and hence $\Delta\mathbf{A}[*,Y]=\mathbf{0}$. Thus, both claims hold. We may therefore assume that $\mathbf{S}_j[*,Y]\neq\mathbf{0}$. 

Applying Lemma~\ref{lemma:jcn_base} gives
\[ \sqrt{\frac{\min_{p\in [d_j]}\lambda_p}{\max_{p\in [d_j]}\lambda_p}}\cdot\frac{\|\Delta\mathbf{A}[*,Y]\|_2}{\|\mathbf{A}[*,Y]\|_2}\leq\frac{\|\Delta\mathbf{S}_j[*,Y]\|_2}{\|\mathbf{S}_j[*,Y]\|_2}\leq\sqrt{\frac{\max_{p\in [d_j]}\lambda_p}{\min_{p\in [d_j]}\lambda_p}}\cdot\frac{\|\Delta\mathbf{A}[*,Y]\|_2}{\|\mathbf{A}[*,Y]\|_2}. \] By the definition of the relation condition number and using the fact that the database is fully reduced and hence the multiplicity $\phi_j(p)>0$ for every row $p$ in relation $\mathbf{S}_j$,
\[ \sqrt{\frac{\max_{p\in [d_j]}\lambda_p}{\min_{p\in [d_j]}\lambda_p}}=\sqrt{\frac{\max_{p\in [d_j]}\phi_j(p)}{\min_{p\in [d_j]}\phi_j(p)}}=\kappa_\mathrm{join}(\mathbf{S}_j). \]
Consequently,
\[ \frac{1}{\kappa_\mathrm{join}(\mathbf{S}_j)}\cdot\frac{\|\Delta\mathbf{A}[*,Y]\|_2}{\|\mathbf{A}[*,Y]\|_2}\leq\frac{\|\Delta\mathbf{S}_j[*,Y]\|_2}{\|\mathbf{S}_j[*,Y]\|_2}\leq\kappa_\mathrm{join}(\mathbf{S}_j)\cdot\frac{\|\Delta\mathbf{A}[*,Y]\|_2}{\|\mathbf{A}[*,Y]\|_2}. \]
If $\|\Delta\mathbf{A}[*,Y]\|_2\leq\gamma\|\mathbf{A}[*,Y]\|_2$, then the right-hand inequality gives
\[ \|\Delta\mathbf{S}_j[*,Y]\|_2\leq\kappa_\mathrm{join}(\mathbf{S}_j)\cdot\gamma\cdot\|\mathbf{S}_j[*,Y]\|_2. \]
Similarly, if $\|\Delta\mathbf{S}_j[*,Y]\|_2\leq\gamma'\cdot\|\mathbf{S}_j[*,Y]\|_2$, then the left-hand inequality gives
\[ \|\Delta\mathbf{A}[*,Y]\|_2\leq\kappa_\mathrm{join}(\mathbf{S}_j)\cdot\gamma'\cdot\|\mathbf{A}[*,Y]\|_2. \]
This proves both claims.
\end{proof}

\begin{lemma}
\label{lemma:join_condition_number_fnorm}
Consider the setting in Def.~\ref{def:eval-setting} with the condition that the input database is fully reduced and let $\vect A+\Delta \vect A$ be the join matrix obtained from the perturbed input database $
\mathbf{D}+\Delta\mathbf{D}=(\vect S_1+\Delta\vect S_1,\dots,\vect S_r+\Delta\vect S_r)$. For $\gamma,\gamma'\geq 0$, 
\begin{itemize}
\item If $\|\Delta \mathbf{A}\|_F \leq \gamma \cdot \|\mathbf{A}\|_F$, then $\|\Delta \mathbf{D}\|_F \leq \kappa_{\mathrm{join}}(\mathbf{D})\cdot \gamma \cdot \|\mathbf{D}\|_F$.
\item If $\|\Delta \mathbf{D}\|_F \leq \gamma' \cdot \|\mathbf{D}\|_F$, then $\|\Delta \mathbf{A}\|_F \leq \kappa_{\mathrm{join}}(\mathbf{D})\cdot \gamma' \cdot\|\mathbf{A}\|_F$.
\end{itemize}

\end{lemma}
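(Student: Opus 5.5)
The argument reduces to Lemma~\ref{lemma:jcn_base}, applied once to the whole database rather than column by column. Stack every numerical entry of $\mathbf D=(\mathbf S_1,\ldots,\mathbf S_r)$ into one vector $\mathbf s$. The entry $\mathbf S_i[p,Y]$ is a copy in the join matrix exactly $\lambda=\phi_i(p)$ times, because each row $p$ of $\mathbf S_i$ is copied $\phi_i(p)$ times and each copy contributes one entry per column $Y\in\mathbf Y_i$. The query has no self-joins, so each column of $\mathbf A$ is contributed by exactly one relation. Hence every entry of $\mathbf A$ is a copy of exactly one entry of $\mathbf D$. Up to a permutation of coordinates, the vectorisation of $\mathbf A$ is therefore the vector $\mathbf a$ of Lemma~\ref{lemma:jcn_base}, built from $\mathbf s$ with multiplicities $\lambda$ ranging over all $\phi_i(p)$.

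The same holds for the perturbations. The perturbed database keeps the join keys, so the join structure and all multiplicities are unchanged. Thus the vectorisation of $\Delta\mathbf A$ is the corresponding repeated-entry vector $\Delta\mathbf a$ built from $\Delta\mathbf s$, the vectorisation of $\Delta\mathbf D$. Frobenius norms equal Euclidean norms of vectorisations and are invariant under permutation. So $\|\mathbf A\|_F=\|\mathbf a\|_2$, $\|\Delta\mathbf A\|_F=\|\Delta\mathbf a\|_2$, $\|\mathbf D\|_F=\|\mathbf s\|_2$, and $\|\Delta\mathbf D\|_F=\|\Delta\mathbf s\|_2$.

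Because the database is fully reduced, every $\phi_i(p)\geq 1$, as Lemma~\ref{lemma:jcn_base} requires. The minimum and maximum of the $\lambda$'s are then the minimum and maximum of $\phi_i(p)$ over all relations $i$ and rows $p$. The square root of their ratio is exactly $\kappa_{\mathrm{join}}(\mathbf D)$, which Proposition~\ref{proposition:join_to_matrix_condition_number} also identifies with $\kappa(\mathbf E_{\mathbf D})$. Lemma~\ref{lemma:jcn_base} then gives
\[ \frac{1}{\kappa_{\mathrm{join}}(\mathbf D)}\cdot\frac{\|\Delta\mathbf A\|_F}{\|\mathbf A\|_F}\leq\frac{\|\Delta\mathbf D\|_F}{\|\mathbf D\|_F}\leq\kappa_{\mathrm{join}}(\mathbf D)\cdot\frac{\|\Delta\mathbf A\|_F}{\|\mathbf A\|_F}. \]
The right inequality gives the first bullet and the left inequality gives the second, exactly as in the proof of Lemma~\ref{lemma:join_condition_number_columnwise}.

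Two edge cases remain. If $\mathbf D=\mathbf 0$, then $\mathbf A=\mathbf 0$. In that case the hypothesis $\|\Delta\mathbf A\|_F\leq\gamma\cdot\|\mathbf A\|_F$ forces $\Delta\mathbf A=\mathbf 0$, which forces $\Delta\mathbf D=\mathbf 0$ because every entry has multiplicity at least one; symmetrically for the converse. If the join is empty, full reduction makes $\mathbf D$ empty as well, and both statements are trivial.

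The main obstacle is the bookkeeping step, namely justifying that the multiplicity of a data entry equals the multiplicity $\phi_i(p)$ of its row, uniformly across all columns and all relations. This needs no self-joins (so no column is shared between relations) and full reduction (so no multiplicity is zero). Once that is in place, the rest is a direct invocation of Lemma~\ref{lemma:jcn_base}.
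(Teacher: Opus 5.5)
Your proposal is correct and follows essentially the same route as the paper. You note that each entry $\mathbf S_j[p,Y]$, and likewise its perturbation, occurs exactly $\phi_j(p)\geq 1$ times in $\mathbf A$ (resp.\ $\Delta\mathbf A$), apply Lemma~\ref{lemma:jcn_base} once to the whole database with the extreme multiplicities, and read both bullets off the two-sided inequality. Your handling of the degenerate case $\mathbf D=\mathbf 0$, using full reduction to get $\Delta\mathbf A=\mathbf 0\Rightarrow\Delta\mathbf D=\mathbf 0$, is slightly more explicit than the paper's, which simply calls that case trivial.
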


\begin{proof}
If $\mathbf{D}=\mathbf 0$, the join matrix $\mathbf{A}=\mathbf 0$. Conversely, if the join matrix $\mathbf{A}=\mathbf 0$, the input database $\mathbf{D}=\mathbf{0}$ since it is fully reduced. In both cases, the statements trivially hold. We consider cases where $\mathbf{A}\neq \mathbf 0$ and $\mathbf{D}\neq \mathbf 0$. For the $p$-th row in $\mathbf S_j$, each of its data entries occurs in $\mathbf{A}$ exactly $\lambda_{j,p}=\phi_j(p)$ times. Since the relations are fully reduced, all such multiplicities are positive. Moreover, because the perturbations preserve the key values and the join structure, the corresponding perturbed entries have the same multiplicities in $\Delta\mathbf{A}$. Therefore, the minimum and maximum parameter multiplicities are
\[ \lambda_{\min}=\min_{j\in[r]}\min_{p\in[d_j]}\phi_j(p),\qquad \lambda_{\max}=\max_{j\in[r]}\max_{p\in[d_j]}\phi_j(p). \]

Applying Lemma~\ref{lemma:jcn_base} gives
\[ \sqrt{\frac{\lambda_{\min}}{\lambda_{\max}}}\cdot\frac{\|\Delta\mathbf{A}\|_F}{\|\mathbf{A}\|_F}\leq\frac{\sqrt{\sum_{j=1}^r\|\Delta\mathbf{S}_j\|_F^2}}{\sqrt{\sum_{j=1}^r\|\mathbf{S}_j\|_F^2}}\leq\sqrt{\frac{\lambda_{\max}}{\lambda_{\min}}}\cdot\frac{\|\Delta\mathbf{A}\|_F}{\|\mathbf{A}\|_F}. \]
Since $\sqrt{\frac{\lambda_{\max}}{\lambda_{\min}}}=\kappa_\mathrm{join}(\mathbf{D})$,
we obtain
\[ \frac{1}{\kappa_\mathrm{join}(\mathbf{D})}\cdot\frac{\|\Delta\mathbf{A}\|_F}{\|\mathbf{A}\|_F}\leq\frac{\|\Delta\mathbf{D}\|_F}{\|\mathbf{D}\|_F}\leq\kappa_\mathrm{join}(\mathbf{D})\cdot\frac{\|\Delta\mathbf{A}\|_F}{\|\mathbf{A}\|_F}. \]
If $\|\Delta\mathbf{A}\|_F\leq\gamma\cdot\|\mathbf{A}\|_F$, then the right-hand inequality gives
\[ \|\Delta\mathbf{D}\|_F\leq\kappa_\mathrm{join}(\mathbf{D})\cdot\gamma\cdot\|\mathbf{D}\|_F. \]
Similarly, if $\|\Delta\mathbf{D}\|_F\leq\gamma'\cdot\|\mathbf{D}\|_F$, then the left-hand inequality gives
\[ \|\Delta\mathbf{A}\|_F\leq\kappa_\mathrm{join}(\mathbf{D})\cdot\gamma'\cdot\|\mathbf{A}\|_F. \]
This proves both claims.
\end{proof}

We now show that Lemmas~\ref{lemma:join_condition_number_columnwise} and~\ref{lemma:join_condition_number_fnorm} imply that if we can bound the perturbation on the join matrix, we can prove the existence of a bounded perturbation in the input database corresponding to the perturbation in the join matrix, even if the database is not fully reduced.

\condboundtranslationdangling*

\begin{proof}
For every $j\in[r]$, let $I_j^+=\{p:\phi_j(p)>0\}$,
and let $\vect S_j^+$ denote the restriction of $\vect S_j$ to the tuples indexed by $I_j^+$. Thus, $\vect S_j^+$ consists exactly of the non-dangling tuples of $\vect S_j$. Let
\[ \mathbf{D}^+=(\vect S_1^+,\dots,\vect S_r^+). \]
The database $\mathbf{D}^+$ is fully reduced and has the same join matrix $\vect A$ as $\mathbf{D}$.

For each $p\in I_j^+$ and $Y\in\mathbf{Y}_j$, define $\Delta\vect S_j[p,Y]$ to be the common value of the entries of $\Delta\vect A[*,Y]$ corresponding to occurrences of $\vect S_j[p,Y]$. This is well defined because $\Delta\vect A$ is on the join space. For each dangling tuple, set
\[ \Delta\vect S_j[p,Y]=0\qquad\text{whenever}\qquad\phi_j(p)=0. \]
Because the perturbations preserve the join-key columns and the join structure, the join matrix obtained from $\mathbf{D}+\Delta\mathbf{D}$ is exactly $\vect A+\Delta\vect A$.

Let $\Delta\vect S_j^+$ be the restriction of $\Delta\vect S_j$ to $I_j^+$, and let $\Delta\mathbf{D}^+=(\Delta\vect S_1^+,\dots,\Delta\vect S_r^+)$.
Since removing dangling tuples does not change any positive tuple multiplicity, $\kappa_\mathrm{join}(\mathbf{S}_j^+)=\kappa_\mathrm{join}(\mathbf{S}_j)$ for every $j\in[r]$, and $\kappa_\mathrm{join}(\mathbf{D}^+)=\kappa_\mathrm{join}(\mathbf{D})$.

Suppose first that $\norm{\Delta\vect A[*,Y]}_2\leq\gamma\cdot\norm{\vect A[*,Y]}_2$
for every $j\in[r]$ and $Y\in\mathbf{Y}_j$. Since $\mathbf{D}^+$ is fully reduced, Lemma~\ref{lemma:join_condition_number_columnwise} gives $\norm{\Delta\vect S_j^+[*,Y]}_2\leq\kappa_\mathrm{join}(\mathbf{S}_j^+)\cdot\gamma\cdot\norm{\vect S_j^+[*,Y]}_2$.
The perturbation vanishes on the dangling tuples, so $\norm{\Delta\vect S_j[*,Y]}_2=\norm{\Delta\vect S_j^+[*,Y]}_2$.
Moreover, $\norm{\vect S_j^+[*,Y]}_2\leq\norm{\vect S_j[*,Y]}_2$.
Consequently, $\norm{\Delta\vect S_j[*,Y]}_2\leq\kappa_\mathrm{join}(\mathbf{S}_j)\cdot\gamma\cdot\norm{\vect S_j[*,Y]}_2$.

Now suppose that $\norm{\Delta\vect A}_F\leq\gamma\norm{\vect A}_F$. Since $\mathbf{D}^+$ is fully reduced, Lemma~\ref{lemma:join_condition_number_fnorm} gives $\norm{\Delta\mathbf{D}^+}_F\leq\kappa_\mathrm{join}(\mathbf{D}^+)\cdot\gamma\cdot\norm{\mathbf{D}^+}_F$. Because the perturbation vanishes on all dangling tuples, \\$\norm{\Delta\mathbf{D}}_F=\norm{\Delta\mathbf{D}^+}_F$, while $\norm{\mathbf{D}^+}_F\leq\norm{\mathbf{D}}_F$.
It follows that $\norm{\Delta\mathbf{D}}_F\leq\kappa_\mathrm{join}(\mathbf{D})\cdot\gamma\cdot\norm{\mathbf{D}}_F$. 
This proves both claims.
\end{proof}

\subsection{Proofs of Statements from Section~\ref{sec:pbs-bs-connection}}
\label{sec:pbs-bs-connection-proof}
Here we provide the proofs of the formal statements in Section~\ref{sec:pbs-bs-connection}.

\pbsimpliesbs*
\begin{proof}
We first show that it suffices to consider calibrated input databases. Fix an arbitrary $\mathbf A\in\mathcal Y_0$, and let $\mathbf D$ be any database with the fixed join keys such that $f(\mathbf D)=\mathbf A$. Let $\bar{\mathbf D}$ be obtained from $\mathbf D$ by setting all data entries of every dangling tuple to zero, without changing any tuple identity or join key. Since dangling tuples do not contribute to the join, $f(\bar{\mathbf D})=f(\mathbf D)=\mathbf A$, and hence $C(\bar{\mathbf D})=C_g(\mathbf A)$.

Apply the assumed $(\alpha,\beta)$-projected backward stability of $C$ to $\bar{\mathbf D}$, and let $\Delta\bar{\mathbf D}$ and $\Delta\mathbf A^\perp$ be the corresponding perturbations, with $\tilde{\mathbf A}=f(\bar{\mathbf D}+\Delta\bar{\mathbf D})$. Because perturbations do not change the join keys, every tuple that is dangling in $\bar{\mathbf D}$ remains dangling in $\bar{\mathbf D}+\Delta\bar{\mathbf D}$. Therefore, setting the entries of $\Delta\bar{\mathbf D}$ corresponding to dangling tuples to zero does not change $\tilde{\mathbf A}$ and cannot increase any of the entry-wise, column-wise, or Frobenius-norm-wise perturbation bounds.

Now let $\mathbf D^+$ be obtained from $\bar{\mathbf D}$ by removing all dangling tuples, and let $\Delta\mathbf D^+$ be the corresponding restriction of $\Delta\bar{\mathbf D}$. Then $\mathbf D^+$ is calibrated and $f(\mathbf D^+)=\mathbf A$, $f(\mathbf D^++\Delta\mathbf D^+)=\tilde{\mathbf A}$. Moreover, since all removed data entries are zero, the relevant input norms are unchanged. For every relation $\mathbf S_j$ and column $k$,
\[ \|\mathbf S_j^+[*,k]\|_2=\|\bar{\mathbf S}_j[*,k]\|_2,\qquad \|\mathbf D^+\|_F=\|\bar{\mathbf D}\|_F, \]
while restricting the perturbation can only decrease these norms:
\[ \|\Delta\mathbf S_j^+[*,k]\|_2\leq\|\Delta\bar{\mathbf S}_j[*,k]\|_2,\qquad \|\Delta\mathbf D^+\|_F\leq\|\Delta\bar{\mathbf D}\|_F. \]
The entry-wise perturbation bound is preserved directly by restriction. Finally, removing dangling tuples does not change any positive tuple-occurrence number in the join, and therefore
\[ \kappa_{\mathrm{join}}(\mathbf S_j^+)=\kappa_{\mathrm{join}}(\bar{\mathbf S}_j),\qquad \kappa_{\mathrm{join}}(\mathbf D^+)=\kappa_{\mathrm{join}}(\bar{\mathbf D}). \]
Thus the projected-backward-stability result on an uncalibrated database can always be represented using a calibrated database, given by the subset of tuples that contribute to the join matrix, without changing $\mathbf A$, $\tilde{\mathbf A}$, $\Delta\mathbf A^\perp$, the parameters $\alpha,\beta$, or the relevant join condition numbers. Hence, without loss of generality, we assume throughout the remainder of the proof that the input database $\mathbf D$ is calibrated.

We write $C=(C_f, C_g)$, where $C_f$ denotes forming the join, and $C_g$ denotes the computation of $g$ on the join matrix.

\textbf{Case 1: $C=(C_f, C_g)$ is entry-wise $(\alpha,\beta)$-projected backward stable.}
Let $\hat{z}=C_g(\mathbf{A})=C(\mathbf{D})$ be the computed result of $g\circ f$. Since $C=(C_f, C_g)$ is entry-wise $(\alpha,\beta)$-projected backward stable, there exists a perturbation on the input database $\Delta\mathbf{D}$ satisfying $|\Delta\mathbf{D}| \leq \alpha|\mathbf{D}|$, and the join matrix obtained from $\mathbf{D}+\Delta\mathbf{D}$ can be denoted as $\tilde{\mathbf{A}}$. Further, there exists $\Delta\mathbf{A}^\perp\in\mathcal{Y}_0^\perp$ satisfying $|\Delta\mathbf{A}^\perp|\leq\beta|\tilde{\mathbf{A}}|$, such that $g(\tilde{\mathbf{A}}+\Delta\mathbf{A}^\perp)=\hat{z}$.

Denote $\mathbf{A}$ to be the join matrix of the input database $\mathbf{D}$. Since $|\Delta\mathbf{D}| \leq \alpha|\mathbf{D}|$, we have $|\mathbf{A}-\tilde{\mathbf{A}}| \leq \alpha|\mathbf{A}|$ since the entries in $\mathbf{A}$ and $\tilde{\mathbf{A}}$ are copies of the entries in $\mathbf{D}$ and $\mathbf{D}+\Delta\mathbf{D}$, respectively. We then have:
\[|\mathbf{A}-(\tilde{\mathbf{A}}+\Delta\mathbf{A}^\perp)|\leq|\mathbf{A}-\tilde{\mathbf{A}}|+|\Delta\mathbf{A}^\perp|\leq\alpha|\mathbf{A}|+\beta|\tilde{\mathbf{A}}|\leq\alpha|\mathbf{A}|+\beta(|\tilde{\mathbf{A}}-\mathbf{A}|+|\mathbf{A}|)\]
\[\leq\alpha|\mathbf{A}|+\beta(\alpha|\mathbf{A}|+|\mathbf{A}|)=\Bigl(\alpha+\beta(\alpha+1)\Bigr)|\mathbf{A}|.\]
In other words, for any input $\mathbf{A}\in\mathcal{Y}_0$, write $\mathbf{E}=\tilde{\mathbf{A}}+\Delta\mathbf{A}^\perp-\mathbf{A}$, if $C_g(\mathbf{A})=\hat{z}$ then $g(\mathbf{A}+\mathbf{E})=\hat{z}$ with $|\mathbf{E}|\leq \Big(\alpha+\beta(\alpha+1)\Bigr)|\mathbf{A}|$. Hence $C_g$ is entry-wise $\alpha+\beta(\alpha+1)$-backward stable when the input lies in $\mathcal{Y}_0$.

\textbf{Case 2: $C=(C_f, C_g)$ is column-wise $(\alpha,\beta)$-projected backward stable.}

Let $\hat{z}=C_g(\mathbf{A})=C(\mathbf{D})$ be the computed result of $g\circ f$. Since $C=(C_f, C_g)$ is column-wise $(\alpha,\beta)$-projected backward stable, there exists a perturbation on the input database $\Delta\mathbf{D}=\set{\Delta\mathbf{S}_1,...,\Delta\mathbf{S}_r}$ satisfying $\|\Delta\mathbf{S}_j[*,k]\|_2 \leq \alpha\|\mathbf{S}_j[*,k]\|_2$ for every relation $S_j$ and column $k$, and the join matrix obtained from $\mathbf{D}+\Delta\mathbf{D}$ can be denoted as $\tilde{\mathbf{A}}$. Further, there exists $\Delta\mathbf{A}^\perp\in\mathcal{Y}_0^\perp$ satisfying $\|\Delta\mathbf{A}^\perp[*,i]\|_2\leq\beta\|\tilde{\mathbf{A}}[*,i]\|_2$ for every column $i$, such that $g(\tilde{\mathbf{A}}+\Delta\mathbf{A}^\perp)=\hat{z}$.

Denote by $\mathbf{A}$ the join matrix for the input database $\mathbf{D}$. $\|\Delta\mathbf{S}_j[*,k]\|_2 \leq \alpha\|\mathbf{S}_j[*,k]\|_2$ for every relation $S_j$ and column $k$. Using Lemma~\ref{lemma:join_condition_number_columnwise}, we have $\|(\mathbf{A}-\tilde{\mathbf{A}})[*,i]\|_2 \leq \kappa_\mathrm{join}(\mathbf{S}_j)\alpha\|\mathbf{A}[*,i]\|_2$ for column $i$ in $\mathbf{A}$ that comes from the relation $S_j$, as the entries in $\mathbf{A}$ and $\tilde{\mathbf{A}}$ are copies of the entries in $\mathbf{D}$ and $\mathbf{D}+\Delta\mathbf{D}$, respectively. Notice that the query $Q$ on $\mathbf{D}$ does not contain self joins, projecting $\tilde{\mathbf{A}}+\Delta\mathbf{A}^\perp$ on $\mathcal{Y}_0$ is equivalent to projecting on $\mathcal{Y}_0$ with each column projected independently. The Pythagorean theorem therefore tells us that $\|(\mathbf{A}-\tilde{\mathbf{A}})[*,i]\|_2^2+\|\Delta\mathbf{A}^\perp[*,i]\|_2^2=\big\|\big(\mathbf{A}-(\tilde{\mathbf{A}}+\Delta\mathbf{A}^\perp)\big)[*,i]\big\|_2^2$ for every column $i$.

We then have:
\[\big\|\big(\mathbf{A}-(\tilde{\mathbf{A}}+\Delta\mathbf{A}^\perp)\big)[*,i]\big\|_2^2\;=\;\|(\mathbf{A}-\tilde{\mathbf{A}})[*,i]\|_2^2+\|\Delta\mathbf{A}^\perp[*,i]\|_2^2\]
\[\leq\;\alpha^2\kappa_\mathrm{join}(\mathbf{S}_j)^2\|\mathbf{A}[*,i]\|_2^2+\|\Delta\mathbf{A}^\perp[*,i]\|_2^2\;\leq\;\alpha^2\kappa_\mathrm{join}(\mathbf{S}_j)^2\|\mathbf{A}[*,i]\|_2^2+\beta^2\|\tilde{\mathbf{A}}[*,i]\|_2^2.\]
Further, using the triangle inequality, \[\|\tilde{\mathbf{A}}[*,i]\|_2\leq\|\mathbf{A}[*,i]\|_2+\|(\mathbf{A}-\tilde{\mathbf{A}})[*,i]\|_2\leq\big(1+\alpha\kappa_\mathrm{join}(\mathbf{S}_j)\big)\|\mathbf{A}[*,i]\|_2,\]
we obtain 
\[\big\|\big(\mathbf{A}-(\tilde{\mathbf{A}}+\Delta\mathbf{A}^\perp)\big)[*,i]\big\|_2^2\;\leq\;\alpha^2\kappa_\mathrm{join}(\mathbf{S}_j)^2\|\mathbf{A}[*,i]\|_2^2+\beta^2\|\tilde{\mathbf{A}}[*,i]\|_2^2\]
\[\leq\alpha^2\kappa_\mathrm{join}(\mathbf{S}_j)^2\|\mathbf{A}[*,i]\|_2^2+\beta^2\big(1+\alpha\kappa_\mathrm{join}(\mathbf{S}_j)\big)^2\|\mathbf{A}[*,i]\|_2^2,\]
which gives us the bound for the backward error of $g$:
\[\big\|\big(\mathbf{A}-(\tilde{\mathbf{A}}+\Delta\mathbf{A}^\perp)\big)[*,i]\big\|_2\leq\sqrt{\kappa_\mathrm{join}(\mathbf{S}_j)^2\alpha^2+\beta^2\big(1+\kappa_\mathrm{join}(\mathbf{S}_j)\alpha\big)^2}\;\Big\|\mathbf{A}[*,i]\Big\|_2\]

In other words, for any input $\mathbf{A}\in\mathcal{Y}_0$, write $\mathbf{E}=\tilde{\mathbf{A}}+\Delta\mathbf{A}^\perp-\mathbf{A}$, if $C_g(\mathbf{A})=\hat{z}$ then $g(\mathbf{A}+\mathbf{E})=\hat{z}$ with $\|\mathbf{E}[*,i]\|_2\leq \sqrt{\kappa_\mathrm{join}(\mathbf{S}_j)^2\alpha^2+\beta^2\big(1+\kappa_\mathrm{join}(\mathbf{S}_j)\alpha\big)^2}\;\Big\|\mathbf{A}[*,i]\Big\|_2$. Hence $C_g$ is column-wise $\sqrt{\kappa_\mathrm{join}(\mathbf{S}_j)^2\alpha^2+\beta^2\big(1+\kappa_\mathrm{join}(\mathbf{S}_j)\alpha\big)^2}$-backward stable when the input lies in $\mathcal{Y}_0$.

\textbf{Case 3: $C=(C_f, C_g)$ is Frobenius-norm-wise $(\alpha,\beta)$-projected backward stable.}

Let $\hat{z}=C_g(\mathbf{A})=C(\mathbf{D})$ be the computed result of $g\circ f$. Since $C=(C_f, C_g)$ is Frobenius-norm-wise $(\alpha,\beta)$-projected backward stable, there exists a perturbation on the input database $\Delta\mathbf{D}=\set{\Delta\mathbf{S}_1,...,\Delta\mathbf{S}_r}$ satisfying $\|\Delta\mathbf{D}\|_F \leq \alpha\|\mathbf{D}\|_F$, and the join matrix obtained from $\mathbf{D}+\Delta\mathbf{D}$ can be denoted as $\tilde{\mathbf{A}}$. Further, there exists $\Delta\mathbf{A}^\perp\in\mathcal{Y}_0^\perp$ satisfying $\|\Delta\mathbf{A}^\perp\|_F\leq\beta\|\tilde{\mathbf{A}}\|_F$ such that $g(\tilde{\mathbf{A}}+\Delta\mathbf{A}^\perp)=\hat{z}$.

Denote $\mathbf{A}$ to be the join matrix of the input database $\mathbf{D}$. Since $\|\Delta\mathbf{D}\|_F \leq \alpha\|\mathbf{D}\|_F$, using Lemma~\ref{lemma:join_condition_number_fnorm}, we have $\|(\mathbf{A}-\tilde{\mathbf{A}})\|_F \leq \kappa_\mathrm{join}(\mathbf{D})\alpha\|\mathbf{A}\|_F$. Moreover, the Pythagorean theorem tells us that $\|(\mathbf{A}-\tilde{\mathbf{A}})\|_F^2+\|\Delta\mathbf{A}^\perp\|_F^2=\big\|\big(\mathbf{A}-(\tilde{\mathbf{A}}+\Delta\mathbf{A}^\perp)\big)\big\|_F^2$, we then have:
\[\big\|\big(\mathbf{A}-(\tilde{\mathbf{A}}+\Delta\mathbf{A}^\perp)\big)\big\|_F^2=\|(\mathbf{A}-\tilde{\mathbf{A}})\|_F^2+\|\Delta\mathbf{A}^\perp\|_F^2\leq\alpha^2\kappa_\mathrm{join}(\mathbf{D})^2\|\mathbf{A}\|_F^2+\|\Delta\mathbf{A}^\perp\|_F^2\]
\[\leq\alpha^2\kappa_\mathrm{join}(\mathbf{D})^2\|\mathbf{A}\|_F^2+\beta^2\|\tilde{\mathbf{A}}\|_F^2.\]
Further, using the triangle inequality, \[\|\tilde{\mathbf{A}}\|_F\leq\|\mathbf{A}\|_F+\|(\mathbf{A}-\tilde{\mathbf{A}})\|_F\leq\big(1+\alpha\kappa_\mathrm{join}(\mathbf{D})\big)\|\mathbf{A}\|_F,\]
we obtain 
\[
\big\|\big(\mathbf{A}-(\tilde{\mathbf{A}}+\Delta\mathbf{A}^\perp)\big)\big\|_F^2\leq\alpha^2\kappa_\mathrm{join}(\mathbf{D})^2\|\mathbf{A}\|_F^2+\beta^2\|\tilde{\mathbf{A}}\|_F^2
\]
\[\leq \alpha^2\kappa_\mathrm{join}(\mathbf{D})^2\|\mathbf{A}\|_F^2+\beta^2\big(1+\alpha\kappa_\mathrm{join}(\mathbf{D})\big)^2\|\mathbf{A}\|_F^2\]
which gives us the bound for the backward error of $g$:
\[\big\|\big(\mathbf{A}-(\tilde{\mathbf{A}}+\Delta\mathbf{A}^\perp)\big)\big\|_F\leq\sqrt{\kappa_\mathrm{join}(\mathbf{D})^2\alpha^2+\beta^2\big(1+\kappa_\mathrm{join}(\mathbf{D})\alpha\big)^2}\Big\|\mathbf{A}\Big\|_F\]
In other words, for any input $\mathbf{A}\in\mathcal{Y}_0$, write $\mathbf{E}=\tilde{\mathbf{A}}+\Delta\mathbf{A}^\perp-\mathbf{A}$, if $C_g(\mathbf{A})=\hat{z}$ then $g(\mathbf{A}+\mathbf{E})=\hat{z}$ with $\|\mathbf{E}\|_F\leq \sqrt{\kappa_\mathrm{join}(\mathbf{D})^2\alpha^2+\beta^2\big(1+\kappa_\mathrm{join}(\mathbf{D})\alpha\big)^2}\Big\|\mathbf{A}\Big\|_F$. Hence $C_g$ is Frobenius-norm-wise \\$\sqrt{\kappa_\mathrm{join}(\mathbf{D})^2\alpha^2+\beta^2\big(1+\kappa_\mathrm{join}(\mathbf{D})\alpha\big)^2}$-backward stable when the input lies in $\mathcal{Y}_0$.
\end{proof}

\bsimpliespbs*
\begin{proof}
We write $C=(C_f, C_g)$, where $C_f$ denotes forming the join, and $C_g$ denotes the computation of $g$ on the join matrix.

\textbf{Case 1: $C_g$ is entry-wise $\epsilon$-backward stable.}

Let $\hat{z}=C_g(\mathbf{A})=C(\mathbf{D})$ be the computed result of $g\circ f$. Since $C_g$ is entry-wise $\epsilon$-backward stable, there exists a perturbation matrix $\mathbf{E}$ satisfying $\hat{z}=g(\mathbf{A}+\mathbf{E})$, such that $|\mathbf{E}|\leq\epsilon|\mathbf{A}|$. We can write $\mathbf{A}+\mathbf{E}$ as $\tilde{\mathbf{A}}+\Delta\mathbf{A}^\perp$, where $\tilde{\mathbf{A}}$ is the projection of $\mathbf{A}+\mathbf{E}$ on the join space $\mathcal{Y}_0$, and $\Delta\mathbf{A}^\perp=\mathbf{A}+\mathbf{E}-\tilde{\mathbf{A}}$ denotes the normal perturbation.

Now, consider any entry in the input database $s=\mathbf{S}_k[p,q]$ that occurs in the join matrix $\mathbf{A}$. In this case, the set $\mathcal{P}_s=\set{(i,j)\mid\mathbf{A}[i,j]\text{ is parameterized by }s}$ is nonempty. Lemma~\ref{lemma:projection_is_taking_mean} tells us that, in the projection of $\mathbf{A}+\mathbf{E}$ on $\mathcal{Y}_0$, namely $\tilde{\mathbf{A}}$, for any $(i',j')\in\mathcal{P}_s$,
\[ \tilde{\mathbf{A}}[i',j']=\frac{\sum_{(i,j)\in\mathcal{P}_s}(\tilde{\mathbf{A}}+\Delta\mathbf{A}^\perp)[i,j]}{|\mathcal{P}_s|}=\frac{\sum_{(i,j)\in\mathcal{P}_s}(\mathbf{A}+\mathbf{E})[i,j]}{|\mathcal{P}_s|}=s+\frac{\sum_{(i,j)\in\mathcal{P}_s}\mathbf{E}[i,j]}{|\mathcal{P}_s|}, \]
where
\[ \Big|\frac{\sum_{(i,j)\in\mathcal{P}_s}\mathbf{E}[i,j]}{|\mathcal{P}_s|}\Big|\leq\frac{\sum_{(i,j)\in\mathcal{P}_s}|\mathbf{E}[i,j]|}{|\mathcal{P}_s|}\leq\frac{\sum_{(i,j)\in\mathcal{P}_s}\epsilon|\mathbf{A}[i,j]|}{|\mathcal{P}_s|}=\epsilon|s|. \]

This means $|\tilde{\mathbf{A}}[i,j]-\mathbf{A}[i,j]|\leq\epsilon|s|$ for every $(i,j)\in\mathcal{P}_s$. We construct the perturbed database $\mathbf{D}+\Delta\mathbf{D}=(\mathbf{S}_1+\Delta\mathbf{S}_1,\dots,\mathbf{S}_r+\Delta\mathbf{S}_r)$ as follows. For every database entry $s=\mathbf{S}_k[p,q]$ that occurs in the join matrix, set its perturbation so that its perturbed value equals the corresponding value in $\tilde{\mathbf{A}}$. By the argument above, this perturbation satisfies
\[ |\Delta\mathbf{S}_k[p,q]|\leq\epsilon|\mathbf{S}_k[p,q]|. \]
If $s=\mathbf{S}_k[p,q]$ belongs to a dangling tuple and therefore does not occur in the join matrix, set $\Delta\mathbf{S}_k[p,q]=0$.
Such an entry does not affect the join result, and its perturbation trivially satisfies $|\Delta\mathbf{S}_k[p,q]|=0\leq\epsilon|\mathbf{S}_k[p,q]|$.

Therefore, $f(\mathbf{D}+\Delta\mathbf{D})=\tilde{\mathbf{A}}$ and $|\Delta\mathbf{S}_k[p,q]|\leq\epsilon|\mathbf{S}_k[p,q]|$ for every relation $k$ and every entry $(p,q)$. Hence, $\epsilon$ is a valid entry-wise error bound on the input database.

Next, we show that $\frac{2\epsilon}{1-\epsilon}$ is a valid bound for the normal perturbation $\Delta\mathbf{A}^\perp$. Notice that 
$\tilde{\mathbf{A}}[i,j
]=s+\frac{\sum_{(i,j)\in \mathcal{P}_s}\mathbf{E}[i,j]}{|\mathcal{P}_s|}$ and therefore $s=\tilde{\mathbf{A}}[i,j
]-\frac{\sum_{(i,j)\in \mathcal{P}_s}\mathbf{E}[i,j]}{|\mathcal{P}_s|}$. We further have:
\[|s|\leq|\tilde{\mathbf{A}}[i,j
]|+\Big|\frac{\sum_{(i,j)\in \mathcal{P}_s}\mathbf{E}[i,j]}{|\mathcal{P}_s|}\Big|\leq|\tilde{\mathbf{A}}[i,j
]|+\epsilon|s|,\]
which means $|\tilde{\mathbf{A}}[i,j
]|\geq (1-\epsilon)|s|$. On the other hand, $|\Delta\mathbf{A}^\perp[i,j]|=|(\mathbf{A}+\mathbf{E}-\mathbf{\tilde{A}})[i,j]|$ where
\[|(\mathbf{A}+\mathbf{E}-\mathbf{\tilde{A}})[i,j]|\leq|\mathbf{E}[i,j]|+|\mathbf{A}[i,j]-\mathbf{\tilde{A}}[i,j]|\leq \epsilon|s|+\epsilon|s|=2\epsilon|s|.\]
If $|s|=0$, we immediately have $|\Delta\mathbf{A}^\perp[i,j]|=0$ and $|\Delta\mathbf{A}^\perp[i,j]|\leq\frac{2\epsilon}{1-\epsilon}|\tilde{\mathbf{A}}[i,j]|$ holds true. If $|s|>0$, we also immediately have $|\Delta\mathbf{A}^\perp[i,j]|\leq\frac{2\epsilon|s|}{(1-\epsilon)|s|}|\tilde{\mathbf{A}}[i,j]|=\frac{2\epsilon}{1-\epsilon}|\tilde{\mathbf{A}}[i,j]|$. Applying this argument entry-wise gives $|\Delta\mathbf{A}^\perp|\leq\frac{2\epsilon}{1-\epsilon}|\tilde{\mathbf{A}}|$.

Therefore, the computation $C=(C_f, C_g)$ is $(\epsilon, \frac{2\epsilon}{1-\epsilon})$-projected backward stable on $g\circ f$.

\textbf{Case 2: $C_g$ is column-wise $\epsilon$-backward stable.}
Let $\hat{z}=C_g(\mathbf{A})=C(\mathbf{D})$ be the computed result of $g\circ f$. Since $C_g$ is column-wise $\epsilon$-backward stable, and the query has no self-joins, there exists a perturbation matrix $\mathbf{E}$ satisfying $\hat{z}=g(\mathbf{A}+\mathbf{E})$, such that $\|\mathbf{E}[*,i]\|_2\leq\epsilon\|\mathbf{A}[*,i]\|_2$ for every column $i$. 

Since the query has no self-joins, any data value in the input database will not parameterize entries from different columns in $\mathbf{A}$. In other words, projecting $\mathbf{A}+\mathbf{E}$ on $\mathcal{Y}_0$ is equivalent to projecting each column in $\mathbf{A}+\mathbf{E}$ onto the vector space defined by each column in $\mathcal{Y}_0$ individually. We can write $\mathbf{A}+\mathbf{E}$ as $\tilde{\mathbf{A}}+\Delta\mathbf{A}^\perp$, where $\tilde{\mathbf{A}}$ is the projection of $\mathbf{A}+\mathbf{E}$ on the join space $\mathcal{Y}_0$, and $\Delta\mathbf{A}^\perp=\mathbf{A}+\mathbf{E}-\tilde{\mathbf{A}}$ denotes the normal perturbation. For each column $i$, we apply the Pythagorean theorem, which states: $\|(\mathbf{A}-\tilde{\mathbf{A}})[*,i]\|_2^2+\|\Delta\mathbf{A}^\perp[*,i]\|_2^2=\|\mathbf{E}[*,i]\|_2^2$. Therefore  $\|(\mathbf{A}-\tilde{\mathbf{A}})[*,i]\|_2\leq \|\mathbf{E}[*,i]\|_2\leq\epsilon\|\mathbf{A}[*,i]\|_2$ for every column $i$. Applying Proposition~\ref{proposition:condition_number_translation_with_dangling_tuples}, we obtain that there exists a perturbed database $\mathbf{D}+\Delta\mathbf{D}=\set{\mathbf{S}_1+\Delta\mathbf{S}_1,...,\mathbf{S}_r+\Delta\mathbf{S}_r}$ satisfying $\tilde{\mathbf{A}}$ is the join matrix of $\mathbf{D}+\Delta\mathbf{D}$, and for every relation $\mathbf{S}_j$ and column $k$, $\|\Delta\mathbf{S}_j[*,k]\|_2\leq\kappa_\mathrm{join}(\mathbf{S}_j)\cdot\epsilon\cdot\|\mathbf{S}_j[*,k]\|_2$.

We now analyze the normal perturbation bound. Fix a column $i$. We first consider the case where $\|\mathbf{A}[*,i]\|_2=0$. Since $C_g$ is column-wise $\epsilon$-backward stable, we have
\[ \|\mathbf{E}[*,i]\|_2\leq\epsilon\|\mathbf{A}[*,i]\|_2=0. \]
Thus, $\mathbf{E}[*,i]=0$. By the Pythagorean theorem,
\[ \|(\mathbf{A}-\tilde{\mathbf{A}})[*,i]\|_2^2+\|\Delta\mathbf{A}^{\perp}[*,i]\|_2^2=\|\mathbf{E}[*,i]\|_2^2, \]
we obtain $(\mathbf{A}-\tilde{\mathbf{A}})[*,i]=0$ and $\Delta\mathbf{A}^{\perp}[*,i]=0$.
Consequently, $\tilde{\mathbf{A}}[*,i]=\mathbf{A}[*,i]=0$, and hence $\|\Delta\mathbf{A}^{\perp}[*,i]\|_2\leq\frac{\epsilon}{\sqrt{1-\epsilon^2}}\|\tilde{\mathbf{A}}[*,i]\|_2$ holds trivially.

Now suppose that $\|\mathbf{A}[*,i]\|_2>0$, and define
\[ \lambda=\frac{\|(\mathbf{A}-\tilde{\mathbf{A}})[*,i]\|_2}{\|\mathbf{A}[*,i]\|_2}. \]
Since
\[ \|(\mathbf{A}-\tilde{\mathbf{A}})[*,i]\|_2\leq\|\mathbf{E}[*,i]\|_2\leq\epsilon\|\mathbf{A}[*,i]\|_2, \]
we have $0\leq\lambda\leq\epsilon<1$.
By the triangle inequality,
\[ \|\tilde{\mathbf{A}}[*,i]\|_2=\|\mathbf{A}[*,i]-(\mathbf{A}-\tilde{\mathbf{A}})[*,i]\|_2\geq\|\mathbf{A}[*,i]\|_2-\|(\mathbf{A}-\tilde{\mathbf{A}})[*,i]\|_2=(1-\lambda)\|\mathbf{A}[*,i]\|_2>0. \]
Therefore, division by $\|\tilde{\mathbf{A}}[*,i]\|_2$ and by $(1-\lambda)\|\mathbf{A}[*,i]\|_2$ is well defined. 

Using the Pythagorean theorem again, we obtain
\[ \|\Delta\mathbf{A}^{\perp}[*,i]\|_2^2=\|\mathbf{E}[*,i]\|_2^2-\|(\mathbf{A}-\tilde{\mathbf{A}})[*,i]\|_2^2. \]
Since $\|\mathbf{E}[*,i]\|_2\leq\epsilon\|\mathbf{A}[*,i]\|_2$, it follows that
\[ \|\Delta\mathbf{A}^{\perp}[*,i]\|_2^2\leq\epsilon^2\|\mathbf{A}[*,i]\|_2^2-\lambda^2\|\mathbf{A}[*,i]\|_2^2=(\epsilon^2-\lambda^2)\|\mathbf{A}[*,i]\|_2^2. \]
Combining this bound with the lower bound on $\|\tilde{\mathbf{A}}[*,i]\|_2$, we obtain
\[ \frac{\|\Delta\mathbf{A}^{\perp}[*,i]\|_2}{\|\tilde{\mathbf{A}}[*,i]\|_2}\leq\frac{\sqrt{\epsilon^2-\lambda^2}\|\mathbf{A}[*,i]\|_2}{(1-\lambda)\|\mathbf{A}[*,i]\|_2}=\frac{\sqrt{\epsilon^2-\lambda^2}}{1-\lambda}. \]
Moreover,
\[ \frac{\epsilon^2}{1-\epsilon^2}-\frac{\epsilon^2-\lambda^2}{(1-\lambda)^2}=\frac{(\lambda-\epsilon^2)^2}{(1-\epsilon^2)(1-\lambda)^2}\geq0. \]
Since both sides are nonnegative, this implies
\[ \frac{\sqrt{\epsilon^2-\lambda^2}}{1-\lambda}\leq\frac{\epsilon}{\sqrt{1-\epsilon^2}}. \]
Therefore,
\[ \|\Delta\mathbf{A}^{\perp}[*,i]\|_2\leq\frac{\epsilon}{\sqrt{1-\epsilon^2}}\|\tilde{\mathbf{A}}[*,i]\|_2. \]
The inequality holds for every column $i$, including the case $\|\mathbf{A}[*,i]\|_2=0$. Thus, $C=(C_f,C_g)$ is column-wise $\left(\kappa_\mathrm{join}(\mathbf{S}_j)\cdot\epsilon,\frac{\epsilon}{\sqrt{1-\epsilon^2}}\right)$-projected backward stable on $g\circ f$ for columns originating from relation $\mathbf{S}_j$.

\textbf{Case 3: $C_g$ is Frobenius-norm-wise $\epsilon$-backward stable.}

Let $\hat{z}=C_g(\mathbf{A})=C(\mathbf{D})$ be the computed result of $g\circ f$. Since $C_g$ is Frobenius-norm-wise $\epsilon$-backward stable, there exists a perturbation matrix $\mathbf{E}$ satisfying $\hat{z}=g(\mathbf{A}+\mathbf{E})$, such that $\|\mathbf{E}\|_F\leq\epsilon\|\mathbf{A}\|_F$. 

We can write $\mathbf{A}+\mathbf{E}$ as $\tilde{\mathbf{A}}+\Delta\mathbf{A}^\perp$, where $\tilde{\mathbf{A}}$ is the projection of $\mathbf{A}+\mathbf{E}$ on the join space $\mathcal{Y}_0$, and $\Delta\mathbf{A}^\perp=\mathbf{A}+\mathbf{E}-\tilde{\mathbf{A}}$ denotes the normal perturbation. We apply the Pythagorean theorem, which states: $\|(\mathbf{A}-\tilde{\mathbf{A}})\|_F^2+\|\Delta\mathbf{A}^\perp\|_F^2=\|\mathbf{E}\|_F^2$. Therefore, $\|(\mathbf{A}-\tilde{\mathbf{A}})\|_F\leq \|\mathbf{E}\|_F\leq\epsilon\|\mathbf{A}\|_F$. Applying Proposition~\ref{proposition:condition_number_translation_with_dangling_tuples}, we obtain that there exists a perturbed database $\mathbf{D}+\Delta\mathbf{D}=\set{\mathbf{S}_1+\Delta\mathbf{S}_1,...,\mathbf{S}_r+\Delta\mathbf{S}_r}$ satisfying $\tilde{\mathbf{A}}$ is the join matrix of $\mathbf{D}+\Delta\mathbf{D}$, and $\|\Delta\mathbf{D}\|_F\leq\kappa_\mathrm{join}(\mathbf{D})\epsilon\|\mathbf{D}\|_F$.

We now analyze the normal perturbation bound. We first consider the case where $\|\mathbf{A}\|_F=0$. Since $C_g$ is Frobenius-norm-wise $\epsilon$-backward stable, we have $\|\mathbf{E}\|_F\leq\epsilon\|\mathbf{A}\|_F=0$. Thus, $\mathbf{E}=0$. By the Pythagorean theorem, \[ \|\mathbf{A}-\tilde{\mathbf{A}}\|_F^2+\|\Delta\mathbf{A}^{\perp}\|_F^2=\|\mathbf{E}\|_F^2, \] we therefore obtain $ \mathbf{A}-\tilde{\mathbf{A}}=0$ and $\Delta\mathbf{A}^{\perp}=0$. Consequently, $\tilde{\mathbf{A}}=\mathbf{A}=0$, and hence \[ \|\Delta\mathbf{A}^{\perp}\|_F\leq\frac{\epsilon}{\sqrt{1-\epsilon^2}}\|\tilde{\mathbf{A}}\|_F \] holds trivially. Now suppose that $\|\mathbf{A}\|_F>0$, and define \[ \lambda=\frac{\|\mathbf{A}-\tilde{\mathbf{A}}\|_F}{\|\mathbf{A}\|_F}. \] Since $ \|\mathbf{A}-\tilde{\mathbf{A}}\|_F\leq\|\mathbf{E}\|_F\leq\epsilon\|\mathbf{A}\|_F$, we have $0\leq\lambda\leq\epsilon<1$. By the triangle inequality, \[ \|\tilde{\mathbf{A}}\|_F=\|\mathbf{A}-(\mathbf{A}-\tilde{\mathbf{A}})\|_F\geq\|\mathbf{A}\|_F-\|\mathbf{A}-\tilde{\mathbf{A}}\|_F=(1-\lambda)\|\mathbf{A}\|_F>0. \] Therefore, division by $\|\tilde{\mathbf{A}}\|_F$ and by $(1-\lambda)\|\mathbf{A}\|_F$ is well defined.

Using the Pythagorean theorem again, we obtain \[ \|\Delta\mathbf{A}^{\perp}\|_F^2=\|\mathbf{E}\|_F^2-\|\mathbf{A}-\tilde{\mathbf{A}}\|_F^2. \] Since $\|\mathbf{E}\|_F\leq\epsilon\|\mathbf{A}\|_F$, it follows that \[ \|\Delta\mathbf{A}^{\perp}\|_F^2\leq\epsilon^2\|\mathbf{A}\|_F^2-\lambda^2\|\mathbf{A}\|_F^2=(\epsilon^2-\lambda^2)\|\mathbf{A}\|_F^2. \] Combining this bound with the lower bound on $\|\tilde{\mathbf{A}}\|_F$, we obtain \[ \frac{\|\Delta\mathbf{A}^{\perp}\|_F}{\|\tilde{\mathbf{A}}\|_F}\leq\frac{\sqrt{\epsilon^2-\lambda^2}\|\mathbf{A}\|_F}{(1-\lambda)\|\mathbf{A}\|_F}=\frac{\sqrt{\epsilon^2-\lambda^2}}{1-\lambda}. \] Moreover, \[ \frac{\epsilon^2}{1-\epsilon^2}-\frac{\epsilon^2-\lambda^2}{(1-\lambda)^2}=\frac{(\lambda-\epsilon^2)^2}{(1-\epsilon^2)(1-\lambda)^2}\geq0. \] Since both sides are nonnegative, this implies \[ \frac{\sqrt{\epsilon^2-\lambda^2}}{1-\lambda}\leq\frac{\epsilon}{\sqrt{1-\epsilon^2}}. \] Therefore, \[ \|\Delta\mathbf{A}^{\perp}\|_F\leq\frac{\epsilon}{\sqrt{1-\epsilon^2}}\|\tilde{\mathbf{A}}\|_F. \] This inequality also holds in the case $\|\mathbf{A}\|_F=0$. Thus, we have proved that $C=(C_f,C_g)$ is Frobenius-norm-wise $\left(\kappa_\mathrm{join}(\mathbf{D})\epsilon,\frac{\epsilon}{\sqrt{1-\epsilon^2}}\right)$-projected backward stable on $g\circ f$.
\end{proof}

\subsection{Proofs of Statements from Section~\ref{sec:qr-svd}}
Here we provide the proofs of the formal statements in Section~\ref{sec:qr-svd}.
\label{sec:qr-svd-proofs}

\begin{restatable}{proposition}{qrinputrelations}
\label{proposition:qr_input_relations_counterexample}

There exists an $\alpha$-acyclic join query $Q$ with no self-joins over fully reduced relations for which the standard floating-point computation of the upper triangular matrix $\mathbf{R}$, obtained from QR decomposition via Givens Rotations, is not backward stable with respect to the input database.
\end{restatable}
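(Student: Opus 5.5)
The approach is to find a join whose matrices all satisfy a polynomial identity on their Gram matrix $\mathbf A^\top\mathbf A$, and then show that floating-point Givens QR produces an $\hat{\mathbf R}$ violating that identity. The basic fact used is that if $\hat{\mathbf R}$ is exactly the $\mathbf R$-factor of some matrix $\mathbf A'$, then $\mathbf A' = \mathbf Q\hat{\mathbf R}$ with $\mathbf Q$ orthogonal. Hence
\[
\hat{\mathbf R}^\top\hat{\mathbf R}=\mathbf A'^\top\mathbf A',
\]
and this holds regardless of sign conventions on the rows of $\mathbf R$. So it suffices to exhibit a family of join matrices whose Gram matrices satisfy an identity that the computed $\hat{\mathbf R}$ fails.

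\textbf{The query.} Take the star-shaped, $\alpha$-acyclic query $Q=S_1(X,A)\bowtie S_2(X,B)\bowtie S_3(X,C)$, with no self-joins, over the fully reduced relations
\[
S_1=\{(k,a)\},\qquad S_2=\{(k,b_1),(k,b_2)\},\qquad S_3=\{(k,c_1),(k,c_2)\}.
\]
The join matrix has the four rows $(a,b_i,c_j)$ for $i,j\in\{1,2\}$. Any perturbed database keeps the same keys, so its join matrix $\mathbf A'$ has the same form with arbitrary reals $a',b_i',c_j'$. Writing $\beta=b_1'+b_2'$ and $\gamma=c_1'+c_2'$, a direct computation gives
\[
G_{AA}=4a'^2,\qquad G_{AB}=2a'\beta,\qquad G_{AC}=2a'\gamma,\qquad G_{BC}=\beta\gamma .
\]
Therefore every such Gram matrix satisfies $G_{AA}G_{BC}=G_{AB}G_{AC}$.

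\textbf{Translating the identity to $\mathbf R$.} Expand $\mathbf R^\top\mathbf R$ for an upper-triangular $\mathbf R$ with entries $r_{ij}$:
\[
G_{AA}=r_{11}^2,\qquad G_{AB}=r_{11}r_{12},\qquad G_{AC}=r_{11}r_{13},\qquad G_{BC}=r_{12}r_{13}+r_{22}r_{23}.
\]
Substituting into the identity gives $r_{11}^2\,r_{22}r_{23}=0$. So whenever $r_{11}\neq0$, any $\hat{\mathbf R}$ that is the exact $\mathbf R$-factor of some join matrix of $Q$ must satisfy $\hat r_{22}\hat r_{23}=0$.

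\textbf{Concrete instance.} Choose $a\neq0$ and $b_1\neq b_2$, so that exactly $r_{11}\neq0$ and $r_{22}=|b_1-b_2|/\sqrt2\neq0$; in exact arithmetic $r_{23}=0$. The remaining task is to pick specific floating-point values and fix the rotation order, for example column by column from the bottom row upward. Then trace the Givens computation and show that the computed $\hat r_{23}$ is a nonzero rounding residue while $\hat r_{11}$ and $\hat r_{22}$ are clearly nonzero. This contradicts $\hat r_{22}\hat r_{23}=0$, so no perturbed database explains $\hat{\mathbf R}$.

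\textbf{Main obstacle.} The hardest step is certifying $\hat r_{23}\neq0$ rigorously. I would first pick values that force irrational intermediate quantities, such as $a=1$, $b=(1,2)$, $c=(1,3)$. The rotation cosines and sines, involving $\sqrt2$ and $\sqrt3$, are then inexact. The $C$-column entries get combined through different rounded rotations whose exact cancellation in row 2 cannot survive rounding, in the same spirit as the $t=2^{p+2}$ trick of Prop.~\ref{proposition:matvec_input_relations_counterexample}.

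If a clean hand trace is too delicate, I would switch to a scaling argument like that proof. Pick $b_i,c_j$ of very different magnitudes, e.g. $c_1=t$ with $t=2^{p+2}$ and $c_2=t+1$ chosen so a sum rounds. Then show that a specific intermediate sum is rounded while its exact counterpart would cancel, which forces $\hat r_{23}\neq0$ deterministically.
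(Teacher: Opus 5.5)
Your reduction is correct and follows the same idea as the paper's proof. Every join matrix of the query satisfies a polynomial identity on its Gram matrix, and $\hat{\mathbf R}^\top\hat{\mathbf R}$ must inherit that identity. Your star query is a cleaner instance: $G_{AA}G_{BC}=G_{AB}G_{AC}$ collapses to $r_{11}^2r_{22}r_{23}=0$. The paper instead uses $G_{12}G_{34}=G_{13}G_{24}$ on a $16\times4$ Cartesian product and has to evaluate $30u^2(2+5u)\neq0$.

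The gap is the step you flag yourself: nothing in the proposal shows that the computed $\hat r_{23}$ is nonzero. The statement is exactly an existence claim about that rounded computation. The argument that an exact cancellation ``cannot survive rounding'' is not valid, because rounding errors can cancel exactly. In your instance many steps are in fact exact. The first rotation has both coefficients equal to $\operatorname{fl}(1/\operatorname{fl}(\sqrt2))$ and produces exact zeros and exact multiples of that coefficient; several later updates are exact or fall on exact rounding ties. The computed $\hat r_{23}$ depends on the row order of the join, the rotation order, the formulas for $r$ and the coefficients, and whether fused multiply-adds are used. All of these must be fixed and the computation actually evaluated. The paper's proof likewise rests on a direct IEEE double evaluation of a fully specified Givens variant. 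Your fallback does not help: $t+1$ with $t=2^{p+2}$ is not a floating-point number, so it cannot be an input value. It would be stored as $t$, giving $c_1=c_2$.

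The gap can be closed with your own instance. Take the join rows in the order $(a,b_1,c_1),(a,b_1,c_2),(a,b_2,c_1),(a,b_2,c_2)$ and the paper's variant: columns left to right, entries eliminated bottom-up against the row directly above, $r=\operatorname{fl}(\sqrt{\operatorname{fl}(x^2+y^2)})$, coefficients $\operatorname{fl}(x/r)$ and $\operatorname{fl}(y/r)$, no fused multiply-add. A careful IEEE double trace (round-to-nearest-even) for $a=1$, $(b_1,b_2)=(1,2)$, $(c_1,c_2)=(1,3)$ gives $\hat r_{11}=2$, $\hat r_{22}=1$ and $\hat r_{23}=2^{-51}$. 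Hence $\hat r_{11}^2\hat r_{22}\hat r_{23}\neq0$ and your contradiction goes through. But that evaluation is the proof, and it is missing from the proposal. Minor slip: $r_{22}=|b_1-b_2|$, not $|b_1-b_2|/\sqrt2$; this does not affect the argument.
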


\begin{proof}
Consider the four input relations
\[ S_j=\begin{array}{c|c}
\text{key} & Y_j\\ \hline
k & 1\\
k & -1
\end{array},\qquad j=1,2,3,4. \]
Their join is the full Cartesian product over the common key. We order the resulting tuples lexicographically, with $1$ preceding $-1$. Thus the materialized join matrix $\mathbf A\in\mathbb R^{16\times 4}$ contains all $16$ rows
\[ (y_1,y_2,y_3,y_4),\qquad y_j\in\{1,-1\}. \]

We compute the QR factorization of $\mathbf A$ using Givens rotations in IEEE double, whose unit roundoff is $u=2^{-53}$. The algorithm processes columns from left to right, so $j=1,2,3,4$. Denote the intermediate matrix in Givens Rotations as $\mathbf{B}$. In column $j$, it eliminates the entries below the diagonal from bottom to top. Thus, for $i=16,15,\ldots,j+1$, the algorithm eliminates the current entry $\mathbf B[i,j]$ using the entry immediately above it, $\mathbf B[i-1,j]$.

For one such elimination, set $x=\mathbf{B}[i-1,j]$, $y=\mathbf{B}[i,j]$.
If $y=0$, no rotation is needed and the algorithm proceeds to the next entry. Otherwise, it computes
\[ r=fl\!\left(\sqrt{fl(x^2+y^2)}\right),\qquad c=fl(x/r),\qquad s=fl(y/r). \]
The corresponding Givens rotation is then applied to the two rows $i-1$ and $i$. Only columns $k=j,j+1,\ldots,4$ need to be updated, since columns $1,\ldots,j-1$ have already been reduced and have zero on those rows and the values will stay zero after any rotation. For each such $k$, the update is
\[ \begin{pmatrix}\mathbf{B}[i-1,k]\\\mathbf{B}[i,k]\end{pmatrix}\leftarrow fl\!\left(\begin{pmatrix}c&s\\-s&c\end{pmatrix}\begin{pmatrix}\mathbf{B}[i-1,k]\\\mathbf{B}[i,k]\end{pmatrix}\right). \]
After the update, the entry $\mathbf{B}[i,j]$, which we target to annihilate, is explicitly set to zero. 

In exact arithmetic, the columns of $\mathbf A$ are mutually orthogonal and each column has squared norm $16$. Hence $\mathbf A^\top\mathbf A=16\mathbf I_4,$
and the exact QR factor with positive diagonal is $\mathbf{R}=4\mathbf{I}_4.$

Let $\hat{\mathbf R}$ be the upper-triangular factor computed by the Givens algorithm described above. A direct IEEE double evaluation of this specified algorithm gives the following leading $4\times4$ triangular factor:
\[\hat{\mathbf R}
=
4\mathbf I_4
+
u
\begin{pmatrix}
0 & -\frac52 & \frac54 & \frac32\\
0 & 16 & -1 & -1\\
0 & 0 & 8 & -1\\
0 & 0 & 0 & 0
\end{pmatrix}
=
\begin{pmatrix}
4 & -\frac52u & \frac54u & \frac32u\\
0 & 4+16u & -u & -u\\
0 & 0 & 4+8u & -u\\
0 & 0 & 0 & 4
\end{pmatrix}.
\]

We now show that this $\hat{\mathbf R}$ cannot be the exact $R$ factor of the join of any perturbed input relations, no matter how large the perturbation is. A perturbation of the numerical attributes of the input relations replaces the two numerical values in $S_1,S_2,S_3,S_4$ by
$(a'_1,a'_2), (b'_1,b'_2),(c'_1,c'_2),(d'_1,d'_2)$,
respectively. Thus the perturbed relations are
\[S'_1(K,Y_1)=\{(k,a'_1),(k,a'_2)\},\qquad
S'_2(K,Y_2)=\{(k,b'_1),(k,b'_2)\},\]
\[S'_3(K,Y_3)=\{(k,c'_1),(k,c'_2)\},\qquad
S'_4(K,Y_4)=\{(k,d'_1),(k,d'_2)\}.\]
Let $\mathbf A'$ be the join matrix obtained from the natural join of these perturbed relations.

Suppose, for contradiction, that $\hat{\mathbf R}$ is the exact $R$ factor of $\mathbf A'$. Then there exists a matrix $\mathbf Q'$ with orthonormal columns such that
$\mathbf A'=\mathbf Q'\hat{\mathbf R}.$
Therefore
$\mathbf A'^{\top}\mathbf A'=\hat{\mathbf R}^{\top}\hat{\mathbf R}$. Let
$\mathbf G=\mathbf A'^{\top}\mathbf A'.$
Also define
\[\sigma_a=a'_1+a'_2,\qquad
\sigma_b=b'_1+b'_2,\qquad
\sigma_c=c'_1+c'_2,\qquad
\sigma_d=d'_1+d'_2.\]
Since the join is the full Cartesian product, the off-diagonal entries of $\mathbf G$ satisfy
\[\mathbf G[1,2]=4\sigma_a\sigma_b,\quad
\mathbf G[1,3]=4\sigma_a\sigma_c,\quad
\mathbf G[1,4]=4\sigma_a\sigma_d,\]
\[\mathbf G[2,3]=4\sigma_b\sigma_c,\quad
\mathbf G[2,4]=4\sigma_b\sigma_d,\quad
\mathbf G[3,4]=4\sigma_c\sigma_d.\]
Hence every Gram matrix obtained from any such perturbed input relations must satisfy the polynomial invariant
\[\mathbf G[1,2]\mathbf G[3,4]=\mathbf G[1,3]\mathbf G[2,4]=\mathbf G[1,4]\mathbf G[2,3].\]

Now define the Gram matrix implied by the computed triangular factor as
$\hat{\mathbf G}=\hat{\mathbf R}^{\top}\hat{\mathbf R}.$
Using the exact real values of the displayed entries of $\hat{\mathbf R}$, we obtain
\[\hat{\mathbf G}[1,2]\hat{\mathbf G}[3,4]
-
\hat{\mathbf G}[1,3]\hat{\mathbf G}[2,4]
=
30u^2(2+5u)\neq 0.\]
Thus $\hat{\mathbf G}$ violates the invariant that every Gram matrix arising from perturbed input relations must satisfy. This contradicts
$\mathbf A'^{\top}\mathbf A'=\hat{\mathbf R}^{\top}\hat{\mathbf R}.$
Therefore there are no real perturbed input values
\[a'_1,a'_2,b'_1,b'_2,c'_1,c'_2,d'_1,d'_2\]
whose join matrix has $\hat{\mathbf R}$ as an exact QR factor.

Thus, in this example, the computed Givens QR factor is not even backward-engineerable from any perturbation of the input relations, let alone from a small backward perturbation. This proves that Givens QR, although backward stable with respect to an arbitrary materialized join matrix, need not be backward stable with respect to the input relations that generate the join matrix.
\end{proof}

The following lemma recalls the backward stability result of the QR decomposition using Givens rotations by Higham~\cite[Theorem~19.10]{Higham2002}.
\begin{lemma}
\label{lemma:qr_backward_original}
Let $\hat{\vect{R}}$ be the computed upper triangular matrix obtained by performing the QR decomposition of $\vect{A} \in \mathbb{R}^{m \times n}$ using a sequence of Givens rotations. Then there exists an orthogonal matrix $\vect{Q}'$ and a perturbation matrix $\Delta \vect{A}$, such that $\vect{A} + \Delta \vect{A} = \vect{Q}' \hat{\vect{R}}$, where $\Delta \vect{A}$ satisfies the column-wise bounds $\|\Delta \vect{A}[*,j]\|_2 \le \tilde{\gamma}_{m+n-2} \, \|\vect{A}[*,j]\|_2,  j \in [n]$, where the constant inside $\tilde{\gamma}_{m+n-2}$ is a modest integer constant independent of $m$ and $n$.
\end{lemma}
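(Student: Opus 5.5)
The plan is to reproduce the classical column-wise backward error analysis of Givens QR (Higham, Ch.~19) rather than derive anything new. It has three ingredients: a single-rotation error bound, a grouping of the rotations into a short sequence of stages, and an accumulation lemma for products of perturbed orthogonal maps. Everything is done column by column. Each column of $\vect A$ is transformed only by left multiplication with rotations, so the error in column $j$ depends only on $\vect A[*,j]$ and the rotations applied to it. This is what gives a column-wise rather than merely normwise bound.

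\emph{Step 1 (one rotation).} For a rotation acting on rows $(p,q)$ we first bound the computed parameters. From $\hat r=\operatorname{fl}(\sqrt{\operatorname{fl}(x^2+y^2)})$, $\hat c=\operatorname{fl}(x/\hat r)$ and $\hat s=\operatorname{fl}(y/\hat r)$, the standard model gives $\hat c=c(1+\theta_4)$ and $\hat s=s(1+\theta_4')$, where $(c,s)$ is the exact rotation for the current $(x,y)$. Applying the computed rotation to a vector $\vect v$, with two multiplications and one addition per entry, yields $\operatorname{fl}(\hat{\vect G}\vect v)=(\vect G+\Delta\vect G)\vect v$ with $\|\Delta\vect G\|_F\le\tilde\gamma_1$. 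Here $\vect G$ is an exact orthogonal rotation. Explicitly setting the annihilated entry to zero instead of computing it is covered by the same bound: the exactly rotated value at that position is a perturbation of $0$ of the same order, and it is absorbed into $\Delta\vect G$.

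\emph{Step 2 (grouping into stages).} Rotations acting on disjoint row pairs commute, and they can be viewed as one orthogonal matrix $\vect W_k$. A block of disjoint perturbed rotations satisfies $\operatorname{fl}(\hat{\vect W}_k\vect v)=(\vect W_k+\Delta\vect W_k)\vect v$ with $\|\Delta\vect W_k\vect v\|_2\le\tilde\gamma_1\|\vect v\|_2$. This is because the perturbations live on disjoint row pairs, so their effects add in squares. We then use the Sameh--Kuck parallel ordering for the bottom-to-top column elimination. It schedules all $mn-n(n+1)/2$ rotations into $m+n-2$ stages of pairwise disjoint rotations and produces the same sequence of annihilations. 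Floating-point results are unchanged by reordering commuting rotations on disjoint rows, so the computed $\hat{\vect R}$ is exactly that of the staged scheme.

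\emph{Step 3 (accumulation).} We apply Higham's Lemma~19.3 to the stages. If $\vect b_{k+1}=(\vect W_k+\Delta\vect W_k)\vect b_k$ with $\|\Delta\vect W_k\vect b_k\|_2\le\delta\|\vect b_k\|_2$, then after $t$ stages $\vect b_{t+1}=\vect W^\top(\vect b_1+\Delta\vect b)$, where $\vect W^\top=\vect W_t\cdots\vect W_1$ and $\|\Delta\vect b\|_2\le\bigl((1+\delta)^t-1\bigr)\|\vect b_1\|_2\le\tilde\gamma_t\|\vect b_1\|_2$. Taking $\vect b_1=\vect A[*,j]$ and $t=m+n-2$ gives $\hat{\vect R}[*,j]=\vect W^\top(\vect A[*,j]+\Delta\vect A[*,j])$ with the claimed column bound. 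Setting $\vect Q'=\vect W$, which is one fixed orthogonal matrix for all columns, gives $\vect A+\Delta\vect A=\vect Q'\hat{\vect R}$.

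The main obstacle is Step~2. The natural sequential count of about $mn$ rotations would only give $\tilde\gamma_{mn}$. The improvement to $m+n-2$ needs two things: the disjoint-stage schedule, and the check that each stage's combined perturbation is bounded by $\tilde\gamma_1$ relative to the current vector. That check follows from orthogonality of the disjoint blocks. We also have to confirm that the bottom-to-top ordering used in the paper fits such a schedule; I would verify this by exhibiting the Sameh--Kuck stage index $k=m-i+2j-1$ for the rotation that annihilates $(i,j)$.
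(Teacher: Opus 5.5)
Your proposal is correct and follows the same route as the result the paper relies on: the paper gives no proof of its own and simply recalls Higham's Theorem~19.10, whose proof combines your three ingredients (the single-rotation error bound, the grouping of the bottom-to-top eliminations into $m+n-2$ stages of disjoint rotations, and the stage-wise accumulation for a column). One small correction: the exact rotation is built from the computed pair $(x,y)$, so it maps the annihilated entry to exactly $0$, and explicitly zeroing that entry therefore introduces no error at all, rather than a small error that $\Delta\vect G$ has to absorb.
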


Therefore, the computation of the upper triangular factor in the QR decomposition is backward stable. It produces the exact upper triangular matrix of a matrix that differs from $\mathbf{A}$ by a small relative perturbation.

\qrnaiveprojected*

\begin{proof}
By Lemma~\ref{lemma:qr_backward_original}, there exist a perturbation $\mathbf E\in\mathbb R^{m\times n}$ and an orthogonal matrix $\mathbf Q'$ such that $\mathbf A+\mathbf E=\mathbf Q'\hat{\mathbf R}$,
with $\|\mathbf E[*,i]\|\leq\tilde\gamma_{m+n-2}\|\mathbf A[*,i]\|$ for every column $i$. Theorem~\ref{theorem:bs_implies_pbs} immediately tells us that the process of computing $g(f)$ is column-wise $(\kappa_\mathrm{join}(\mathbf{S}_j)\cdot\tilde{\gamma}_{m+n-2}, \frac{\tilde{\gamma}_{m+n-2}}{\sqrt{1-\tilde{\gamma}_{m+n-2}^2}})$-projected backward stable. This completes our proof.
\end{proof}

\begin{lemma}[Data independence of the \textsc{FiGaRo} rotations]
\label{lemma:figaro_rotations_data_independent}
Consider the \textsc{FiGaRo} setting in Def.~\ref{def:figaro-setting}. There exists an orthogonal matrix $\mathbf{Q}_0$, depending only on the fixed join keys and the join tree, such that, for every choice of the data columns,
\[\mathbf{A}=\mathbf{Q}_0\begin{bmatrix}\mathbf{R}_0\\\mathbf{0}\end{bmatrix},\]
where $\mathbf{A}$ is the corresponding materialised join matrix and $\mathbf{R}_0$ is the exact output of \textsc{FiGaRo}.
\end{lemma}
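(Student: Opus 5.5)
The plan is to follow the structure of \textsc{FiGaRo} and show that every rotation it applies, viewed as an operation on the materialised join matrix $\mathbf A$, is an orthogonal transformation of the rows of $\mathbf A$ whose coefficients depend only on tuple counts. These counts are determined by the join keys and the join tree, not by the data values. The natural tool is Lemma~\ref{lemma:join_consistent_subspace}: with keys fixed, $\mathbf A$ is a linear function of the data entries. Hence it suffices to show that the exact map $\mathbf D\mapsto\mathbf R_0$ has the form $\mathbf R_0=\mathbf P\,\mathbf A$. Here $\mathbf P$ consists of the first $d'$ rows of an orthogonal $m\times m$ matrix $\mathbf Q_0^\top$ fixed by the keys and join tree, and the remaining $m-d'$ rows of $\mathbf Q_0^\top\mathbf A$ vanish identically.

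I would proceed by induction over the join tree, processed bottom-up as in \textsc{FiGaRo}. At a node, the rows of $\mathbf A$ sharing the same key combination form blocks in which the columns of a given relation are a constant vector, the copy of one tuple, repeated $\phi$ times. For such a group, \textsc{FiGaRo} applies the orthogonal transformation that maps a block of identical rows to one row scaled by $\sqrt{\phi}$ plus $\phi-1$ rows containing the ``differences''. This is a fixed Householder/Givens-type matrix $\mathbf H_\phi$ with $\mathbf H_\phi\mathbf 1=\sqrt{\phi}\,\mathbf e_1$. The sequence of Givens rotations realising this step uses angles $\arctan$ of ratios of square roots of counts. It therefore depends only on $\phi$ and the ordering of the tuples, both of which are determined by the keys. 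I would make this precise by writing $\mathbf A$, after a key-determined row permutation $\mathbf\Pi$, as a Kronecker-structured arrangement of blocks. I would then define $\mathbf Q_0^\top=\mathbf G_T\cdots\mathbf G_1\mathbf\Pi$ as the product of these block-diagonal count-dependent rotations, one factor per join-tree step.

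The key verification is that after applying $\mathbf Q_0^\top$, every row outside the first $d'$ is zero for every choice of data. This holds for the following reason. At each step, the ``difference'' rows produced by $\mathbf H_\phi$ in the columns of a relation whose values are replicated constantly within the block are exactly zero. The nonzero content is therefore pushed into the rows that \textsc{FiGaRo} outputs, and these rows are exactly the entries of $\mathbf R_0$ in its exact-arithmetic definition. Linearity then gives the identity $\mathbf A=\mathbf Q_0\left[\begin{smallmatrix}\mathbf R_0\\\mathbf 0\end{smallmatrix}\right]$ for all data simultaneously, since both sides are linear in the data and agree as formal expressions.

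I expect the main obstacle to be the bookkeeping that matches \textsc{FiGaRo}'s factorised head/tail computations with the rotations on the materialised matrix. In particular, at inner nodes of the join tree the scaling factors are products of subtree counts, and one must check that the compound rotations compose into a single orthogonal matrix and do not depend on previously computed data values. I would handle this by first establishing the statement for a single join of two relations, a two-node tree, where the structure is an explicit Kronecker product with $\mathbf H_\phi$. I would then inductively treat each subtree's output as a new relation whose rows carry count-scaled weights. The inductive hypothesis supplies an orthogonal key-dependent transformation for the subtree, and tensoring it with the identity on the remaining factors keeps orthogonality.
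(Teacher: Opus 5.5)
Your proposal is correct and is essentially the paper's argument. You view each exact \textsc{FiGaRo} step as an ordinary or generalized head-and-tail (Givens) transformation on rows of the materialised join, whose coefficients and placement depend only on group sizes and count aggregates fixed by the join keys and join tree, and you take $\mathbf Q_0$ as the transpose of their product (with key-determined row permutations), so that $\mathbf Q_0^{\top}\mathbf A=[\mathbf R_0;\mathbf 0]$ for every choice of data.

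The one difference is that the paper does not carry out the Kronecker/induction-over-the-join-tree bookkeeping you flag as the main obstacle. It simply invokes the exact-arithmetic equivalence between \textsc{FiGaRo} and these block transformations, proved in the \textsc{FiGaRo} paper. Your route re-derives that equivalence instead, so it must track \textsc{FiGaRo}'s exact scalings and orderings to ensure the surviving rows are precisely its $\mathbf R_0$.
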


\begin{proof}
\textsc{FiGaRo} is equivalent, in exact arithmetic, to applying a collection of ordinary and generalized Givens transformations to blocks of the materialised join matrix~\cite{FIGARO:VLDBJ:2023}.

For an ordinary head-and-tail transformation applied to a block with $m$ rows, the corresponding Givens rotations have coefficients
\[\sin\theta_i=-\frac{1}{\sqrt{i}},\qquad \cos\theta_i=\sqrt{\frac{i-1}{i}},\qquad i=2,\ldots,m.\]
These coefficients depend only on the number $m$ of rows in the block and not on any data value.

For a generalized head-and-tail transformation with weight vector $\mathbf{v}=(v_1,\ldots,v_m)^{\mathsf T}$, the corresponding rotations have coefficients
\[\sin\theta_i=-\frac{v_i}{\|\mathbf{v}_{1:i}\|_2},\qquad \cos\theta_i=\frac{\|\mathbf{v}_{1:i-1}\|_2}{\|\mathbf{v}_{1:i}\|_2},\qquad i=2,\ldots,m.\]
These coefficients depend only on the weight vector $\mathbf{v}$.

Once the join keys and the join tree are fixed, every key group processed by \textsc{FiGaRo} is fixed, as is the number of rows in each group. Moreover, all entries of the vectors $\mathit{scales}$, and hence all weight vectors used by the generalized head-and-tail transformations, are determined by group sizes and by the count aggregates $\Phi_i^{\circ}$, $\Phi_i^{\uparrow}$, and $\Phi_i^{\downarrow}$. These quantities depend only on the fixed join keys and not on the data columns.

It follows that every local Givens rotation represented by \textsc{FiGaRo}, as well as the order in which these rotations act on the rows of the materialised join, is fixed once the join keys, join tree, and row orderings are fixed. Let $\mathbf{G}_0$ be the product of all these rotations, together with any fixed row permutations required by the chosen ordering. Then $\mathbf{G}_0$ is orthogonal, depends only on the join keys and the join tree, and satisfies
\[\mathbf{G}_0\mathbf{A}=\begin{bmatrix}\mathbf{R}_0\\\mathbf{0}\end{bmatrix}\]
for every choice of the data columns. Setting $\mathbf{Q}_0:=\mathbf{G}_0^{\mathsf T}$ gives
\[\mathbf{A}=\mathbf{Q}_0\begin{bmatrix}\mathbf{R}_0\\\mathbf{0}\end{bmatrix}.\]
Therefore, $\mathbf{Q}_0$ depends only on the fixed join keys and the join tree, and not on the data values.
\end{proof}

We recall the following result, stated as Lemma 3.6 by Higham~\cite{Higham2002}, adapted here to our notation:

\begin{lemma}
\label{lemma:higham_3_6}

If $\mathbf{G}_j+\Delta\mathbf{G}_j\in\mathbb{R}^{n\times n}$ satisfies $\|\Delta\mathbf{G}_j\|\leq \delta_j\|\mathbf{G}_j\|$ for all $j$, where $\|\cdot\|$ is a consistent norm, then

\[ \left\|\prod_{j=0}^{m}(\mathbf{G}_j+\Delta\mathbf{G}_j)-\prod_{j=0}^{m}\mathbf{G}_j\right\|\leq\left(\prod_{j=0}^{m}(1+\delta_j)-1\right)\prod_{j=0}^{m}\|\mathbf{G}_j\|. \]

\end{lemma}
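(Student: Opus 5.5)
The proof goes by induction on $m$, using a one-step telescoping identity together with the two properties of a consistent norm: the triangle inequality and submultiplicativity $\|\mathbf X\mathbf Y\|\leq\|\mathbf X\|\,\|\mathbf Y\|$.

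For $k=0,\ldots,m$ we write
\[ \mathbf P_k=\prod_{j=0}^{k}(\mathbf G_j+\Delta\mathbf G_j),\qquad \mathbf T_k=\prod_{j=0}^{k}\mathbf G_j, \]
with the factors multiplied in the same fixed order, say left to right, in both products. We also write $\pi_k=\prod_{j=0}^{k}(1+\delta_j)$ and $\tau_k=\prod_{j=0}^{k}\|\mathbf G_j\|$. The goal is $\|\mathbf P_k-\mathbf T_k\|\leq(\pi_k-1)\,\tau_k$.

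\emph{Base case $k=0$.} Here $\mathbf P_0-\mathbf T_0=\Delta\mathbf G_0$. The hypothesis gives $\|\Delta\mathbf G_0\|\leq\delta_0\|\mathbf G_0\|=(\pi_0-1)\tau_0$.

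\emph{Inductive step.} For $k\geq1$ we split the difference as
\[ \mathbf P_k-\mathbf T_k=(\mathbf P_{k-1}-\mathbf T_{k-1})(\mathbf G_k+\Delta\mathbf G_k)+\mathbf T_{k-1}\Delta\mathbf G_k . \]
Three bounds are then combined.
\begin{itemize}
\item The triangle inequality gives $\|\mathbf G_k+\Delta\mathbf G_k\|\leq(1+\delta_k)\|\mathbf G_k\|$.
\item Submultiplicativity gives $\|\mathbf T_{k-1}\|\leq\tau_{k-1}$.
\item The induction hypothesis gives $\|\mathbf P_{k-1}-\mathbf T_{k-1}\|\leq(\pi_{k-1}-1)\tau_{k-1}$.
\end{itemize}
Applying submultiplicativity to each term of the split, we obtain
\[ \|\mathbf P_k-\mathbf T_k\|\leq\bigl[(\pi_{k-1}-1)(1+\delta_k)+\delta_k\bigr]\tau_{k-1}\|\mathbf G_k\|=(\pi_k-1)\,\tau_k . \]
Taking $k=m$ yields the claim.

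There is no real obstacle in this argument. The only points needing care are that the splitting identity respects the chosen (noncommutative) order of the factors, and that consistency of the norm is exactly what allows the norm of each product to be bounded by the product of the norms.
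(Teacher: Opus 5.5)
Your proof is correct: the splitting $\mathbf P_k-\mathbf T_k=(\mathbf P_{k-1}-\mathbf T_{k-1})(\mathbf G_k+\Delta\mathbf G_k)+\mathbf T_{k-1}\Delta\mathbf G_k$, combined with submultiplicativity and the triangle inequality, gives exactly $(\pi_k-1)\tau_k$. Because the right-hand side is symmetric in the indices, relabeling also covers the right-to-left ordering $\mathbf G_s\cdots\mathbf G_1$ in which the paper later applies the lemma. The paper gives no proof of its own and simply quotes Higham's Lemma~3.6; your telescoping induction is the standard argument for that lemma.
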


Before presenting the proof of Theorem~\ref{theorem:QR_projected_stable}, first of all, we show that the \textsc{FiGaRo} algorithm guarantees column-wise forward stability.
\begin{lemma}
\label{lemma:figaro_one_step_lemma}
Consider the \textsc{FiGaRo} setting (Def.~\ref{def:figaro-setting}). Let $\mathbf{R}_0$ and $\hat{\mathbf{R}}_0$ be the exact and computed outputs of \textsc{FiGaRo} applied to 
the input database $\mathbf{D}$, respectively. Then for every column $i$, we have 
\[\|(\hat{\mathbf{R}}_0 - \mathbf{R}_0)[*,i]\|_2\;\le\;\tilde{\gamma}_{d}\|(\mathbf{R}_0)[*,i]\|_2\]
\end{lemma}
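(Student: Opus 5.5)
The plan is to use Lemma~\ref{lemma:figaro_rotations_data_independent} to write the exact \textsc{FiGaRo} output as a fixed orthogonal transformation of the materialised join matrix. We then view the floating-point execution as applying a perturbed version of that same fixed product of rotations to $\mathbf{A}$. Because the rotations are data-independent, the rotation coefficients themselves (sines and cosines computed from counts $\Phi$, group sizes and scales) are data-free. They are computed once with relative error $O(u)$, and the errors affect only the rotation matrices, not their structure.

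\textbf{Step 1 (error model per rotation).} Each computed Givens or generalized head-and-tail step acting on a pair of rows satisfies $\operatorname{fl}(\mathbf{G}\mathbf{x})=(\mathbf{G}+\Delta\mathbf{G})\mathbf{x}$ with $\|\Delta\mathbf{G}\|_2\le\tilde\gamma_1$ (Higham, Lemma~19.8 style), including the error in computing $c,s$ from exact integer counts.

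\textbf{Step 2 (depth of the computation).} We argue that every entry of $\mathbf{R}_0$ (equivalently every row of the output) is affected by only $O(d)$ rotations in sequence. The head-and-tail transformations in \textsc{FiGaRo} process each key group with a chain of length equal to the group size. The bottom-up pass along the join tree then touches each input tuple a bounded number of times, so the total chain length along any computation path is at most $c\,d$. This is the key place where $d$ replaces $m$. It uses that \textsc{FiGaRo} never forms the $m$ rows but works on aggregated rows, scaled by $\sqrt{\Phi}$ factors that are themselves computed with $O(u)$ error.

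\textbf{Step 3 (accumulate, columnwise).} Since all rotations act on rows, column $i$ of the output is $\prod_j(\mathbf{G}_j+\Delta\mathbf{G}_j)$ applied to column $i$ of the (implicitly represented) input. Applying Lemma~\ref{lemma:higham_3_6} with the spectral norm, $\|\mathbf{G}_j\|_2=1$ and $\delta_j=\tilde\gamma_1$ over at most $cd$ factors gives an error of $(1+\tilde\gamma_1)^{cd}-1\le\tilde\gamma_d$, times $\|\mathbf{A}[*,i]\|_2$. By orthogonality of $\mathbf{Q}_0$ we have $\|\mathbf{A}[*,i]\|_2=\|\mathbf{R}_0[*,i]\|_2$, which yields the stated bound.

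The main obstacle is Step~2 together with the subtlety in Step~3 that \textsc{FiGaRo} operates on a compressed representation rather than literally on $\mathbf{A}$. The product in Lemma~\ref{lemma:higham_3_6} is over $m\times m$ matrices, while the actual arithmetic happens on at most $d$ rows, and the scaling by square roots of counts is not itself a rotation. I would first try to handle this by embedding each compressed step as an exact block-orthogonal action on $\mathbf{A}$ that equals a scaled copy of the computed row update. The rounding error then appears as a perturbation of a single orthogonal factor per step, with norm $\tilde\gamma_1$. The scale computations then contribute one additional $O(u)$ factor per tuple, absorbed into the constant in $\tilde\gamma_d$.
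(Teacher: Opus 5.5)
\textbf{Gap: Step~1 is asserted, not proved.} Your architecture matches the paper's. You write the exact computation as a product of data-independent orthogonal factors acting on a virtual expansion of $\mathbf{A}[*,i]$, bound the perturbed product with Lemma~\ref{lemma:higham_3_6} in the spectral norm, and finish with $\|\mathbf{A}[*,i]\|_2=\|\mathbf{R}_0[*,i]\|_2$. But Step~1 is where the lemma actually lives. \textsc{FiGaRo} does not apply Givens rotations to pairs of rows. It evaluates the head $\mathcal{H}$ and the tails $\mathcal{T}_j$ of a group of $p$ rows in closed form from a running sum $\sum_{k\le j}\mathbf{v}[k]\mathbf{x}[k]$. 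Higham's single-rotation analysis, which you cite, therefore does not describe its arithmetic. Your final paragraph treats one compressed step, i.e.\ a whole head-tail block, as one orthogonal factor with error $\tilde\gamma_1$, and that claim is false. The contribution of $\mathbf{x}[1]$ to $\mathcal{T}_{p-1}$ passes through the $p-2$ rounded additions of the running sum. So you only get the entrywise bound $|\Delta\mathbf{G}_t|\le\gamma_{q_t}|\mathbf{G}_t|$ with $q_t=O(p)$. Turning this into a spectral-norm bound needs a bound on $\bigl\||\mathbf{G}_t|\bigr\|_2$ that does not depend on $p$. For a general $p\times p$ orthogonal matrix this quantity can be $\sqrt{p}$ (take a normalized Hadamard matrix), which would bring back a dependence on group sizes. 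This is the ingredient your proposal is missing. The paper uses the sign structure of head-tail blocks: the head row has unit norm, and the tail is $\mathbf{T}=\mathbf{P}-\mathbf{L}$, where $\mathbf{P}$ has at most one entry, of modulus at most $1$, per row and column. From this it proves $\bigl\||\mathbf{G}_t|\bigr\|_2\le\sqrt{10}$, hence $\|\Delta\mathbf{G}_t\|_2\le\gamma_{4q_t}$. It then bounds the sum of the stage depths, $\sum_t q_t\le cd$, rather than a path length.

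\textbf{A possible rescue of your per-rotation model.} You could avoid the $\sqrt{10}$ bound, but only with an argument you did not give. Regard the computed prefix sum $\hat{S}_j$ as $\rho_j$ (exact) times a virtual running head. Then each update $(\hat{S}_j,\mathbf{x}[j+1])\mapsto(\hat{S}_{j+1},\hat{\mathcal{T}}_j)$ is a $2\times 2$ generalized rotation with $O(u)$ componentwise coefficient errors, hence $O(u)$ spectral error. Each application of a scale $\sqrt{\lambda}$ perturbs only the unit-norm head row of an ordinary head-tail on equal copies, so it also has spectral error $O(u)$. You would then apply Lemma~\ref{lemma:higham_3_6} to the product of all such elementary updates on column $i$. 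There are $O(d)$ of them, provided scale products are consolidated so that each stored entry incurs $O(1)$ scaling roundings, which your ``one factor per tuple'' remark takes for granted. Either count all these updates directly, or group disjoint ones into block-diagonal stages; a bound on path length alone is not what the lemma needs. This route would yield the stated $\tilde\gamma_d$ bound, but the identification of the running-sum arithmetic with perturbed pairwise rotations is the substance of the proof, and it is missing from your sketch.
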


\begin{proof}

If $\mathbf A[*,i]=\mathbf 0$, then all values computed for this column are zero and the result follows immediately. We therefore assume that $\mathbf A[*,i]\neq\mathbf 0$.

Lemma~\ref{lemma:figaro_rotations_data_independent} states that there is an orthogonal matrix $\mathbf Q_0$ such that
\[
\mathbf A=\mathbf Q_0
\begin{bmatrix}
\mathbf R_0\\
\mathbf 0
\end{bmatrix}.
\]
Consequently,
\[
\|\mathbf A[*,i]\|_2
=
\left\|
\begin{bmatrix}
\mathbf R_0[*,i]\\
\mathbf 0
\end{bmatrix}
\right\|_2
=
\|\mathbf R_0[*,i]\|_2.
\]

We adopt the following implementation choice of \textsc{FiGaRo}: we assume that all tuple-count aggregates and scale radicands are computed exactly, such that every scale is stored through its exact integer radicand. Every scale appearing in \textsc{FiGaRo} is the square root of a positive integer. In particular, $\sqrt{\lambda_1}\cdots\sqrt{\lambda_k}=\sqrt{\lambda_1\cdots\lambda_k}$. Moreover, if the weights of a generalized head-tail computation are $v_j=\sqrt{\lambda_j}$, then each squared prefix norm is the integer
\[
v_1^2+\cdots+v_j^2=\lambda_1+\cdots+\lambda_j.
\]
Hence products of count-based scales and squared prefix norms can be maintained exactly as integer radicands. A square root is rounded only when the corresponding coefficient is formed and used. This consolidates each product of scales that occurs at one point in the algorithm; scale applications separated by other transformations are not moved or combined, and are counted separately below.

We now use the stagewise interpretation of the exact \textsc{FiGaRo} computation. At any point, conceptually restore the repeated rows represented by the current scale factors, append the rows already written to \texttt{Out}, and pad the rows not stored by \textsc{FiGaRo} with zeros. This gives a vector with the same number of entries as $\mathbf A[*,i]$. For example, multiplying a stored value $x$ by $\sqrt{\lambda}$ represents the ordinary head-tail transformation
\[
\begin{bmatrix}x\\ \vdots\\ x\end{bmatrix}\longmapsto\begin{bmatrix}\sqrt{\lambda}\,x\\0\\\vdots\\0\end{bmatrix}
\]
on $\lambda$ equal virtual copies of $x$. Generalized head-tail operations similarly combine blocks whose sizes are represented by their scale factors. Thus each exact local operation is an orthogonal transformation of the padded virtual vector.

Order these transformations as $\mathbf G_1,\ldots,\mathbf G_s$, absorbing exact permutations between stages. Each $\mathbf G_t$ is orthogonal and, after a suitable exact permutation of coordinates, is block diagonal with ordinary or generalized head-tail blocks and identity blocks. Define
\[
\mathbf z_0=\mathbf A[*,i],
\qquad
\mathbf z_t=\mathbf G_t\mathbf z_{t-1},
\qquad
t=1,\ldots,s.
\]
Every $\mathbf G_t$ is orthogonal, and therefore $\|\mathbf z_t\|_2=\|\mathbf z_{t-1}\|_2=\|\mathbf A[*,i]\|_2$.
After the final stage,
\[
\mathbf z_s=
\begin{bmatrix}
\mathbf R_0[*,i]\\
\mathbf 0
\end{bmatrix}.
\]

Apply the same virtual expansion to the floating-point values stored in \texttt{Data} and \texttt{Out}. Denote the resulting state after stage $t$ by $\hat{\mathbf z}_t$. Then
\[
\hat{\mathbf z}_0=\mathbf z_0,\qquad \hat{\mathbf z}_s=\begin{bmatrix}\hat{\mathbf R}_0[*,i]\\\mathbf 0\end{bmatrix}.
\]

We next analyze one stage. Consider one active generalized head-tail block with input vector $\mathbf{x}$ and positive weight vector $\mathbf{v}$. Let $\rho_j=\sqrt{\sum_{k=1}^j \mathbf{v}[k]^2}$. The exact head and tails computed by this block are
\[
\mathcal H(\mathbf{x},\mathbf{v})=\frac{1}{\rho_p}\sum_{k=1}^p \mathbf{v}[k]\mathbf{x}[k],
\]
and, for $j=1,\ldots,p-1$,
\[
\mathcal T_j(\mathbf{x},\mathbf{v})=\frac{\rho_j\mathbf{x}[j+1]-\dfrac{\mathbf{v}[j+1]}{\rho_j}\sum_{k=1}^j \mathbf{v}[k]\mathbf{x}[k]}{\rho_{j+1}}.
\]
An ordinary head-tail block is the special case $\mathbf{v}[1]=\cdots=\mathbf{v}[p]=1$.

$\mathcal{T}_j(\mathbf x,\mathbf v)$ is also the formula evaluated by the sequential running-sum implementation. Importantly, a fixed input $x_k$ contributes through only one algebraic term to each nonzero output coefficient. Thus, after expanding the rounded running sums, multiplications, divisions, square roots, and subtractions, each computed coefficient equals the corresponding exact coefficient multiplied by a product of elementary rounding factors $(1+\delta)^{\pm1}$. 

Let $q_t$ be an upper bound on the number of such elementary rounding factors occurring along any coefficient path in stage $t$. The standard $\theta$-notation therefore gives
\[
\hat{\mathbf z}_t[j]=\sum_k \mathbf G_t[j,k]\bigl(1+\theta_{jk}^{(t)}\bigr)\hat{\mathbf z}_{t-1}[k],\qquad |\theta_{jk}^{(t)}|\leq\gamma_{q_t}.
\]
Equivalently, there exists a perturbation matrix $\Delta\mathbf G_t$ such that
\[
\hat{\mathbf z}_t
=
(\mathbf G_t+\Delta\mathbf G_t)\hat{\mathbf z}_{t-1},
\qquad
|\Delta\mathbf G_t|
\leq
\gamma_{q_t}|\mathbf G_t|,
\]
where the absolute values and inequality are entry-wise. 

We now bound the norm of the entrywise absolute value of an exact head-tail block. Let
\[
\mathbf G=\begin{bmatrix}\mathbf h^{\mathsf T}\\\mathbf T\end{bmatrix}
\]
be one such block. If the size of the head-tail transformation is 1, then $\mathbf G=[1]$ and $\||\mathbf G|\|_2=1$. Suppose therefore that the size is at least $2$. Since $\mathbf G$ is orthogonal, its head has unit norm and its tail rows are orthonormal:
\[
\|\mathbf h\|_2=1,\qquad \|\mathbf T\|_2=1.
\]
Every generalized tail row contains one nonnegative coefficient corresponding to the newly introduced input and nonpositive coefficients corresponding to preceding inputs. Hence the tail matrix can be written as $\mathbf T=\mathbf P-\mathbf L$, where $\mathbf P$ and $\mathbf L$ are entrywise nonnegative. The matrix $\mathbf P$ contains the single positive coefficient in each tail row. It has at most one nonzero entry in each row and column, and every such entry has magnitude at most one. Therefore, $\|\mathbf P\|_2\leq1$. Since $\mathbf L=\mathbf P-\mathbf T$,
\[
\|\mathbf L\|_2
\leq
\|\mathbf P\|_2+\|\mathbf T\|_2
\leq2.
\]
The nonzero entry positions of $\mathbf P$ and $\mathbf L$ are disjoint, so $|\mathbf T|=\mathbf P+\mathbf L$.
It follows that
\[
\bigl\||\mathbf T|\bigr\|_2
\leq
\|\mathbf P\|_2+\|\mathbf L\|_2
\leq3.
\]
Combining the head and tail bounds gives
\[
\bigl\||\mathbf G|\bigr\|_2
\leq
\sqrt{\|\mathbf h\|_2^2+\bigl\||\mathbf T|\bigr\|_2^2}
\leq
\sqrt{1+3^2}
=
\sqrt{10}.
\]
The same estimate holds for $\mathbf G_t$, because $\mathbf G_t$ is block diagonal and the spectral norm of a block-diagonal matrix is the maximum spectral norm of its blocks. Identity and permutation blocks have norm one. Thus, $\bigl\||\mathbf G_t|\bigr\|_2\leq\sqrt{10}$.
We now have the following stagewise bound: 
\[
\begin{aligned}
\|\Delta\mathbf G_t\|_2\leq \bigl\||\Delta\mathbf G_t|\bigr\|_2\leq \gamma_{q_t}\bigl\||\mathbf G_t|\bigr\|_2\leq \sqrt{10}\,\gamma_{q_t}\leq 4\gamma_{q_t}\leq \gamma_{4q_t},
\end{aligned}
\]
provided $4q_t u<1$. Since $\mathbf G_t$ is orthogonal, $\|\mathbf G_t\|_2=1$. Hence $\|\Delta\mathbf G_t\|_2\leq\gamma_{4q_t}\|\mathbf G_t\|_2$.

Since $\hat{\mathbf z}_0=\mathbf z_0=\mathbf A[*,i]$, repeated application of $\hat{\mathbf z}_t=(\mathbf G_t+\Delta\mathbf G_t)\hat{\mathbf z}_{t-1}$ gives
\[
\hat{\mathbf z}_s=(\mathbf G_s+\Delta\mathbf G_s)\cdots(\mathbf G_1+\Delta\mathbf G_1)\mathbf z_0,
\]
whereas the exact computation satisfies
\[
\mathbf z_s=\mathbf G_s\cdots\mathbf G_1\mathbf z_0.
\]
Applying Lemma~\ref{lemma:higham_3_6} in the spectral norm therefore gives
\[
\begin{aligned}
\|\hat{\mathbf z}_s-\mathbf z_s\|_2
&\leq
\left\|
(\mathbf G_s+\Delta\mathbf G_s)\cdots(\mathbf G_1+\Delta\mathbf G_1)
-\mathbf G_s\cdots\mathbf G_1
\right\|_2
\|\mathbf z_0\|_2\\
&\leq
\left(\prod_{t=1}^{s}(1+\gamma_{4q_t})-1\right)
\left(\prod_{t=1}^{s}\|\mathbf G_t\|_2\right)
\|\mathbf z_0\|_2\\
&=
\left(\prod_{t=1}^{s}(1+\gamma_{4q_t})-1\right)
\|\mathbf z_0\|_2.
\end{aligned}
\]
Let $\nu_i:=\sum_{t=1}^{s}q_t$. Using $1+\gamma_x=\frac{1}{1-xu}$ and $\prod_{t=1}^{s}(1-4q_tu)\geq 1-4\sum_{t=1}^{s}q_tu$,
we obtain
\[
\prod_{t=1}^{s}(1+\gamma_{4q_t})\leq 1+\gamma_{4\nu_i}.
\]
Consequently, $\|\hat{\mathbf z}_s-\mathbf z_s\|_2\leq\gamma_{4\nu_i}\|\mathbf z_0\|_2$.

A head-tail computation on a group with $p$ stored rows has arithmetic depth at most a fixed constant times $p$: the running sums are updated once per row, and every head or tail uses only a fixed number of additional floating-point operations.

For each relation $S_j$, the ordinary head-tail groups partition the rows of $S_j$, so the sum of their sizes is $d_j$, where $d_j$ denotes the number of rows in $S_j$. Immediately before projecting $S_j$ toward its parent, the current \texttt{Data} matrix has at most one row for each distinct join-key value occurring in $S_j$, and therefore at most $d_j$ rows. The generalized head-tail groups partition these rows, so the sum of their sizes is also at most $d_j$.

Finally, for a fixed column, every scale product required for one stored entry is first consolidated into one exact integer radicand. Its floating-point application therefore has constant arithmetic depth, independent of the number of factors in that product. Under the no-self-join assumption, a fixed data column is carried through a unique path toward the root, and the number of stored entries of that column handled at relation $S_j$ is at most a fixed constant times $d_j$. The tail-scaling operations are bounded in the same way.

Consequently, there is an absolute implementation constant $c$ such that $\nu_i\leq c\sum_{j=1}^r d_j=cd$.
In particular, this bound has no additional dependence on the number of relations or on the size of the materialized join. Therefore, the total number of rounded operations along all stages affecting column $i$ satisfies $\nu_i\leq cd$,
where $d$ is the total number of rows in the input database and $c$ is independent of the materialized join size. Therefore,
\[
\|\hat{\mathbf z}_s-\mathbf z_s\|_2\leq\gamma_{4cd}\|\mathbf A[*,i]\|_2=\tilde{\gamma}_d\|\mathbf A[*,i]\|_2,
\]
provided $4cd u<1$.

Finally,
\[
\mathbf z_s=
\begin{bmatrix}
\mathbf R_0[*,i]\\
\mathbf 0
\end{bmatrix},
\qquad
\hat{\mathbf z}_s=
\begin{bmatrix}
\hat{\mathbf R}_0[*,i]\\
\mathbf 0
\end{bmatrix}.
\]
Hence
\[
\|\hat{\mathbf R}_0[*,i]-\mathbf R_0[*,i]\|_2=\|\hat{\mathbf z}_s-\mathbf z_s\|_2\leq\tilde{\gamma}_d\|\mathbf A[*,i]\|_2.
\]
Since $\mathbf A=\mathbf Q_0[\mathbf R_0;\mathbf 0]$ with $\mathbf Q_0$ orthogonal,
\[
\|\mathbf A[*,i]\|_2=\|\mathbf R_0[*,i]\|_2.
\]
Therefore,
\[
\|\hat{\mathbf R}_0[*,i]-\mathbf R_0[*,i]\|_2\leq\tilde{\gamma}_d\|\mathbf R_0[*,i]\|_2,
\]
which proves the result.
\end{proof}

\QRprojectedstable*

\begin{proof}
Let $f(\mathbf{D})=\mathbf{A}$ be the join, and $g(\mathbf{A})=\mathbf{R}$ be the function mapping the join matrix to the upper-triangular $\mathbf{R}$, and we use $\hat{\mathbf{R}}$ to denote the computed result from \textsc{FiGaRo} followed by Givens Rotations. Notice that this computation computes $g\circ f$ without ever materializing the join matrix $\mathbf{A}$. Let $\mathbf R_0$ and $\hat{\mathbf R}_0$ denote the exact and computed outputs of $\textsc{FiGaRo}$ on $\mathbf D$, respectively. Whenever a FiGaRo output is multiplied by an $m\times m$ matrix, we implicitly pad it with zero rows to obtain an $m\times n$ matrix. By Lemma~\ref{lemma:figaro_rotations_data_independent}, there exists an orthogonal matrix $\mathbf Q_0$, determined only by the fixed key columns and the join tree $\tau$, such that
\[\mathbf A=\mathbf Q_0\begin{bmatrix}\mathbf R_0\\\mathbf 0\end{bmatrix}.\]

Define the \textsc{FiGaRo} forward error by $\mathbf E_F:=\hat{\mathbf R}_0-\mathbf R_0$. By Lemma~\ref{lemma:figaro_one_step_lemma}, for every column $i$, $\|\mathbf E_F[*,i]\|_2\leq\tilde{\gamma}_{d}\|\mathbf R_0[*,i]\|_2.$ The matrix $\hat{\mathbf R}_0$ has at most $d$ nonzero rows. Therefore, by Lemma~\ref{lemma:qr_backward_original}, there exist a perturbation $\mathbf E_Q$ and an orthogonal matrix $\mathbf Q'$ such that
$\hat{\mathbf R}_0+\mathbf E_Q=\mathbf Q'\hat{\mathbf R}$, with
$\|\mathbf E_Q[*,i]\|_2\leq\tilde{\gamma}_{d+n-2}\|\hat{\mathbf R}_0[*,i]\|_2$
for every column $i$. Since
\[\|\hat{\mathbf R}_0[*,i]\|_2\leq\|\mathbf R_0[*,i]\|_2+\|\mathbf E_F[*,i]\|_2\leq(1+\tilde{\gamma}_{d})\|\mathbf R_0[*,i]\|_2,\]
we obtain $\|\mathbf E_Q[*,i]\|_2\leq\tilde{\gamma}_{d+n-2}(1+\tilde{\gamma}_{d})\|\mathbf R_0[*,i]\|_2$.

Define the total perturbation in the FiGaRo-output space by $\mathbf E:=\mathbf E_F+\mathbf E_Q$.
Since $\hat{\mathbf R}_0=\mathbf R_0+\mathbf E_F$, we have
\[\mathbf R_0+\mathbf E=\hat{\mathbf R}_0+\mathbf E_Q=\mathbf Q'\hat{\mathbf R}.\]
Moreover, for every column $i$,
\[\|\mathbf E[*,i]\|_2\leq\|\mathbf E_F[*,i]\|_2+\|\mathbf E_Q[*,i]\|_2\leq\eta\|\mathbf R_0[*,i]\|_2,\]
where
$\eta=\tilde{\gamma}_{d}+\tilde{\gamma}_{d+n-2}(1+\tilde{\gamma}_{d})$.

Define \[\mathbf E_A:=\mathbf Q_0\begin{bmatrix}\mathbf E\\\mathbf 0\end{bmatrix}.\]
Because $\mathbf Q_0$ is orthogonal, for every column $i$,
\[\|\mathbf E_A[*,i]\|_2=\|\mathbf E[*,i]\|_2\leq\eta\|\mathbf R_0[*,i]\|_2=\eta\|\mathbf A[*,i]\|_2.\]
Furthermore,
\[\mathbf A+\mathbf E_A=\mathbf Q_0\begin{bmatrix}\mathbf R_0+\mathbf E\\\mathbf 0\end{bmatrix}=\mathbf Q_0\begin{bmatrix}\mathbf Q'\hat{\mathbf R}\\\mathbf 0\end{bmatrix}.\]
After padding $\mathbf Q'$ with an identity matrix on the omitted zero rows, the matrix multiplying $\hat{\mathbf R}$ on the right-hand side is orthogonal. Hence $\hat{\mathbf R}$ is the exact upper triangular QR factor of $\mathbf A+\mathbf E_A$, and therefore
$\hat{\mathbf R}=g(\mathbf A+\mathbf E_A)$.

Let $\mathbf P_{\mathcal Y_0}$ denote the orthogonal projector onto the join space $\mathcal Y_0$, and define
\[\tilde{\mathbf A}:=\mathbf P_{\mathcal Y_0}(\mathbf A+\mathbf E_A),\qquad \Delta\mathbf A^\perp:=(\mathbf{I}-\mathbf P_{\mathcal Y_0})(\mathbf A+\mathbf E_A).\]
Then $\tilde{\mathbf A}\in\mathcal Y_0$, $\Delta\mathbf A^\perp\in\mathcal Y_0^\perp$, and
$\tilde{\mathbf A}+\Delta\mathbf A^\perp=\mathbf A+\mathbf E_A$.
Consequently,
$\hat{\mathbf R}=g\bigl(\tilde{\mathbf A}+\Delta\mathbf A^\perp\bigr)$. Since $\mathbf A\in\mathcal Y_0$, we have
$\tilde{\mathbf A}-\mathbf A=\mathbf P_{\mathcal Y_0}\mathbf E_A$ and $\Delta\mathbf A^\perp=(\mathbf{I}-\mathbf P_{\mathcal Y_0})\mathbf E_A$.
Since the query has no self-joins, the columns of $\mathcal Y_0$ are parameterized independently and the projection acts independently on every column. Hence, by the Pythagorean theorem,
\[\|(\tilde{\mathbf A}-\mathbf A)[*,i]\|_2^2+\|\Delta\mathbf A^\perp[*,i]\|_2^2=\|\mathbf E_A[*,i]\|_2^2.\]
It follows that
\[\|(\tilde{\mathbf A}-\mathbf A)[*,i]\|_2\leq\|\mathbf E_A[*,i]\|_2\leq\eta\|\mathbf A[*,i]\|_2.\]
Since $\tilde{\mathbf A}\in\mathcal Y_0$, there exists a perturbation $\Delta\mathbf D=(\Delta\mathbf S_1,\ldots,\Delta\mathbf S_r)$ such that $f(\mathbf D+\Delta\mathbf D)=\tilde{\mathbf A}$.
Applying Proposition~\ref{proposition:condition_number_translation_with_dangling_tuples} to $\tilde{\mathbf A}-\mathbf A$, we obtain, for every relation $\mathbf S_j$ and every data column $k$ of $\mathbf S_j$, $\|\Delta\mathbf S_j[*,k]\|_2\leq\kappa_\mathrm{join}(\mathbf S_j)\eta\|\mathbf S_j[*,k]\|_2$.

It remains to bound the normal perturbation. Fix a column $i$ with $\mathbf A[*,i]\neq\mathbf 0$, and let
\[\lambda_i:=\frac{\|(\tilde{\mathbf A}-\mathbf A)[*,i]\|_2}{\|\mathbf A[*,i]\|_2}.\]
Then $0\leq\lambda_i\leq\eta<1$. By the triangle inequality,
\[\|\tilde{\mathbf A}[*,i]\|_2\geq\|\mathbf A[*,i]\|_2-\|(\tilde{\mathbf A}-\mathbf A)[*,i]\|_2=(1-\lambda_i)\|\mathbf A[*,i]\|_2.\]
The Pythagorean theorem and the bound on $\mathbf E_A$ give
\[\|\Delta\mathbf A^\perp[*,i]\|_2^2\leq(\eta^2-\lambda_i^2)\|\mathbf A[*,i]\|_2^2.\]
Notice that if $\|\tilde{\mathbf A}[*,i]\|_2=0$, the two inequalities above directly lead to $\|\mathbf A[*,i]\|_2=\|\Delta\mathbf A^\perp[*,i]\|_2=0$, and the bound on the normal perturbation trivially holds. Now if $\|\Delta\mathbf A^\perp[*,i]\|_2>0$, we have:
\[\frac{\|\Delta\mathbf A^\perp[*,i]\|_2}{\|\tilde{\mathbf A}[*,i]\|_2}\leq\frac{\sqrt{\eta^2-\lambda_i^2}}{1-\lambda_i}\leq\frac{\eta}{\sqrt{1-\eta^2}},\]
where the last inequality follows because the function $h(\lambda)=\frac{\sqrt{\eta^2-\lambda^2}}{1-\lambda}$ reaches its maximum over $0\leq\lambda\leq\eta$ at $\lambda=\eta^2$. 

Thus, $\hat{\mathbf R}=g(\tilde{\mathbf A}+\Delta\mathbf A^\perp)$, where $\tilde{\mathbf A}$ is the join matrix of a columnwise-small perturbation of the input database and $\Delta\mathbf A^\perp\in\mathcal Y_0^\perp$ satisfies the claimed normal-perturbation bound. Therefore, $\textsc{FiGaRo}$ followed by Givens rotations is column-wise $\Big(\kappa_\mathrm{join}(\mathbf{S}_j)\cdot\eta, \frac{\eta}{\sqrt{1-\eta^2}}\Big)$-projected backward stable for each input data column in $\vect S_j$ and $j\in[r]$. 
\end{proof}

\SVDprojectedstable*

\begin{proof}
Let $f(\mathbf{D})=\mathbf{A}$ be the join, and $g(\mathbf{A})=\mathbf{\Sigma}$ be the function mapping the join matrix to the diagonal matrix containing its singular values, and we use $\hat{\mathbf{\Sigma}}$ to denote the computed result from \textsc{FiGaRo} followed by \texttt{xGESVD}. Notice that this computation computes $g\circ f$ without ever materializing the join matrix $\mathbf{A}$. Let $\mathbf R_0$ and $\hat{\mathbf R}_0$ denote the exact and computed outputs of $\textsc{FiGaRo}$ on $\mathbf D$, respectively. Whenever a FiGaRo output is multiplied by an $m\times m$ matrix, we implicitly pad it with zero rows to obtain an $m\times n$ matrix. By Lemma~\ref{lemma:figaro_rotations_data_independent}, there exists an orthogonal matrix $\mathbf Q_0$, determined only by the fixed key columns and the join tree $\tau$, such that
\[\mathbf A=\mathbf Q_0\begin{bmatrix}\mathbf R_0\\\mathbf 0\end{bmatrix}.\]

Define the \textsc{FiGaRo} forward error by $\mathbf E_F:=\hat{\mathbf R}_0-\mathbf R_0$. Lemma~\ref{lemma:figaro_one_step_lemma} claims the forward error of \textsc{FiGaRo} in every column is upper-bounded by $\tilde{\gamma}_d$, which directly gives us a Frobenius-norm-wise forward error bound  $\tilde{\gamma}_d$ too. Namely, $\|\mathbf E_F\|_F\leq\tilde{\gamma}_{d}\|\mathbf R_0\|_F.$ The matrix $\hat{\mathbf R}_0$ has at most $d$ nonzero rows. Therefore, computing $\hat{\mathbf{\Sigma}}$ with \texttt{xGESVD} from $\hat{\mathbf R}_0$ means there exist a perturbation $\mathbf E_Q$ and orthogonal matrices $\mathbf{U}'$ and $\mathbf{V}'$ such that
$\hat{\mathbf R}_0+\mathbf E_Q=\mathbf {U'}\hat{\mathbf \Sigma}\mathbf{V'}^\top$, with
$\|\mathbf E_Q\|_F\leq\epsilon_\mathrm{svd}(d,n)\cdot\|\hat{\mathbf R}_0\|_F$. Since
\[\|\hat{\mathbf R}_0\|_F\leq\|\mathbf R_0\|_F+\|\mathbf E_F\|_F\leq(1+\tilde{\gamma}_{d})\|\mathbf R_0\|_F,\]
we obtain $\|\mathbf E_Q\|_F\leq\epsilon_\mathrm{svd}(d,n)\cdot(1+\tilde{\gamma}_{d})\|\mathbf R_0\|_F$.

Define the total perturbation in the FiGaRo-output space by $\mathbf E:=\mathbf E_F+\mathbf E_Q$.
Since $\hat{\mathbf R}_0=\mathbf R_0+\mathbf E_F$, we have
\[\mathbf R_0+\mathbf E=\hat{\mathbf R}_0+\mathbf E_Q=\mathbf U'\hat{\mathbf \Sigma}\mathbf{V}'^\top.\]
Moreover, $\|\mathbf E\|_F\leq\|\mathbf E_F\|_F+\|\mathbf E_Q\|_F\leq\eta\|\mathbf R_0\|_F$,
where
$\eta=\tilde{\gamma}_{d}+\epsilon_\mathrm{svd}(d,n)\cdot(1+\tilde{\gamma}_{d})$.

Define \[\mathbf E_A:=\mathbf Q_0\begin{bmatrix}\mathbf E\\\mathbf 0\end{bmatrix}.\]
Because $\mathbf Q_0$ is orthogonal,
\[\|\mathbf E_A\|_F=\|\mathbf E\|_F\leq\eta\|\mathbf R_0\|_F=\eta\|\mathbf A\|_F.\]
Furthermore,
\[\mathbf A+\mathbf E_A=\mathbf Q_0\begin{bmatrix}\mathbf R_0+\mathbf E\\\mathbf 0\end{bmatrix}=\mathbf Q_0\begin{bmatrix}\mathbf U'\hat{\mathbf \Sigma}\mathbf{V'}^\top\\\mathbf 0\end{bmatrix}.\]
After padding $\mathbf U'$ with an identity matrix on the omitted zero rows, the matrix multiplying $\hat{\mathbf \Sigma}\mathbf{V}'^\top$ on the right-hand side is orthogonal. Hence $\hat{\mathbf \Sigma}$ is the exact diagonal matrix containing the singular values of $\mathbf A+\mathbf E_A$, and therefore
$\hat{\mathbf \Sigma}=g(\mathbf A+\mathbf E_A)$.

Let $\mathbf P_{\mathcal Y_0}$ denote the orthogonal projector onto the join space $\mathcal Y_0$, and define
\[\tilde{\mathbf A}:=\mathbf P_{\mathcal Y_0}(\mathbf A+\mathbf E_A),\qquad \Delta\mathbf A^\perp:=(\mathbf{I}-\mathbf P_{\mathcal Y_0})(\mathbf A+\mathbf E_A).\]
Then $\tilde{\mathbf A}\in\mathcal Y_0$, $\Delta\mathbf A^\perp\in\mathcal Y_0^\perp$, and
$\tilde{\mathbf A}+\Delta\mathbf A^\perp=\mathbf A+\mathbf E_A$.
Consequently,
$\hat{\mathbf \Sigma}=g\bigl(\tilde{\mathbf A}+\Delta\mathbf A^\perp\bigr)$. Since $\mathbf A\in\mathcal Y_0$, we have
$\tilde{\mathbf A}-\mathbf A=\mathbf P_{\mathcal Y_0}\mathbf E_A$ and $\Delta\mathbf A^\perp=(\mathbf{I}-\mathbf P_{\mathcal Y_0})\mathbf E_A$.
By the Pythagorean theorem,
\[\|(\tilde{\mathbf A}-\mathbf A)\|_F^2+\|\Delta\mathbf A^\perp\|_F^2=\|\mathbf E_A\|_F^2.\]
It follows that
\[\|(\tilde{\mathbf A}-\mathbf A)\|_F\leq\|\mathbf E_A\|_F\leq\eta\|\mathbf A\|_F.\]
Since $\tilde{\mathbf A}\in\mathcal Y_0$, there exists a perturbation $\Delta\mathbf D=(\Delta\mathbf S_1,\ldots,\Delta\mathbf S_r)$ such that $f(\mathbf D+\Delta\mathbf D)=\tilde{\mathbf A}$.
Applying Proposition~\ref{proposition:condition_number_translation_with_dangling_tuples} to $\tilde{\mathbf A}-\mathbf A$, we obtain $\|\Delta\mathbf D\|_F\leq\kappa_\mathrm{join}(\mathbf D)\eta\|\mathbf D\|_F$.

It remains to bound the normal perturbation. Notice that if the join matrix is a zero matrix, then the computed singular values will be all zero and we trivially recover the input database with all zero entries with no perturbation. Now suppose $\mathbf A\neq\mathbf 0$, and let
\[\lambda:=\frac{\|(\tilde{\mathbf A}-\mathbf A)\|_F}{\|\mathbf A\|_F}.\]
Then $0\leq\lambda\leq\eta<1$. By the triangle inequality,
\[\|\tilde{\mathbf A}\|_F\geq\|\mathbf A\|_F-\|(\tilde{\mathbf A}-\mathbf A)\|_F=(1-\lambda)\|\mathbf A\|_F.\]
The Pythagorean theorem and the bound on $\mathbf E_A$ give
\[\|\Delta\mathbf A^\perp\|_F^2\leq(\eta^2-\lambda^2)\|\mathbf A\|_F^2.\]
Notice that if $\|\tilde{\mathbf A}\|_F=0$, the two inequalities above directly lead to $\|\mathbf A\|_F=\|\Delta\mathbf A^\perp\|_F=0$, and the bound on the normal perturbation trivially holds. Now if $\|\Delta\mathbf A^\perp\|_F>0$, we have:
\[\frac{\|\Delta\mathbf A^\perp\|_F}{\|\tilde{\mathbf A}\|_F}\leq\frac{\sqrt{\eta^2-\lambda^2}}{1-\lambda}\leq\frac{\eta}{\sqrt{1-\eta^2}},\]
where the last inequality follows because the function $h(\lambda)=\frac{\sqrt{\eta^2-\lambda^2}}{1-\lambda}$ reaches its maximum over $0\leq\lambda\leq\eta$ at $\lambda=\eta^2$. 

Thus, $\hat{\mathbf \Sigma}=g(\tilde{\mathbf A}+\Delta\mathbf A^\perp)$, where $\tilde{\mathbf A}$ is the join matrix of a Frobenius-norm-wise-small perturbation of the input database and $\Delta\mathbf A^\perp\in\mathcal Y_0^\perp$ satisfies the claimed normal-perturbation bound. Therefore, $\textsc{FiGaRo}$ followed by \texttt{xGESVD} is Frobenius-norm-wise \\$\Big(\kappa_\mathrm{join}(\mathbf{D})\cdot\eta, \frac{\eta}{\sqrt{1-\eta^2}}\Big)$-projected backward stable, where $\eta:=\tilde{\gamma}_{d}+\epsilon_\mathrm{svd}(d,n)\cdot\bigl(1+\tilde{\gamma}_{d}\bigr)$ and when $\eta<1$. 
\end{proof}

\end{document}